\documentclass[11pt, a4 paper]{article}
\usepackage{fancyhdr}                   
\usepackage[Sonny]{fncychap}            
\usepackage[ruled,vlined,linesnumbered]{algorithm2e}
\usepackage{graphicx}
\usepackage{amsmath}
\usepackage{amssymb}
\usepackage{latexsym}
\usepackage{eqlist}                     
\usepackage{bookmark}
\usepackage{authblk}
\usepackage{geometry}
\usepackage{apacite}
\usepackage{mathtools}
\usepackage{listings}
\usepackage{graphics}
\usepackage{lscape}
\usepackage{longtable}
\usepackage[table,xcdraw]{xcolor} 
\usepackage{caption}
\usepackage{setspace}
\usepackage{booktabs}
\usepackage[T1]{fontenc}
\usepackage[utf8]{inputenc} 
\usepackage{lmodern}   
\usepackage{hyperref}
\usepackage{amsthm}
\usepackage{enumitem}
\usepackage{xr}
\usepackage{titlesec}
\usepackage[nameinlink,capitalize]{cleveref}
\newtheorem{theorem}{Theorem}
\newtheorem{lemma}{Lemma}

\title{Testing for Smooth Structural Change in Cointegrated Systems}
\author[1]{Haofeng Liao}
\author[1]{Xing Wang\thanks{Corresponding author. Email: \href{mailto:xing.wang@durham.ac.uk}{xing.wang@durham.ac.uk}}}
\date{August 2026}

\affil[1]{\textit{Department of Economics, Durham University}}
 
\begin{document}
\begin{spacing}{1.5}

\maketitle

\begin{abstract}

This paper develops an econometric framework for analysing smooth structural change in cointegrated systems following a known intervention time. We consider a vector error-correction model in which the cointegration rank and the pre-intervention cointegrating structure are identified from a stable pre-intervention subsample. After the intervention, both the adjustment coefficients and the cointegrating vectors are allowed to evolve smoothly as functions of rescaled time, which are estimated using kernel-weighted local reduced-rank methods. The analysis is formulated directly in a cointegrated VAR/VECM system, which preserves the treatment of long-run relations and short-run error-correction dynamics. By working with the decomposition $\Pi(\delta)=\alpha(\delta)\beta(\delta)'$, the method separates changes in the equilibrium relation from those in the speed of adjustment. We also provide two tests for the parameter consistency and the post-intervention parameter smoothness respectively. An empirical application to energy market, foreign-exchange, and gold-market index around the 24 February 2022 Russia's invasion of Ukraine illustrates how the proposed approach distinguishes between a discrete regime shift and smooth post-intervention evolution. The results suggest that cointegrating relation among the price of Brent crude oil, the spot exchange rate (USD/EUR), and the Credit Suisse NASDAQ Gold Price Index has smoothly changed after the outbreak of war, instead of a constant long-run conintegration system in the pre-intervention period.\\

\noindent\textbf{Keywords:} Cointegration, structural change, functional parameters.\\

\noindent\textbf{JEL Classification:} C32, C14, C12, G15
\end{abstract}

\newpage

\section{Introduction}

Cointegration analysis is a standard tool for modelling long-run relations among nonstationary variables. The basic idea, due to Granger, is that some linear combinations of integrated $I(1)$ series may be stationary. The work of \shortciteA{engle1987co} and the likelihood-based approach of
\shortciteA{johansen1991estimation} provide the basis for much of the empirical work on cointegrated VARs and vector error-correction models. In these classical models, the cointegrating relation is usually taken to be constant over the sample. This assumption is convenient, but it can be restrictive in applications where the economic environment changes. Monetary-policy shifts, institutional reforms, financial crises, or geopolitical shocks may alter either the long-run relation itself or the speed at which variables adjust to it. Ignoring such changes may lead to a misspecified long-run structure and unreliable inference.\\

A large literature has therefore considered structural change in cointegrated systems. \shortciteA{gregory1996residual} develop residual-based tests that allow for a one-time shift in the intercept and/or slope of the cointegrating relation. Their results show that cointegration may be difficult to detect when a break in the long-run relation is omitted. In a VECM setting, \shortciteA{seo1998tests} derive Lagrange multiplier tests for structural change in both the cointegrating vector and the adjustment coefficient, with the break date treated as unknown. \shortciteA{kejriwal2010testing} extend this line of work by proposing a sup-Wald procedure for an unknown number of breaks in a cointegrating system. These approaches are useful when structural change is described by a finite number of discrete shifts. They are less satisfactory when the economic transition is gradual. In that case, a multiple-break specification may approximate smooth adjustment by a sequence of small jumps, which can be both statistically inefficient and difficult to interpret economically.\\

An alternative approach is to let the cointegrating parameters vary smoothly over time. \shortciteA{park1999cointegrating} study cointegrating regressions with time-varying coefficients using a sieve approximation and canonical cointegrating regressions. \shortciteA{juhl2005functional} consider cointegrating coefficients that follow a specified functional form. These papers show that gradual parameter change can be incorporated into cointegration models without imposing a small number of break dates. A limitation of such early approaches is that the form of time variation is often specified in advance. If the true path differs from the assumed trend or polynomial form, the resulting model may still be misspecified. \shortciteA{bierens2010time} propose a time-varying VECM in which the cointegrating vectors are smooth functions of time, approximated by Chebyshev time polynomials. Their model contains the standard Johansen specification as a special case and yields a likelihood-ratio test of time-invariant cointegration against a smooth time-varying alternative. This provides an important system-based treatment of time-varying cointegration.\\

Kernel-based methods offer a more local way to estimate smoothly changing cointegrating relations. \shortciteA{phillips2017estimating} develop kernel-based inference for cointegration models with time-varying structural
coefficients. This type of approach is attractive because it avoids imposing a fixed break structure or a low-dimensional global time trend. However, in many empirical settings the timing of a major event is known in advance. The research question is whether the long-run structure after the event departs from its pre-event benchmark, and whether this departure is gradual or abrupt.\\

This paper develops an event-anchored approach to time-varying cointegration. The model is designed for settings in which a major external intervention occurs at a known date $t^\ast$. Before $t^\ast$, the system is assumed to be a stable cointegrating system. The pre-intervention sample is therefore used to estimate the baseline rank, cointegrating vector, and
adjustment coefficient. After $t^\ast$, the long-run matrix is allowed to change with rescaled time $\delta=t/T$. In particular, both the cointegrating vector $\beta(\delta)$ and the adjustment coefficient $\alpha(\delta)$ may evolve over the post intervention period. This specification is intended to capture gradual structural adjustment after an external shock, for example when market participants learn about and adapt to a new policy environment or a new geopolitical regime.\\

The interpretation of the post-intervention relation is different from that in a time-invariant cointegration model. In the standard VECM, the matrix $\Pi=\alpha\beta'$ represents the long-run equilibrium relation and the adjustment mechanism. After a major intervention, the same variables may still share common stochastic trends, but the equilibrium relation or the adjustment
speed may no longer be the same as before. The pre-intervention estimate should therefore be viewed as a benchmark, rather than as a fixed relation imposed on the whole sample. A post intervention change in $\Pi(\delta)$ is interpreted
as evidence that the pre-intervention long-run structure has become unstable. This interpretation is consistent with tests for parameter instability in cointegrating regressions \shortcite{hansen1992parameter} and with cointegration tests allowing for regime shifts in the intercept or slope of the
cointegrating relation \shortcite{gregory1996residual}.\\

Estimation is carried out by local Johansen-type procedures on the post-intervention sample. At each grid point, observations close to that point receive larger kernel weights, producing local estimates of the long run matrix and its decomposition into $\alpha(\delta)$ and $\beta(\delta)$. The kernel weights define a local estimating window for the time-varying VECM parameters. The approach therefore relies on the usual local-smoothing conditions: the bandwidth shrinks on the rescaled time domain, the effective local sample size increases, and the local cointegration rank remains well defined.\\

The contribution of the paper is as follows. First, compared with discrete-break methods such as \shortciteA{seo1998tests}, the proposed model does not require all post-intervention changes to be captured by a single break. Such break-based methods are useful when the change is sudden and permanent, but they may be less suitable when the effect of an intervention develops gradually over time. In the present framework, the intervention date $t^*$ is treated as known, and the focus is on how the long-run structure evolves after that date. The model allows the post-intervention parameters to change smoothly, while the second test also checks whether the smooth path is interrupted by a jump or kink. Second, compared with single-equation time-varying cointegrating regressions such as \shortciteA{park1999cointegrating}, the analysis is carried out in a cointegrated VAR/VECM system. Instead, the model keeps the joint structure of the system, including both the long-run relation and the short-run error-correction dynamics. An intervention may change the long-run equilibrium relation, but it may also change how quickly each variable responds to deviations from that relation. The VECM setting therefore provides a more complete way to study post-intervention changes in a multivariate system. Third, the paper uses the decomposition
$\Pi(\delta)=\alpha(\delta)\beta(\delta)'$
to interpret the source of structural change. Changes in $\beta(\delta)$ indicate changes in the long-run equilibrium relation, while changes in $\alpha(\delta)$ indicate changes in the adjustment mechanism. This decomposition makes it possible to examine whether the intervention mainly changes the equilibrium relation itself, the speed of adjustment toward that relation, or both. In this way, the method provides a more informative interpretation than tests that only report whether a structural break is present.\\

The remainder of the paper is organised as follows. Section \ref{Smooth changing model} introduces the model, assumptions, and estimation procedure. Section \ref{LST3} presents the asymptotic theory of the estimators. Section \ref{Testing algorithm3} develops the block-bootstrap-based testing procedure. Section \ref{sec:simulation3} reports the simulation results.
Section \ref{Empirical3} presents the empirical application. Section \ref{Conclusion3} concludes.

\section{Smooth changing cointegrating model} \label{Smooth changing model}

Classical cointegration models initially assume constant long-run relationships, but later work raised concerns about structural stability. \shortciteA{hansen1992testing} develops a Lagrange multiplier (LM) test based on fully-modified or dynamic OLS estimators to detect a one-time parameter shift in the cointegrating regression. This single-equation approach focuses on changes in the cointegrating vector while treating the residual process nonparametrically. Hansen’s test examines changes in the cointegrating vector, not in the adjustment dynamics in the economic system. Therefore, the instability in adjustment coefficient (error-correction speeds) would be ignored, and the test provides no information on the timing or nature of the change beyond rejecting stability. \\

\shortciteA{gregory1996residual} extend residual-based cointegration tests to allow for one-time changes at unknown time points in the cointegration system. They propose modified ADF and Phillips–Perron-type tests that consider the possibility of a single break in either the intercept and/or slope of the cointegrated system. Gregory and Hansen’s framework allows three types of structural change: a level shift in the cointegrating intercept, a level shift with trend, or a full regime shift where both intercept and slope vector can change once. This approach can detect cointegration that standard tests might miss when a regime shift occurs – for example, an economic policy change causing a discrete change in the long-run relation. Although the break date is treated as unknown and searched over a trimmed set of candidate dates, the primary purpose of the test is to detect cointegration allowing for a regime shift, rather than to provide a detailed model of the post-break evolution. This method assumes any change is immediate and one-time, which is a simplification of reality. Only a single change is permitted, so multiple structural changes or gradual parameter drifts cannot be captured.\\

\shortciteA{seo1998tests} develop a system-wide test within the vector error-correction model (VECM) that could detect a one-time break in the cointegrating vector $\beta$ or in the adjustment coefficient $\alpha$. In \shortciteA{seo1998tests}'s setup, the one time structural change is defined in $\beta$ and $\alpha$ respectively:
\begin{align*}
&    \Delta X_t  = \alpha \begin{pmatrix} I_r \\ \beta+\phi \{t\geq[n\tau ]+1\} \end{pmatrix} X_{t-k} + \sum^{k-1}_{i=1} \Gamma_i \Delta X_{t-i} +\mu + \varepsilon_t, \\&
\Delta X_t  = ( \alpha+ \psi \{t\geq[n\tau ]+1\} ) \begin{pmatrix} I_r \\ \beta \end{pmatrix} X_{t-k} + \sum^{k-1}_{i=1} \Gamma_i \Delta X_{t-i} +\mu + \varepsilon_t,
\end{align*}
where  $X_t$ is a $p$-dimensional $I(1)$ process, $\phi$ and $\psi$ are coefficient matrix and $\{\cdot \}$ is the indicator function, in which the break point $\tau$ is fixed until the optimal tests for unknown $\tau$ is defined. However, \shortciteA{seo1998tests} only consider the one-time change in cointegrating vector and adjustment coefficient after unknown time point, which is not able to capture a continuous structural change. Seo’s test in the Johansen VECM framework uses a Lagrange multiplier statistic and is implemented as a supremum (sup-LM) test over all possible unknown break point. Seo derives asymptotic distributions for sup-LM, mean-LM, and exp-LM test statistics in the cointegration setting. This approach allows the break point to be unknown and jointly tests the stability of the cointegrating vector $\beta$ and adjustment coefficient $\alpha$, thus providing a more comprehensive check for structural change in a VAR system. Therefore, the test can distinguish whether a structural change affects the long-run equilibrium relation, the short-run dynamics, or both. However, the asymptotic critical values are non-standard and must be simulated. Moreover, like Gregory–Hansen, Seo’s original formulation assumes at most one structural change. This cannot deal with more complex evolution or multiple regime changes, and the test’s power may degrade if change is gradual rather than immediate. \\

The most recent and comprehensive approach to smooth structural change in cointegration is the nonparametric kernel-based framework developed by \shortciteA{phillips2017estimating}. They consider a cointegration model in which the cointegrating matrix $\Pi(\cdot)$ is a function of time. They consider a cointegration model with time-varying coefficient functions:
\[
y_t=x'_t f(t/T)+u_t = x'_t f_t+u_t , \ t=
1,\cdots,n,
\]
where $f(\cdot)$ is a $d$-dimensional function of time,  $x_t$ is an $I(1)$ vector, and $u_t$ is a scalar process. The function $f(t/T)$ can be regarded as a weak trend function so that the model captures potential drifts in the cointegrating relationship between $y_t$ and $x_t$ over time. Instead of assuming a particular parametric form, they employ kernel Nadaraya–Watson local regression to estimate the coefficient functions directly from the data. This kernel-based method have good flexibility and generality: it can capture arbitrary smooth changes in the cointegrating relationship, and it accommodates endogenous regressors by extending the fully modified regression to a time-varying context. Their kernel estimator is bias-corrected (using a local linear adjustment) and achieves consistency even when regressors are $I(1)$ and endogenous. They also provide an asymptotic theory showing that the estimates of the time-varying coefficient converge at different rates in different directions of the parameter space. However, their framework is formulated as a time-varying cointegrating regression rather than as a VECM in which the long-run impact matrix is decomposed into an adjustment matrix and cointegrating vectors. Therefore, it does not directly distinguish changes in the equilibrium relation from changes in the speed of error correction. Furthermore, fully smooth time variation can be weakly linked to empirical events. In practice, people know the date of a major intervention (e.g.\ a policy regime change or a geopolitical shock) and would like to exploit this information explicitly. Existing smooth time-varying models typically treat parameter change as an inherent process operating throughout the full sample, without a well-defined pre- and post-intervention structure. This gap motivates an event-anchored framework that combines the interpretability of discrete-break models with the flexibility of functional-coefficient approaches. \\

In this paper, we propose a smooth structural-change framework for cointegrated systems with a known intervention time point. The cointegration rank is identified from a stable pre-intervention subsample, while both the cointegrating vectors and the adjustment coefficients in a VECM are allowed to evolve smoothly over the post-intervention period. Estimation is conducted via kernel-WLS and Johansen estimation. We develop two complementary post-intervention tests: a baseline test that evaluates whether post-intervention parameters equal the pre-intervention benchmark, and a smoothness test that compares a pooled local-linear fit with split left- and right-sided local-linear fits, thereby detecting possible jumps or kinks in the estimated post-intervention long-run matrix.

\subsection{Model setup and identification}

Let $\{X_t\}_{t=1}^T$ be a $p$-dimensional $I(1)$ process, and let $t^*$ denote a known intervention date with corresponding sample fraction $\delta^{\ast}=t^*/T$. Consider a post-intervention vector error-correction representation of the form
\begin{equation}
\Delta X_t
=
\Pi(\delta)X_{t-k}
+
\sum_{i=1}^{k-1}\Gamma_i \Delta X_{t-i}
+
\mu
+
\varepsilon_t,
\qquad
t=t^*+1,\ldots,T,
\qquad
\delta=t/T,
\label{eq:post_vecm}
\end{equation}
where $\Gamma_1,\ldots,\Gamma_{k-1}$ and $\mu$ are assumed constant over the post-intervention period, while the long-run impact matrix $\Pi(\delta)$ is allowed to evolve smoothly with time. The innovation vector $\varepsilon_t$ admits a linear process representation
\begin{equation}
\varepsilon_t = \Phi(L)e_t = \sum_{j=0}^{\infty}\Phi_j e_{t-j},
\end{equation}
with $\{e_t\}$ i.i.d., $\mathbb{E}(e_t)=0$, $\mathbb{E}(e_te_t')=V_e>0$, $\mathbb{E}\|e_t\|^{4+\eta}<\infty$ for some $\eta>0$, and
\begin{equation}
\sum_{j=0}^{\infty} j\|\Phi_j\| < \infty.
\end{equation}

The cointegration rank is assumed to be fixed and equal to $r$ on $[\delta^{\ast},1]$, which is identified in the pre-intervention period, so that for each $\delta\in[\delta^{\ast},1]$,
\begin{equation}
\Pi(\delta)=\alpha(\delta)\beta(\delta)',
\qquad
\alpha(\delta)\in\mathbb{R}^{p\times r},
\qquad
\beta(\delta)\in\mathbb{R}^{p\times r},
\qquad
\text{rank}\,\Pi(\delta)=r.
\label{eq:Pi_factorisation}
\end{equation}

In the pre-intervention period, the system is assumed to be stable with constant parameters:
\begin{equation}
\Delta X_t
=
\alpha_0\beta_0'X_{t-k}
+
\sum_{i=1}^{k-1}\Gamma_i \Delta X_{t-i}
+
\mu
+
\varepsilon_t,
\qquad
t=1,\ldots,t^*.
\label{eq:pre_vecm}
\end{equation}
Hence, $(\alpha_0,\beta_0)$ and the rank $r$ can be estimated from the pre-intervention subsample by Johansen's method \shortcite{johansen1988statistical}.\\

To identify the cointegrating vectors, we impose the standard Johansen normalisation
\begin{equation}
\beta(\delta)
=
\begin{pmatrix}
I_r\\
C(\delta)
\end{pmatrix},
\qquad
C(\delta)\in\mathbb{R}^{(p-r)\times r},
\label{eq:beta_normalisation}
\end{equation}
where $I_r$ is the $r\times r$ identity matrix. Under \eqref{eq:beta_normalisation}, smooth post-intervention structural change is fully characterised by the functional parameters $\alpha(\delta)$ and $C(\delta)$.\\

For the smooth-transition benchmark, we impose continuity at the intervention boundary:
\begin{equation}
\alpha(\delta^{\ast})=\alpha_0,
\qquad
\beta(\delta^{\ast})=\beta_0,
\qquad
C(\delta^{\ast})=C_0.
\label{eq:boundary_condition}
\end{equation}
This condition is used to formalise gradual adjustment after the event. It can be relaxed when the empirical question allows an immediate level shift at $t^*$, in which case the post-intervention functions are interpreted as right-continuous limits after the intervention.

\subsection{Kernel estimation of the functional cointegrating structural coefficients}

Fix an interior post-intervention point $\delta_0\in[\delta^{\ast},1]$ and define the kernel weights
\begin{equation}
K_{t,h}(\delta_0)
=K\!\left(\frac{t/T-\delta_0}{h}\right)\mathbf{1}\{t> t^*\}, 
\label{eq:kernel_weights}
\end{equation}
where $K(\cdot)$ is some kernel function and $h$ is a bandwidth. For sufficiently small $h$, the coefficient functions are approximately constant in a neighbourhood of $\delta_0$, so that \eqref{eq:post_vecm} can be locally approximated by
\begin{equation}
\Delta X_t
\approx
\Pi(\delta_0)X_{t-k}
+
\sum_{i=1}^{k-1}\Gamma_i \Delta X_{t-i}
+
\mu
+
\varepsilon_t,
\label{eq:local_vecm}
\end{equation}
where $\widehat{\Pi}$ is estimated by the Johansen-based Kernel-WLS estimation as follows.\\

To estimate $\widehat{\alpha}(\delta)$ and $\widehat{\beta}(\delta)$, we follow the Johansen's estimation \shortcite{johansen1988statistical}. Specifically, let
\begin{equation}
Z_t
=
\bigl(\Delta X_{t-1}',\ldots,\Delta X_{t-k+1}',1\bigr)'
\in\mathbb{R}^{(k-1)p+1}
\label{eq:short_run_regressors}
\end{equation}
collect the short-run regressors and deterministic term. To isolate the local long-run relation, partial out $Z_t$ from both $\Delta X_t$ and $X_{t-k}$ using the same kernel weights as those employed in the local reduced-rank problem. Define
\begin{align}
\widehat P_{ZZ}(\delta_0)
&=\frac{1}{\sum_{t=t^*+1}^T K_{t,h}(\delta_0)}
\sum_{t=t^*+1}^T
K_{t,h}(\delta_0)\,Z_tZ_t',
\label{eq:SZZ}
\\
\widehat P_{Z0}(\delta_0)
&=\frac{1}{\sum_{t=t^*+1}^T K_{t,h}(\delta_0)}
\sum_{t=t^*+1}^T
K_{t,h}(\delta_0)\,Z_t\Delta X_t',
\label{eq:SZ0}
\\
\widehat P_{Z1}(\delta_0)
&=\frac{1}{\sum_{t=t^*+1}^T K_{t,h}(\delta_0)}
\sum_{t=t^*+1}^T
K_{t,h}(\delta_0)\,Z_tX_{t-k}'.
\label{eq:SZ1}
\end{align}
The local weighted least-squares regression
coefficients are then
\begin{equation}
\widehat B_0(\delta_0)
=
\widehat P_{ZZ}(\delta_0)^{-1}\widehat P_{Z0}(\delta_0),
\qquad
\widehat B_1(\delta_0)
=
\widehat P_{ZZ}(\delta_0)^{-1}\widehat P_{Z1}(\delta_0),
\label{eq:local_regression_coefficients}
\end{equation}
with corresponding local residuals
\begin{equation}
R_{0t}(\delta_0)
=
\Delta X_t-\widehat B_0(\delta_0)'Z_t,
\qquad
R_{1t}(\delta_0)
=
X_{t-k}-\widehat B_1(\delta_0)'Z_t.
\label{eq:local_residuals}
\end{equation}

Using these locally residualised variables, define the local covariance matrices
\begin{equation}
\widehat S_{ij}(\delta_0)
=
\frac{1}{\sum_{t=t^*+1}^T
K_{t,h}(\delta_0)}
\sum_{t=t^*+1}^T
K_{t,h}(\delta_0)\,
R_{it}(\delta_0)R_{jt}(\delta_0)',
\qquad
i,j\in\{0,1\}.
\label{eq:local_covariances}
\end{equation}

For each fixed $\delta$, we estimate $\alpha(\delta)$ and $\beta(\delta)$ (write $\alpha, \beta$ for convenient representation) by minimising the reduced-rank WLS criterion
\begin{equation}
\mathcal L_{\delta}(\alpha,\beta)
=
\text{tr}
\Big[
\widehat S_{00}(\delta_0)
-
\alpha\beta'\widehat S_{10}(\delta_0)
-
\widehat S_{01}(\delta_0)\beta\alpha'
+
\alpha\beta'\widehat S_{11}(\delta_0)\beta\alpha'
\Big].
\label{eq:criterion}
\end{equation}
For fixed $\beta$, the minimiser in $\alpha$ is
\begin{equation}
\widehat\alpha(\delta_0;\beta)
=
\widehat S_{01}(\delta_0)\beta
\bigl[\beta'\widehat S_{11}(\delta_0)\beta\bigr]^{-1}.
\label{eq:alpha_given_beta}
\end{equation}
Substituting \eqref{eq:alpha_given_beta} into \eqref{eq:criterion} yields the profile criterion
\begin{equation}
\mathcal L_{\delta_0}(\beta)
=
\text{tr}\!\left(
\widehat S_{00}(\delta_0)
-
\widehat S_{01}(\delta_0)\beta
\bigl[\beta'\widehat S_{11}(\delta_0)\beta\bigr]^{-1}
\beta'\widehat S_{10}(\delta_0)
\right).
\label{eq:profile_criterion}
\end{equation}
Then $\widehat\beta(\delta_0)$ is obtained from the generalised eigenvalue problem (according to \shortciteA{johansen1988statistical})
\begin{equation}
\det\!\Big(
\lambda\,\widehat S_{11}(\delta_0)
-
\widehat S_{10}(\delta_0)\widehat S_{00}(\delta_0)^{-1}\widehat S_{01}(\delta_0)
\Big)=0.
\label{eq:gevp}
\end{equation}
Let
\begin{equation}
\widehat\lambda_1(\delta_0)\geq\cdots\geq\widehat\lambda_p(\delta_0)\geq 0
\end{equation}
denote the ordered generalised eigenvalues, and let
\begin{equation}
\widehat \xi_1(\delta_0),\ldots,\widehat \xi_r(\delta_0)
\end{equation}
be the corresponding eigenvectors associated with the $r$ largest eigenvalues. Then the local estimator of the cointegrating space is
\begin{equation}
\widetilde \beta(\delta_0)
=
\bigl(
\widehat v_1(\delta_0),\ldots,\widehat v_r(\delta_0)
\bigr),
\label{eq:beta_tilde}
\end{equation}
after normalisation, 
\begin{equation}
\widetilde \beta(\delta_0) \to \widehat \beta(\delta_0)= 
\begin{pmatrix}
I_r\\
\widehat C(\delta_0)
\end{pmatrix},
\label{eq:beta_hat}
\end{equation}
Then the local estimator of the adjustment matrix is
\begin{equation}
\widehat\alpha(\delta_0)
=
\widehat S_{01}(\delta_0)\widehat\beta(\delta_0)
\bigl[
\widehat\beta(\delta_0)'\widehat S_{11}(\delta_0)\widehat\beta(\delta_0)
\bigr]^{-1}.
\label{eq:alpha_hat}
\end{equation}

In implementation, the local eigenvectors are only defined up to sign and, more generally, up to orthogonal rotation within the estimated $r$-dimensional cointegrating space. To obtain a smooth estimated path $\delta_0 \mapsto\widehat\beta(\delta_0)$, the local estimates are aligned sequentially across neighbouring evaluation points using the normalisation \eqref{eq:beta_normalisation} together with a sign/rotation-matching rule. This alignment is identificational and does not alter the estimated local cointegrating space itself.

\subsection {Remark on kernel estimation}

The local estimation step is based on the assumption that the long-run coefficient matrix evolves gradually over rescaled time. The kernel weights define a local estimating window around each grid point, so that observations closer to that point receive larger weights in the estimation of the local VECM. This is consistent with time-varying cointegration models in which the cointegrating relation changes smoothly over time, with the standard Johansen model obtained as the constant coefficient special case \shortcite{bierens2010time}. It is also similar to  cointegrating regressions with smoothly time-varying coefficients \shortcite{park1999cointegrating} and with kernel-based inference for time-varying coefficient cointegrating regressions involving multiple non-stationary regressors \shortcite{li2020kernel}. The validity of the local approach therefore relies on the usual joint conditions for non-stationary local estimation: the coefficient path must be sufficiently smooth within each
local window, the bandwidth must shrink on the rescaled time domain while the effective local sample size diverges, and the local cointegration rank must be well defined.\\

The use of kernel weights in the construction of
$\widehat P_{ZZ}(\delta_0)$, $\widehat P_{Z0}(\delta_0)$, $\widehat P_{Z1}(\delta_0)$, and hence in the local covariance matrices $\widehat S_{ij}(\delta_0)$, because the objects of interest are smooth time-indexed functions:
\[
\alpha(\delta_0),\qquad \beta(\delta_0),\qquad \Pi(\delta_0)=\alpha(\delta_0)\beta(\delta_0)'.
\]
The kernel $K_{t,h}(\delta_0)$ provides a smooth weighting tool over time. Observations with $t/T$ close to the evaluation point $\delta_0$ receive relatively large weights, while observations farther away receive smaller weights. Therefore, the estimator at $\delta_0$ is a local average of information around $\delta_0$, rather than an estimate based on a single observation or on the whole post-intervention sample.\\

As the evaluation point $\delta_0$ moves gradually over the post-intervention interval, the kernel window also moves gradually, and the estimators are computed from overlapping local samples. Hence the kernel imposes a smooth-in-time estimation structure. \\

This is consistent with the modelling assumption that the post-intervention structural change is gradual rather than discontinuous. Under the smoothness condition on $\alpha(\cdot)$ and $C(\cdot)$, for $t/T$ sufficiently close to $\delta$,
\[
\alpha(t/T)\approx \alpha(\delta),
\qquad
\beta(t/T)\approx \beta(\delta),
\qquad
\Pi(t/T)\approx \Pi(\delta).
\]
The kernel-weighted criterion therefore estimates the locally constant approximation to the smooth functional parameter in a VECM at the point $\delta$.\\

The same kernel weights are also used when partial out the short-run regressors $Z_t$. The matrices
\begin{align*}
\widehat P_{ZZ}(\delta_0)
&=\frac{1}{\sum_{t=t^*+1}^T K_{t,h}(\delta_0)}
\sum_{t=t^*+1}^T
K_{t,h}(\delta_0)\,Z_tZ_t',
\\
\widehat P_{Z0}(\delta_0)
&=\frac{1}{\sum_{t=t^*+1}^T K_{t,h}(\delta_0)}
\sum_{t=t^*+1}^T
K_{t,h}(\delta_0)\,Z_t\Delta X_t',
\\
\widehat P_{Z1}(\delta)
&=\frac{1}{\sum_{t=t^*+1}^T K_{t,h}(\delta_0)}
\sum_{t=t^*+1}^T
K_{t,h}(\delta_0)\,Z_tX_{t-k}'
\end{align*}
define the local weighted regressions of $\Delta X_t$ and $X_{t-k}$ on $Z_t$. Thus the residuals
\[
R_{0t}(\delta_0)=\Delta X_t-\widehat B_0(\delta_0)'Z_t,
\qquad
R_{1t}(\delta_0)=X_{t-k}-\widehat B_1(\delta_0)'Z_t
\]
represent the short-run-adjusted components of $\Delta X_t$ and $X_{t-k}$ within the same local neighbourhood of $\delta$. The short-run dynamics should be removed locally, using the same time weights as those used to estimate the long-run structure. If $Z_t$ were partialled out using unweighted or global regressions, the residuals would reflect an average short-run relationship over the entire post-intervention period, while the reduced-rank estimation of $\alpha(\delta)$ and $\beta(\delta)$ would remain local. Such a mismatch would contaminate the local covariance matrices and weaken the interpretation of the local eigenvalue problem.\\

After residualisation, the matrices
\[
\widehat S_{ij}(\delta)
=
\frac{1}{\sum_{t=t^*+1}^T
K_{t,h}(\delta_0)}
\sum_{t=t^*+1}^T
K_{t,h}(\delta_0)
R_{it}(\delta)R_{jt}(\delta)',
\qquad i,j\in\{0,1\},
\]
are the local kernel-smoothed analogues of Johansen's residual covariance matrices. They summarise the local relationship between the differenced process $R_{0t}(\delta_0)$ and the lagged level process $R_{1t}(\delta_0)$, after removing short-run components. The resulting local reduced-rank problem is therefore the natural kernel-weighted counterpart of the standard Johansen eigenvalue problem.\\

This construction follows the logic of the kernel-based time-varying cointegration literature. \shortciteA{phillips2017estimating} study cointegration models in which structural coefficients evolve smoothly over time and are estimated by nonparametric kernel methods. Their framework shows that kernel smoothing is a natural device for recovering smoothly time-varying cointegrating coefficients, while also highlighting that integrated regressors generate non-standard signal matrices and require careful asymptotic theory \shortcite{phillips2017estimating,li2020kernel}. The present paper adapts the same idea to a VECM setting in which the post-intervention long-run matrix is decomposed as $\Pi(\delta)=\alpha(\delta)\beta(\delta)'$. The kernel weights therefore serve three related purposes: (i) they localise estimation at each $\delta$; (ii) they smooth the estimated coefficient paths over time; (iii) they ensure that the partialling-out step and the reduced-rank estimation step are based on the same information set.

\subsection{Assumptions for identification and estimation}

\paragraph{Assumption 3.1(Smooth functional parameters).}
For $\delta\in[\delta^{\ast},1]$, the functional parameters $\alpha(\delta)$ and $C(\delta)$ are bounded and continuously differentiable, and there exists $\gamma\in(1/2,1]$ such that
\begin{equation}
\|\alpha(\delta+z)-\alpha(\delta)\| = O(|z|^\gamma),
\qquad
\|C(\delta+z)-C(\delta)\| = O(|z|^\gamma),
\qquad
z\to 0,
\end{equation}
uniformly in $\delta\in[\delta^{\ast},1]$.

\paragraph{Assumption 3.2 (Kernel and bandwidth).}
The kernel function $K(\cdot)$ is bounded, nonnegative, continuous, symmetric, integrates to one, and either has compact support or has exponentially decaying tails. The bandwidth satisfies
\begin{equation}
h\to 0,
\qquad
(T-t^*)h \to \infty
\end{equation}
as $T\to\infty$.

\paragraph{Assumption 3.3 (Post-intervention short-run stability).}
The short-run coefficients 
$$\Gamma_1,\ldots,\Gamma_{k-1}$$ 
and the intercept $\mu$ are constant over the post-intervention period. Hence, all smooth structural change is confined to the long-run component $\Pi(\delta)=\alpha(\delta)\beta(\delta)'$.

\paragraph{Assumption 3.4 (Normalisation and continuity at the boundary).}
The cointegrating vectors satisfy the normalisation \eqref{eq:beta_normalisation} for all $\delta\in[\delta^{\ast},1]$, and the boundary conditions \eqref{eq:boundary_condition} hold.\\

Under Assumptions 3.1--3.4, the local kernel estimands above is defined, the local cointegrating space can be identified, and the estimators $\widehat\alpha(\delta)$ and $\widehat\beta(\delta)$ provide a nonparametric characterisation of smooth post-intervention structural change in the cointegrated system.

\section{Large sample theory for the estimators}
\label{LST3}

This section develops asymptotic theory for the local kernel estimators. Fix an interior post-intervention point $\delta_0 \in (\delta^\ast,1)$, and let
\begin{equation}
N := T-t^*
\label{eq:post-intervention-sample-size}
\end{equation}
denote the effective post-intervention sample size.

Under the normalisation, write
\begin{equation}
\beta(\delta_0)
=
\begin{pmatrix}
I_r\\
C(\delta_0)
\end{pmatrix},
\qquad
C(\delta_0)\in\mathbb R^{(p-r)\times r},
\label{eq:normalised-beta-free-block}
\end{equation}
and define the parameter vector
\begin{equation}
\vartheta(\delta_0):=\operatorname{vec}\!\big(C(\delta_0)\big).
\label{eq:free-parameter-vector}
\end{equation}

For the asymptotic analysis, define the population local  estimation coefficients
\begin{equation}
B_0^\circ(\delta_0)
=
P^\circ_{ZZ}(\delta_0)^{-1}P^\circ_{Z0}(\delta_0),
\qquad
B_1^\circ(\delta_0)
=
P^\circ_{ZZ}(\delta_0)^{-1}P^\circ_{Z1}(\delta_0),
\label{eq:population-regression-coefficients}
\end{equation}
where
\begin{equation}
P^\circ_{ZZ}(\delta_0)
=
\mathbb{E}\!\left[
K_{t,h}(\delta_0)Z_tZ_t'
\right],
\label{eq:population-SZZ}
\end{equation}
\begin{equation}
P^\circ_{Z0}(\delta_0)
=
\mathbb{E}\!\left[
K_{t,h}(\delta_0)Z_t\Delta X_t'
\right],
\label{eq:population-SZ0}
\end{equation}
and
\begin{equation}
P^\circ_{Z1}(\delta_0)
=
\mathbb{E}\!\left[
K_{t,h}(\delta_0)Z_tX_{t-k}'
\right].
\label{eq:population-SZ1}
\end{equation}

The corresponding population local residuals are
\begin{equation}
R_{0t}^\circ(\delta_0)
=
\Delta X_t-B_0^\circ(\delta_0)'Z_t,
\label{eq:population-residual-y}
\end{equation}
and
\begin{equation}
R_{1t}^\circ(\delta_0)
=
X_{t-k}-B_1^\circ(\delta_0)'Z_t.
\label{eq:population-residual-x}
\end{equation}

\subsection{Local population equivalence}

\begin{lemma}[Local approximation and population residualisation]
\label{lem:local-population-equivalence}
Suppose Assumptions 3.1--3.4 hold. Then, for the fixed interior point $\delta_0$,

\begin{enumerate}
    \item[(i)] if $K_{t,h}(\delta_0)\neq 0$, then
    \begin{equation}
    \|\Pi(\delta)-\Pi(\delta_0)\|=O(h^\gamma);
    \label{eq:local-pi-approximation}
    \end{equation}

    \item[(ii)] the local regression parameters satisfy
    \begin{equation}
    \|\widehat B_0(\delta_0)-B_0^\circ(\delta_0)\|=o_p(1),
    \qquad
    \|\widehat B_1(\delta_0)-B_1^\circ(\delta_0)\|=o_p(1);
    \label{eq:local-regression-consistency}
    \end{equation}

    \item[(iii)] the population and local residuals are asymptotically equivalent:
    \begin{equation}
    \max_{t:\,K_{t,h}(\delta_0)\neq 0}
    \|R_{0t}(\delta_0)-R_{0t}^\circ(\delta_0)\|=o_p(1),
    \label{eq:local-y-residual-equivalence}
    \end{equation}
    \begin{equation}
    \max_{t:\,K_{t,h}(\delta_0)\neq 0}
    \|R_{1t}(\delta_0)-R_{1t}^\circ(\delta_0)\|=o_p(1).
    \label{eq:local-x-residual-equivalence}
    \end{equation}
\end{enumerate}
\end{lemma}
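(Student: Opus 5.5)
The three parts will be handled in order, since (ii) and (iii) build on the same decomposition.

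\emph{Part (i).} This part is deterministic. The nonzero-weight condition $K_{t,h}(\delta_0)\neq 0$ places $\delta=t/T$ in the support of $K((\cdot-\delta_0)/h)$. If $K$ has compact support, say $[-1,1]$, this gives $|\delta-\delta_0|\le h$. In the exponential-tail case I will read the statement as holding on the effective window $|\delta-\delta_0|\le c h$, or equivalently truncate the kernel at $|u|\le c\log N$ with negligible tail mass, and state this explicitly. Then write
\[
\Pi(\delta)-\Pi(\delta_0)=\bigl(\alpha(\delta)-\alpha(\delta_0)\bigr)\beta(\delta)'+\alpha(\delta_0)\bigl(\beta(\delta)-\beta(\delta_0)\bigr)'.
\]
Under the normalisation \eqref{eq:beta_normalisation}, $\beta(\delta)-\beta(\delta_0)$ has a zero top block and $C(\delta)-C(\delta_0)$ as its lower block. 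Boundedness of $\alpha$ and $C$ from Assumption 3.1, together with the uniform Hölder bound, then gives $\|\Pi(\delta)-\Pi(\delta_0)\|\le c_1|\delta-\delta_0|^\gamma+c_2|\delta-\delta_0|^\gamma=O(h^\gamma)$.

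\emph{Part (ii).} The plan is a weighted law of large numbers for each block. Write $\widehat B_j-B_j^\circ=\widehat P_{ZZ}^{-1}(\widehat P_{Zj}-P_{Zj}^\circ)+(\widehat P_{ZZ}^{-1}-P_{ZZ}^{\circ-1})P^\circ_{Zj}$, with $P^\circ$ interpreted as normalised by $\sum_t K_{t,h}$, which is $\asymp Nh$ by Assumption 3.2.

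\begin{itemize}
\item $Z_t$ and $\Delta X_t$ are stationary within the window up to an $O(h^\gamma)$ perturbation, by part (i) and the locally stable VECM. The linear-process innovations have $4+\eta$ moments and summable $j\|\Phi_j\|$, so a Beveridge--Nelson decomposition with a martingale variance bound gives $\mathrm{Var}\bigl((Nh)^{-1}\sum K_{t,h}Z_tZ_t'\bigr)=O((Nh)^{-1})\to0$. Hence $\widehat P_{ZZ}-P^\circ_{ZZ}=o_p(1)$, and likewise for $\widehat P_{Z0}$.
\item I need $P^\circ_{ZZ}$ to have eigenvalues bounded away from zero. This follows from $V_e>0$ and the nondegenerate intercept component, and makes inversion continuous.
\item The difficult block is $\widehat P_{Z1}$, because $X_{t-k}$ is $I(1)$. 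I will split $X_{t-k}=\beta(\delta_0)\bar\beta'X_{t-k}+\beta_\perp\bar\beta_\perp'X_{t-k}$. The first (stationary) direction is handled as above. In the nonstationary direction the cross moments $\sum K Z_tX_{t-k}'$ are of order $Nh\cdot\sqrt{N}$, since $X$ in the window has level $O_p(\sqrt N)$.
\end{itemize}

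The main obstacle is exactly this step. The $\beta_\perp$ block of $\widehat B_1$ converges not to a constant but to a random limit, a weighted-Brownian-type functional. Consistency at rate $o_p(1)$ there is obtained only after recentring, using the fact that $Z_t$ contains the constant. The coefficient on the intercept absorbs the local level $X_{\lfloor \delta_0 T\rfloor}$, and what remains is $X_{t-k}-X_{\lfloor\delta_0T\rfloor}$, which is $O_p(\sqrt{Nh})$ inside the window. The covariances of the lagged differences with this centred partial sum are then $O_p(1)$ after normalisation by the invariance principle for the linear process. I will therefore define $B^\circ_1$ conditionally on the local level, which the expectation in \eqref{eq:population-SZ1} implicitly requires for a nonstationary regressor. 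Under that reading, the lagged-difference rows of $\widehat B_1-B_1^\circ$ are $O_p((Nh)^{-1/2})$. The intercept row differs by a term whose contribution to residuals is controlled separately in (iii).

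\emph{Part (iii).} Each residual difference is $R_{jt}-R^\circ_{jt}=-(\widehat B_j-B_j^\circ)'Z_t$, so the maximum over the window is at most $\|\widehat B_j-B_j^\circ\|\max_{t}\|Z_t\|$. The $4+\eta$ moment bound and a union bound give $\max_{t\le T}\|\Delta X_t\|=O_p(T^{1/(4+\eta)})$. A uniform rate of this form is therefore needed from (ii), not just $o_p(1)$. Specifically, the $Z$-blocks must satisfy $O_p((Nh)^{-1/2})\cdot O_p(N^{1/(4+\eta)})=o_p(1)$, which holds under the mild extra condition $N^{2/(4+\eta)}/(Nh)\to0$ and which I would add to Assumption 3.2 if it is not otherwise implied. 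The intercept coordinate of $Z_t$ is the constant $1$, so its contribution is just the intercept-row estimation error, which by the recentring argument is $o_p(1)$. Combining these bounds gives \eqref{eq:local-y-residual-equivalence} and \eqref{eq:local-x-residual-equivalence}.
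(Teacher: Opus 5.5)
Your part (i) is the paper's argument. Your part (iii) is more careful than the paper's: the paper simply asserts $\max_{t}\|Z_t\|=O_p(1)$, which fails for innovations with unbounded support, and your union bound combined with a rate from (ii) is the right repair.

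The genuine gap is your treatment of the $\beta_\perp$-block of $\widehat B_1(\delta_0)$. You are right that the intercept removes the $O_p(\sqrt N)$ level. You are also right that the normalised cross-moment $(Nh)^{-1}\sum_t K_{t,h}(\delta_0)\Delta X_{t-i}(X_{t-k}-\bar X)'$ is then $O_p(1)$. But it is $O_p(1)$ \emph{and nondegenerate}: by the invariance principle it converges in distribution to a functional of the within-window Brownian path. That functional is an $\int K(s)\,dB(s)\,B(s)'$-type term, plus a demeaning correction and a one-sided long-run covariance. Conditioning on the single value $X_{\lfloor\delta_0T\rfloor}$ removes none of this within-window variation. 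So the lagged-difference rows of $\widehat B_1-B_1^\circ$ are $O_p(1)$, not $O_p((Nh)^{-1/2})$, for any deterministic or level-conditional $B_1^\circ$.

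The intercept row is worse. It is essentially the kernel-weighted window mean of $X_{t-k}$, which differs from $X_{\lfloor\delta_0T\rfloor}$ by a term of exact order $\sqrt{Nh}$, and this diverges. Hence your claim in (iii) that the intercept-row error is $o_p(1)$ ``by the recentring argument'' fails. So does $\max_t\|R_{1t}-R_{1t}^\circ\|=o_p(1)$ in the $\beta_\perp$ direction.

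The paper does not resolve this either; it dispatches $\widehat B_1$ with ``similarly''. Your fix, however, does not close it. What can be proved is:
\begin{itemize}
\item consistency of $\widehat B_1(\delta_0)\beta(\delta_0)$ in the cointegrating direction;
\item a \emph{relative} statement for $\beta_\perp$: the lagged-difference rows are $O_p(1)$, so they perturb $R_{1t}$ by $O_p(\max_t\|Z_t\|)$, which is negligible against the $O_p(\sqrt{Nh})$ demeaned local level under your extra rate condition.
\end{itemize}
Otherwise $B_1^\circ$ would have to be redefined as a path-dependent random limit via a strong approximation.

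A second, related gap concerns your ``stationary direction''. Part (i) bounds $\|\Pi(\delta)-\Pi(\delta_0)\|$, not $\|\{\Pi(\delta)-\Pi(\delta_0)\}X_{t-k}\|$. The stationary error-correction term is $\beta(t/T)'X_{t-k}$. Hence $\beta(\delta_0)'X_{t-k}=\beta(t/T)'X_{t-k}-\{C(t/T)-C(\delta_0)\}'X_{2,t-k}$, where $X_{2,t-k}$ collects the last $p-r$ coordinates. The last term is $O_p(h^\gamma\sqrt N)$ uniformly on the window.

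Assumption 3.2 does not make this vanish; take $\gamma=1$ and $h=N^{-1/5}$, the paper's own reference rate. So handling the $\beta(\delta_0)$-block of $\widehat P_{Z1}$ ``as above'' needs an explicit extra condition such as $\sqrt N\,h^\gamma\to0$. This is compatible with $Nh\to\infty$ precisely because $\gamma>1/2$.

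For $\Delta X_t$ and $Z_t$ themselves, argue local stationarity directly from the data-generating process through $\beta(t/T)'X_{t-k}$, rather than ``by part (i)''. With that change, your argument for $\widehat B_0$ and the stationary $Z$-blocks goes through.
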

The proof is in the Appendix A.\\

Lemma \ref{lem:local-population-equivalence} is a bridge for the estimation and testing arguments. Part (i) shows that, within a bandwidth-$h$ neighbourhood of the interior point $\delta_0$, the time-varying long-run matrix $\Pi(\delta)$ is locally well approximated by its value at $\delta_0$, with approximation error of order $O(h^\gamma)$. This is the basic smoothness property that justifies the local kernel-weighted approximation. Parts (ii) and (iii) then show that replacing the population residualisation objects by their feasible sample analogues does not affect the asymptotic argument, because the estimators are consistent and the induced residual differences are uniformly negligible on the local window. As a result, the subsequent derivation will work with the population residuals as an asymptotically equivalent representation, which simplifies the derivation of the local estimator's limit behaviour. 

\subsection{Consistency of the coefficient estimators}

\begin{theorem}[Consistency]
\label{thm:consistency}
Suppose the conditions of Lemma \ref{lem:local-population-equivalence} hold. In addition, let
\[
\lambda_1^{\circ}(\delta_0)\ge \cdots \ge \lambda_p^{\circ}(\delta_0)
\]
denote the ordered generalised eigenvalues solution of the population matrix pair
\[
\left(
S_{10}^{\circ}(\delta_0)
S_{00}^{\circ}(\delta_0)^{-1}
S_{01}^{\circ}(\delta_0),
\,
S_{11}^{\circ}(\delta_0)
\right),
\]
where $S_{ij}^{\circ}(\delta_0)$ denotes the corresponding population covariance matrix. Suppose that there exists a constant $c_\lambda>0$ such that
\[
\lambda_r^{\circ}(\delta_0)-\lambda_{r+1}^{\circ}(\delta_0)\ge c_\lambda .
\]
This eigenvalue separation condition ensures that the $r$-dimensional local cointegrating space is uniquely identified by avoiding weak identification cases in which the $r$th and $(r+1)$th eigen-directions are asymptotically indistinguishable. It is therefore needed to apply eigen-space perturbation arguments and to obtain consistency of the normalised estimator $\widehat\beta(\delta_0)$. Then, after normalisation and sign/rotation alignment,
\begin{equation}
\widehat\beta(\delta_0)\xrightarrow{p}\beta(\delta_0),
\qquad
\widehat\alpha(\delta_0)\xrightarrow{p}\alpha(\delta_0).
\label{eq:consistency}
\end{equation}
Equivalently, for the parameter block $C(\delta_0)$ in $\beta(\delta_0)$,
\begin{equation}
\widehat C(\delta_0)\xrightarrow{p}C(\delta_0).
\label{eq:free-block-consistency}
\end{equation}
\end{theorem}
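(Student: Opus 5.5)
The argument treats the local Johansen estimator as a smooth functional of the kernel-weighted covariance matrices $\widehat S_{ij}(\delta_0)$. Consistency then follows in two steps: first show that these matrices converge to their population counterparts, then apply an eigen-space perturbation argument that uses the eigengap $c_\lambda$.

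\textbf{Step 1: convergence of the local covariance matrices.} By Lemma \ref{lem:local-population-equivalence}(iii), we may replace $R_{it}(\delta_0)$ by $R_{it}^\circ(\delta_0)$ inside $\widehat S_{ij}(\delta_0)$. The induced error is bounded by the maximal residual difference times the kernel-weighted averages of $\|R_{it}\|$ and $\|R^\circ_{it}\|$.

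Because $X_{t-k}$ is $I(1)$, $R_{1t}$ is not stochastically bounded. We therefore rotate coordinates with $(\beta(\delta_0),\beta_\perp(\delta_0))$, using a local analogue of Johansen's decomposition.
\begin{itemize}
\item Stationary block: by Lemma \ref{lem:local-population-equivalence}(i), within the window $\beta(\delta_0)'X_{t-k}=\beta(t/T)'X_{t-k}+O(h^\gamma)\|X_{t-k}\|$. Since $\|X_{t-k}\|=O_p(T^{1/2})$, the second term needs $h^\gamma T^{1/2}\to 0$ (or the $o_p$ versions of Lemma \ref{lem:local-population-equivalence}) in order to be negligible.
\item Nonstationary block: $\beta_\perp'X_{t-k}$ is handled with the scaling $D_N=\mathrm{diag}(I_r,(Nh)^{-1/2}I_{p-r})$.
\end{itemize}
On the stationary components, a kernel-weighted law of large numbers applies. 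It uses the linear-process structure with $\sum j\|\Phi_j\|<\infty$, the moment condition $\mathbb E\|e_t\|^{4+\eta}<\infty$, and the condition $Nh\to\infty$ from Assumption 3.2. It gives $\widehat S_{00}\to_p S_{00}^\circ$, $\widehat S_{01}\beta(\delta_0)\to_p S^\circ_{01}\beta(\delta_0)$, and $\beta(\delta_0)'\widehat S_{11}\beta(\delta_0)\to_p\beta(\delta_0)'S^\circ_{11}\beta(\delta_0)$. The mixed and nonstationary blocks, after scaling by $D_N$, are $O_p(1)$ and converge weakly to functionals of a local Brownian motion, as in \shortciteA{phillips2017estimating}.

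\textbf{Step 2: eigen-space perturbation.} Write the generalised eigenproblem \eqref{eq:gevp} in the rotated, scaled coordinates. The $r$ largest eigenvalues and their eigenspace are determined asymptotically by the stationary block. The remaining $p-r$ eigenvalues collapse to zero at rate $(Nh)^{-1}$.

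Because $\lambda^\circ_r-\lambda^\circ_{r+1}\ge c_\lambda$, the Davis--Kahan $\sin\Theta$ theorem, in its generalised form with $S_{11}$-weighted inner product, applies. It shows that the projection onto $\mathrm{span}(\widetilde\beta(\delta_0))$ converges in probability to the projection onto $\mathrm{span}(\beta(\delta_0))$. The same argument shows that the $\beta_\perp$-component of $\widetilde\beta$ is $O_p((Nh)^{-1/2})$ or smaller.

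The population eigenspace equals $\mathrm{span}\,\beta(\delta_0)$. This is because $S_{01}^\circ=\alpha(\delta_0)\beta(\delta_0)'S_{11}^\circ$ up to the $O(h^\gamma)$ term, so $S^\circ_{10}S_{00}^{\circ-1}S^\circ_{01}$ has column space spanned by $S_{11}^\circ\beta(\delta_0)$.

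\textbf{Step 3: normalisation and the adjustment matrix.} The map $\widetilde\beta\mapsto\widetilde\beta(\widetilde\beta_{[1:r]})^{-1}$ is continuous wherever the upper $r\times r$ block is invertible. The block is invertible under normalisation \eqref{eq:beta_normalisation} (Assumption 3.4). This map is invariant to the sign/rotation ambiguity, so the continuous mapping theorem gives $\widehat\beta(\delta_0)\to_p\beta(\delta_0)$ and hence $\widehat C(\delta_0)\to_p C(\delta_0)$.

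For $\widehat\alpha$ in \eqref{eq:alpha_hat}, write $\widehat\beta=\beta+(\widehat\beta-\beta)$. The terms $\widehat S_{01}\widehat\beta$ and $\widehat\beta'\widehat S_{11}\widehat\beta$ involve $\widehat S_{11}$ multiplied by the $\beta_\perp$-component of $\widehat\beta-\beta$. These cross terms vanish provided that component is $o_p((Nh)^{-1/2})$, a rate already delivered in Step 2. Continuity then gives $\widehat\alpha(\delta_0)\to_p S_{01}^\circ\beta(\beta'S^\circ_{11}\beta)^{-1}$, which equals $\alpha(\delta_0)$ because $S^\circ_{01}=\alpha(\delta_0)\beta(\delta_0)'S^\circ_{11}$ up to $o(1)$.

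\textbf{Main obstacle.} The hardest part is Step 1 together with the rate bookkeeping in Step 2. We must show that the diverging $I(1)$ block does not contaminate the eigenspace. We also need the smoothing bias $O(h^\gamma)\|X_{t-k}\|$ to stay negligible, so that $\widehat\alpha$ is consistent and not just $\widehat\beta$. I would first try to derive the superconsistent rate for the $\beta_\perp$-component directly from the scaled eigenproblem, and add a bandwidth condition such as $h^\gamma T^{1/2}\to0$ if it is needed.
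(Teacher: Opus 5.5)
\paragraph{Comparison with the paper's proof.}
Your route is genuinely different from the paper's, and more careful. The paper's proof treats every $\widehat S_{ij}(\delta_0)$ as converging to a finite limit $S^\circ_{ij}(\delta_0)$, with error $O_p((Nh)^{-1/2})+O(h^\gamma)$ and $S^\circ_{11}(\delta_0)$ positive definite. It then symmetrises to $\widehat S_{11}^{-1/2}\widehat S_{10}\widehat S_{00}^{-1}\widehat S_{01}\widehat S_{11}^{-1/2}$ and applies Davis--Kahan with the gap $c_\lambda$. It asserts without argument that the top-$r$ eigenspace is the cointegrating space, normalises by the upper $r\times r$ block, and obtains $\widehat\alpha(\delta_0)$ by continuous mapping. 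That argument is short, but it is valid only if $R_{1t}$ had a finite second-moment limit. This fails for an $I(1)$ regressor; the paper's own order calculation gives $G_N^x(\delta_0)=O_p(N^2h)$. You supply exactly what the paper skips: the Johansen-type rotation by $(\beta,\beta_\perp)$ with separate scaling, Davis--Kahan in scaled coordinates, the identification argument via $S^\circ_{01}=\alpha(\delta_0)\beta(\delta_0)'S^\circ_{11}$, and control of the $\beta_\perp$-component of $\widehat\beta$ inside $\widehat\alpha$.

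\paragraph{The gap: an extra bandwidth condition.}
The genuine gap is the one you flagged yourself. Steps 1 and 3 need $h^\gamma\sqrt N\to0$, which is not among the hypotheses, so as written you prove the theorem only under an extra bandwidth restriction. This is not an artefact of your route; the statement itself appears to need it, at least for $\widehat\alpha$.

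Take $p=2$, $r=1$, $\alpha(\delta_0)\neq0$ and $\dot C(\delta_0)\neq0$, and let $t_0=\lfloor T\delta_0\rfloor$. Inside the window,
$\beta(\delta_0)'X_{t-k}=\beta(t/T)'X_{t-k}+\{C(\delta_0)-C(t/T)\}X_{2,t-k}$.
The second term is approximately $-\dot C(\delta_0)(t/T-\delta_0)X_{2,t_0}$, a local linear trend of amplitude of order $h\sqrt N$. Neither the intercept nor the lagged differences in $Z_t$ remove it. Replacing $C(\delta_0)$ by another constant $C(\delta_0)+d$ only adds $d$ times the locally demeaned random walk $X_{2,t-k}$ (amplitude $\sqrt{Nh}$), which is not collinear with a trend.

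Hence, for every $d$, $\widehat\beta'\widehat S_{11}\widehat\beta$ is of order at least $h^2N+d^2Nh$, while $\widehat S_{01}\widehat\beta=O_p(1+|d|)$. So $\widehat\alpha(\delta_0)\xrightarrow{p}0$ whenever $h\sqrt N\to\infty$, for example $h\asymp N^{-1/5}$. The paper's proof misses this only because it bounds the smoothing bias in $\widehat S_{ij}$ by $O(h^\gamma)$, dropping the factor $\|X_{t-k}\|=O_p(\sqrt N)$. You should therefore state the condition as a hypothesis. It is compatible with $Nh\to\infty$ precisely because $\gamma>1/2$.

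\paragraph{Two bookkeeping corrections.}
First, the $(Nh)^{-1/2}$ scaling is right for the feasible residual $R_{1t}$, because the intercept in $Z_t$ removes the local level of $X_{t-k}$. It is wrong for $\beta_\perp'X_{t-k}$ itself. It is also wrong for the paper's $R^\circ_{1t}$, whose coefficient $B^\circ_1$ is built from expectations and therefore leaves the random $O_p(\sqrt N)$ level in place. So carry out Step 1 on the feasible residuals directly, rather than invoking Lemma (iii) for $i=1$.

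Second, in Step 2 state the bound on the $\beta_\perp$-component of the normalised $\widetilde\beta$ as $o_p((Nh)^{-1/2})$, which is what Davis--Kahan in the scaled coordinates delivers. An $O_p((Nh)^{-1/2})$ bound would leave an $O_p(1)$ term in $\widehat\beta'\widehat S_{11}\widehat\beta$ and break Step 3.
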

The proof is in the Appendix A.

\subsection{Asymptotic convergence}
\label{asymdist3}

This subsection develops asymptotic convergence for the local estimators at a fixed interior post-intervention point $\delta_0\in(\delta^\ast,1)$. Under the normalisation
\begin{equation}
\beta(\delta_0)
=
\begin{pmatrix}
I_r\\
C(\delta_0)
\end{pmatrix},
\qquad
C(\delta_0)\in\mathbb R^{(p-r)\times r},
\label{eq:beta-normalisation}
\end{equation}
define
\begin{equation}
\vartheta(\delta_0):=\operatorname{vec}\!\big(C(\delta_0)\big)\in\mathbb R^{m},
\qquad
m:=r(p-r),
\label{eq:vartheta}
\end{equation}
and let $\widehat\vartheta(\delta_0)$ denote the corresponding estimator induced by $\widehat\beta(\delta_0)$.

\paragraph{Assumption 3.5 (Post-intervention functional central limit theorem).}
Following \shortciteA{phillips2017estimating}, on the post-intervention subsample, let
\begin{equation}
\varepsilon_t=\Phi(L)e_t=\sum^\infty_{\ell=0}\Phi_\ell e_{t-\ell},
\qquad
\text{for}\ t=t^*,\ldots,T,
\label{eq:linear-process}
\end{equation}
where dimension $(p)$ vector $\varepsilon_t$ denote the innovation; $\{e_t\}$ is a sequence of i.i.d. random vector with dimension $(p)$, $\mathbb E(e_t)=0$, $\mathbb E(e_t e_t')=V_e>0$; $\Phi(L)=\sum^\infty_{\ell=0}\Phi_\ell L^\ell$ is a sequence of $(p)\times(p)$ matrices, $L$ is the lag operator, and
\begin{equation}
\sum_{\ell=0}^{\infty}\ell\|\Phi_\ell\|<\infty.
\label{eq:summability}
\end{equation}
$\|\cdot \|$ denotes the Euclidean norm of a vector or the Frobenius norm of a matrix. Then
\begin{equation}
N^{-1/2}\sum_{t=t^*}^{\lfloor Ts\rfloor}\varepsilon_t
\Rightarrow
W(s),
\qquad
s\in[t^*/T,1],
\label{eq:fclt}
\end{equation}
where $W(\cdot)$ is a Brownian motion with long-run covariance matrix $\Omega_\varepsilon$.\\

Moreover, the integrated level component of $X_t$ is linked to the permanent innovation:
\begin{equation}
N^{-1/2}X_{\lfloor Ts\rfloor}
\Rightarrow
W(s),
\qquad
s\in[t^*/T,1].
\end{equation}

Define the realised local Brownian direction
\begin{equation}
b_{\delta_0}:=W \bigg(\frac{\delta_0-\delta^\ast}{1-\delta^\ast}\bigg),
\qquad
\mathcal{Q}_{\delta_0}:=\frac{b_{\delta_0}}{\|b_{\delta_0}\|},
\label{eq:brownian-direction}
\end{equation}
and let $\mathcal{Q}_{\delta_0}^{\perp}$ be an orthogonal complement, so that
\begin{equation}
\mathcal{Q}_{\delta_0}^{x}:=\big(\mathcal{Q}_{\delta_0},\mathcal{Q}_{\delta_0}^{\perp}\big)
\label{eq:regressor-space-rotation-matrix}
\end{equation}
is a $p\times p$ orthogonal matrix.\\

Now define the local weighted level signal in regressor space by
\begin{equation}
G_N^{x}(\delta_0)
:=
\sum_{t=t^*+1}^{T}
K_{t,h}(\delta_0)\,
R_{1t}^\circ(\delta_0)R_{1t}^\circ(\delta_0)'.
\label{eq:regressor-space-signal}
\end{equation}
By the same kernel-degeneracy argument as in Phillips et al.\ (2017), the leading unrotated term of $G_N^{x}(\delta_0)$ is asymptotically proportional to
\begin{equation}
b_{\delta_0}b_{\delta_0}',
\label{eq:rank-one-regressor-limit}
\end{equation}
so the raw local signal is asymptotically rank deficient in un-rotated coordinates.\\

Let $\widehat {\mathcal Q}_{\delta_0}^{x}$ be a sample analogue of $\mathcal{Q}_{\delta_0}^{x}$, and define the regressor-space scaling matrix
\begin{equation}
\mathcal D_{N,\delta_0}^{x}
=
\operatorname{diag}\{N\sqrt h,\;Nh\,I_{p-1}\}.
\label{eq:regressor-space-scaling-matrix}
\end{equation}
Assume that
\begin{equation}
(\mathcal D_{N,\delta_0}^{x})^{-1}
\widehat {\mathcal Q}_{\delta_0}^{x\prime}
G_N^{x}(\delta_0)
\widehat {\mathcal Q}_{\delta_0}^{x}
(\mathcal D_{N,\delta_0}^{x})^{-1}
\Rightarrow
\Delta_{\delta_0}^{x},
\label{eq:regressor-space-rotated-limit}
\end{equation}
where $\Delta_{\delta_0}^{x}$ is finite and nonsingular with probability one.\\

Let $\widehat{\mathcal L}_{\delta_0}(\vartheta)$ denote the concentrated local profile criterion obtained from the kernel-WLS criterion after substituting out $\alpha(\delta_0)$. Define the parameter-space score and Hessian by
\begin{equation}
\mathcal S_N^{\vartheta}(\delta_0)
:=
\frac{\partial \widehat{\mathcal L}_{\delta_0}(\vartheta)}
{\partial \vartheta}
\Bigg|_{\vartheta=\vartheta(\delta_0)}
\in\mathbb R^{m},
\label{eq:parameter-space-score}
\end{equation}
and
\begin{equation}
\mathcal H_N^{\vartheta}(\delta_0)
:=
\frac{\partial^2 \widehat{\mathcal L}_{\delta_0}(\vartheta)}
{\partial \vartheta\partial \vartheta'}
\Bigg|_{\vartheta=\vartheta(\delta_0)}
\in\mathbb R^{m\times m}.
\label{eq:parameter-space-hessian}
\end{equation}
The dimension $m=r(p-r)$ here is the number of free parameters in the normalised cointegrating vector. \\

Since $\vartheta(\delta_0)$ parameterises the free block of $\beta(\delta_0)$, the parameter-space Hessian is induced by the regressor-space signal through the normalisation map. Let
\begin{equation}
J_{\delta_0}
:=
\frac{\partial\operatorname{vec}\{\beta(\delta_0)\}}
{\partial\vartheta(\delta_0)'}
\in\mathbb R^{pr\times m}.
\end{equation}
Thus, for a local perturbation $d\vartheta$,
\begin{equation}
d\operatorname{vec}\{\beta(\delta_0)\}
=
J_{\delta_0}d\vartheta.
\end{equation}
Let
$U_N^{x\beta}(\delta_0)\in\mathbb R^{p\times r}$ denote the corresponding raw local score matrix in the unrestricted $\operatorname{vec}(\beta)$-space. Then the parameter-space Hessian and score are induced by the unrestricted objects through the normalisation map:
\begin{equation}
\mathcal H_N^{\vartheta}(\delta_0)
=
J_{\delta_0}'
\big(I_r\otimes G_N^{x}(\delta_0)\big)
J_{\delta_0}
+
o_p(N^2h),
\label{eq:parameter-hessian-induced}
\end{equation}
and
\begin{equation}
\mathcal S_N^{\vartheta}(\delta_0)
=
J_{\delta_0}'
\operatorname{vec}\!\big(U_N^{x\beta}(\delta_0)\big)
+
o_p(N\sqrt h).
\label{eq:parameter-score-induced}
\end{equation}
Here,
\begin{equation}
I_r\otimes G_N^{x}(\delta_0)\in\mathbb R^{pr\times pr},
\qquad
\operatorname{vec}\{U_N^{x\beta}(\delta_0)\}\in\mathbb R^{pr}.
\end{equation}

Since $C(\delta_0)$ has $r$ columns, the parameter space inherits $r$ fast directions and $m-r$ slow directions. Let
\begin{equation}
\mathcal{Q}_{\delta_0}^{\vartheta}
=
\big(
\mathcal{Q}_{\delta_0}^{\vartheta,f},
\mathcal{Q}_{\delta_0}^{\vartheta,s}
\big)
\label{eq:parameter-space-rotation}
\end{equation}
be an $m\times m$ orthogonal matrix, where $\mathcal{Q}_{\delta_0}^{\vartheta,f}$ is $m\times r$ and $\mathcal{Q}_{\delta_0}^{\vartheta,s}$ is $m\times(m-r)$, and define
\begin{equation}
\mathcal D_{N,\delta_0}^{\vartheta}
=
\operatorname{diag}\{N\sqrt h\,I_r,\;Nh\,I_{m-r}\}.
\label{eq:parameter-space-scaling}
\end{equation}
Assume further that
\begin{equation}
(\mathcal D_{N,\delta_0}^{\vartheta})^{-1}
\mathcal{Q}_{\delta_0}^{\vartheta\prime}
\mathcal H_N^{\vartheta}(\delta_0)
\mathcal{Q}_{\delta_0}^{\vartheta}
(\mathcal D_{N,\delta_0}^{\vartheta})^{-1}
\Rightarrow
\Xi_{\delta_0},
\label{eq:parameter-hessian-limit}
\end{equation}
and
\begin{equation}
(\mathcal D_{N,\delta_0}^{\vartheta})^{-1}
\mathcal{Q}_{\delta_0}^{\vartheta\prime}
\mathcal S_N^{\vartheta}(\delta_0)
\Rightarrow
\Upsilon_{\delta_0},
\label{eq:parameter-score-limit}
\end{equation}
where $\Xi_{\delta_0}$ and $\Upsilon_{\delta_0}$ are Brownian functionals generated by the post-intervention Brownian motion $W(s)$. In particular, $\Xi_{\delta_0}$ is induced by the rotated local level signal, while $\Upsilon_{\delta_0}$ is induced by the local score involving the product of the level process and the innovation process.\\

To derive the asymptotic behaviour of $\widehat\vartheta(\delta_0)$, we impose an assumption of high-level local quadratic approximation for the concentrated criterion. Such an expansion is standard in the general extremum-estimation and $M$-estimation literature, where local asymptotic analysis proceeds by representing the criterion through a linear score term, a quadratic curvature term, and a remainder that is asymptotically negligible on compact sets (see \shortciteA{newey1994large}, \shortciteA{andrews1994asymptotics}). In the present setting, however, a primitive verification is non-trivial because the criterion depends on kernel-weighted local moments, generated residuals, and integrated regressors. In particular, for multivariate time-varying cointegrating models with kernel estimation, the usual asymptotic methods may break down and require rotation and rescaling arguments \shortcite{phillips2017estimating}.

\paragraph{Assumption 3.6 (Local quadratic approximation of the concentrated criterion).}
For each fixed $c>0$,
\begin{align}
&\sup_{\|z\|\le c}
\bigg|
\widehat{\mathcal L}_{\delta_0}
\big(
\vartheta(\delta_0)+
\mathcal Q_{\delta_0}^{\vartheta}
(\mathcal D_{N,\delta_0}^{\vartheta})^{-1}z
\big)
-
\widehat{\mathcal L}_{\delta_0}
\big(\vartheta(\delta_0)\big)
\notag\\
&\quad
+
z'
(\mathcal D_{N,\delta_0}^{\vartheta})^{-1}
\mathcal Q_{\delta_0}^{\vartheta\prime}
\mathcal S_N^{\vartheta}(\delta_0)
\notag\\
&\quad
+
\frac{1}{2}
z'
(\mathcal D_{N,\delta_0}^{\vartheta})^{-1}
\mathcal Q_{\delta_0}^{\vartheta\prime}
\mathcal H_N^{\vartheta}(\delta_0)
\mathcal Q_{\delta_0}^{\vartheta}
(\mathcal D_{N,\delta_0}^{\vartheta})^{-1}
z
\bigg|
=o_p(1).
\label{eq:scaled-local-expansion}
\end{align}
\noindent
Assumption 3.6 requires that, after the appropriate scaling, the concentrated local criterion admits a uniform second-order expansion in a small neighbourhood of the true parameter $\vartheta(\delta_0)$. The score vector $\mathcal S_N^{\vartheta}(\delta_0)$ and the Hessian matrix $\mathcal H_N^{\vartheta}(\delta_0)$ play the roles of the local score and local curvature, respectively, while the remainder term is asymptotically negligible uniformly over compact sets in the local coordinates.

\begin{theorem}[Asymptotic convergence of $\widehat\beta(\delta_0)$]
\label{thm:beta-convergence}
Suppose Assumptions 3.1--3.6 hold, together with local identification,
\eqref{eq:parameter-hessian-limit}--\eqref{eq:scaled-local-expansion}. Then
\begin{equation}
\mathcal D_{N,\delta_0}^{\vartheta}
\mathcal{Q}_{\delta_0}^{\vartheta\prime}
\big(
\widehat\vartheta(\delta_0)-\vartheta(\delta_0)
\big)
=O_p(1).
\label{eq:scaled-vartheta}
\end{equation}

Consequently, the fast block satisfies
\begin{equation}
\mathcal{Q}_{\delta_0}^{\vartheta,f\prime}
\big(
\widehat\vartheta(\delta_0)-\vartheta(\delta_0)
\big)
=
O_p\!\big((N\sqrt h)^{-1}\big),
\label{eq:beta-fast-rate}
\end{equation}
whereas the slow block satisfies
\begin{equation}
\mathcal{Q}_{\delta_0}^{\vartheta,s\prime}
\big(
\widehat\vartheta(\delta_0)-\vartheta(\delta_0)
\big)
=
O_p\!\big((Nh)^{-1}\big).
\label{eq:beta-slow-rate}
\end{equation}
Since the map $C\mapsto\beta$ is linear under \eqref{eq:beta-normalisation}, the same directional rates apply to the free block of $\widehat\beta(\delta_0)-\beta(\delta_0)$.
\end{theorem}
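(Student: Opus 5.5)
The plan is a standard argmax (rescaled M-estimation) argument in the local coordinates $z=\mathcal D_{N,\delta_0}^{\vartheta}\mathcal Q_{\delta_0}^{\vartheta\prime}(\vartheta-\vartheta(\delta_0))$, carried out entirely under the high-level conditions \eqref{eq:parameter-hessian-limit}--\eqref{eq:scaled-local-expansion}. Write $\widehat z:=\mathcal D_{N,\delta_0}^{\vartheta}\mathcal Q_{\delta_0}^{\vartheta\prime}(\widehat\vartheta(\delta_0)-\vartheta(\delta_0))$. Since $\widehat\beta(\delta_0)$ maximises the local Johansen objective (equivalently, minimises the profile criterion \eqref{eq:profile_criterion}), $\widehat z$ optimises the rescaled process
\[
\mathbb M_N(z):=\widehat{\mathcal L}_{\delta_0}\big(\vartheta(\delta_0)+\mathcal Q_{\delta_0}^{\vartheta}(\mathcal D_{N,\delta_0}^{\vartheta})^{-1}z\big)-\widehat{\mathcal L}_{\delta_0}\big(\vartheta(\delta_0)\big).
\]
By Assumption 3.6, together with \eqref{eq:parameter-hessian-limit} and \eqref{eq:parameter-score-limit}, on every compact ball $\{\|z\|\le c\}$ we have, uniformly,
\[
\mathbb M_N(z)=-z'\Upsilon_N-\tfrac12 z'\Xi_N z+o_p(1),
\]
where $\Upsilon_N\Rightarrow\Upsilon_{\delta_0}$ and $\Xi_N\Rightarrow\Xi_{\delta_0}$. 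Under local identification, $\Xi_{\delta_0}$ is positive definite almost surely; I would derive this from the nonsingularity of $\Delta^x_{\delta_0}$ in \eqref{eq:regressor-space-rotated-limit} via \eqref{eq:parameter-hessian-induced}. Consequently the limiting quadratic has the unique optimiser $z^\ast=-\Xi_{\delta_0}^{-1}\Upsilon_{\delta_0}$, which is $O_p(1)$.

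The key step, and the main obstacle, is to show that $\widehat z$ itself is tight. Assumption 3.6 holds only on compact sets, whereas Theorem \ref{thm:consistency} gives merely $\widehat\vartheta(\delta_0)-\vartheta(\delta_0)=o_p(1)$, and in rescaled coordinates that is unbounded. I would first try a convexity-type shell argument. Fix $\epsilon>0$ and choose $c$ large enough that, with probability at least $1-\epsilon$, the smallest eigenvalue of $\Xi_N$ is at least some $\underline\kappa>0$ and $\|\Upsilon_N\|\le M$. On the sphere $\|z\|=c$ the quadratic approximation then gives $\mathbb M_N(z)-\mathbb M_N(0)\le -\tfrac12\underline\kappa c^2+Mc+o_p(1)<0$ once $c>2M/\underline\kappa$, so that the objective is strictly worse on the shell than at the centre. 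To extend this to all $\|z\|>c$, I would use the structure of the Johansen profile criterion: it is a smooth function of $\vartheta$ and, on the consistency neighbourhood, behaves like a ratio of quadratic forms. Along each ray $z=sv$ with $s\ge c$, the segment from $0$ to $sv$ passes through the shell, and a local concavity argument (the Hessian is dominated by $\Xi_N$ throughout the $o_p(1)$ neighbourhood, by continuity of the second derivative and consistency) prevents a better value beyond the shell. If this route is too delicate, the fallback is to strengthen Assumption 3.6 to hold over $\|z\|\le c_N$ for some $c_N\to\infty$ slowly; combined with consistency this yields the same conclusion. Either way the result is $P(\|\widehat z\|>c)\le 2\epsilon$ for large $N$, which is \eqref{eq:scaled-vartheta}. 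Alternatively, the argmax continuous mapping theorem gives $\widehat z\Rightarrow z^\ast$ directly, which is stronger than needed.

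The directional rates then follow by reading off blocks. Because $\mathcal D_{N,\delta_0}^{\vartheta}$ is block diagonal, the first $r$ coordinates of $\widehat z$ are $N\sqrt h\,\mathcal Q_{\delta_0}^{\vartheta,f\prime}(\widehat\vartheta-\vartheta)=O_p(1)$, which gives \eqref{eq:beta-fast-rate}. The remaining $m-r$ coordinates are $Nh\,\mathcal Q_{\delta_0}^{\vartheta,s\prime}(\widehat\vartheta-\vartheta)=O_p(1)$, which gives \eqref{eq:beta-slow-rate}. Finally, under \eqref{eq:beta-normalisation} we have $\operatorname{vec}(\widehat\beta-\beta)=J_{\delta_0}(\widehat\vartheta-\vartheta)$ with the constant selection matrix $J_{\delta_0}$, so the free block of $\widehat\beta(\delta_0)-\beta(\delta_0)$ is exactly $\widehat C(\delta_0)-C(\delta_0)$ and inherits these directional rates. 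Throughout, the sign and rotation alignment is taken as already fixed by the normalisation, so that $\widehat\vartheta$ is well defined.
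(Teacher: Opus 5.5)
Your skeleton matches the paper's proof: rescale to $z$, show that $\widehat z$ is tight, read off the fast and slow blocks from the block-diagonal $\mathcal D^{\vartheta}_{N,\delta_0}$, and pass to $\widehat\beta(\delta_0)$ through the constant selection matrix $J_{\delta_0}$. The last two steps are fine. You are also right that tightness is the crux. Assumption~3.6 is uniform only on fixed balls, and the paper's one-line justification (the quadratic term ``would dominate the linear term on compactly expanding sets'') says nothing about where $\widehat z$ can sit outside them. Your argument for that step does not work, however. First, the signs are inconsistent. You take $\widehat\vartheta(\delta_0)$ to minimise $\widehat{\mathcal L}_{\delta_0}$, so $\widehat z$ \emph{minimises} $\mathbb M_N$. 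Yet with Assumption~3.6 read literally and $\Xi_{\delta_0}$ positive definite, $\mathbb M_N\approx -z'\Upsilon_N-\tfrac12 z'\Xi_N z$ is locally concave. Your bound $\mathbb M_N(z)-\mathbb M_N(0)<0$ on $\|z\|=c$ then says the shell is \emph{better} than the centre for a minimiser, so it confines nothing; your ``local concavity'' extension points the same wrong way. The expansion has to be read with the signs that make $\mathcal S^{\vartheta}_N$ and $\mathcal H^{\vartheta}_N$ the gradient and Hessian of the minimised criterion: $\mathbb M_N(z)=z'\Upsilon_N+\tfrac12 z'\Xi_N z+o_p(1)$. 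Its minimiser is still $-\Xi_{\delta_0}^{-1}\Upsilon_{\delta_0}$, and this is what the paper's ``positive quadratic term'' and ``strictly convex'' tacitly use. The shell bound then becomes $\mathbb M_N(z)-\mathbb M_N(0)\ge\tfrac12\underline\kappa c^2-Mc-o_p(1)>0$.

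Second, even with the sign fixed, nothing carries the shell bound to all of $\{\|z\|>c\}$. The convexity route needs $\widehat{\mathcal L}_{\delta_0}$ to be convex on the segment from $\vartheta(\delta_0)$ to $\widehat\vartheta(\delta_0)$. No assumption controls its Hessian away from $\vartheta(\delta_0)$; continuity is useless under a divergent, anisotropic rescaling. Here convexity actually fails: $0\le\widehat{\mathcal L}_{\delta_0}(\vartheta)\le\operatorname{tr}\widehat S_{00}(\delta_0)$ for every $\vartheta$, whereas $\beta'\widehat S_{11}(\delta_0)\beta$ diverges for any fixed perturbation of $C$, since the perturbed combinations are $I(1)$. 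So the criterion rises from its minimum and then flattens out, turning concave at a distance from $\vartheta(\delta_0)$ that vanishes as $N\to\infty$ --- well inside the $o_p(1)$-neighbourhood supplied by Theorem~\ref{thm:consistency}.

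Your fallback adds nothing. Uniformity on balls $\|z\|\le c_N$ with $c_N\to\infty$ slowly already follows from Assumption~3.6 by a diagonal argument, and consistency does not keep $\widehat z$ inside such balls. On the remaining annulus between $\|z\|=c_N$ and $\|\vartheta-\vartheta(\delta_0)\|\le\eta$, no quadratic expansion can hold, because the criterion flattens out there. Nor can the argmax continuous mapping theorem replace this step: it takes tightness of $\widehat z$ as a hypothesis, which is how the paper uses it for Theorem~\ref{thm:beta-distribution}.

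What is missing is an explicit condition on that annulus. For example: for every $\epsilon>0$ there are $c,\eta>0$ with $\limsup_N P\{\inf\mathbb M_N(z)\le 0\}\le\epsilon$, the infimum taken over $\|z\|\ge c$ with $\|\mathcal Q^{\vartheta}_{\delta_0}(\mathcal D^{\vartheta}_{N,\delta_0})^{-1}z\|\le\eta$. Combined with $\mathbb M_N(\widehat z)\le 0$ and consistency, this gives $P(\|\widehat z\|>c)\le\epsilon+o(1)$ at once. Alternatively, derive the rate from the exact eigenvector equations, as in Johansen's analysis.

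Two smaller points. Positive definiteness of $\Xi_{\delta_0}$ cannot be derived through \eqref{eq:parameter-hessian-induced}: its $o_p(N^2h)$ remainder becomes $o_p(h^{-1})$ in the slow block after rescaling, so it should simply be the local-identification hypothesis. And \eqref{eq:gevp}, with $\widehat S_{00}^{-1}$, solves the determinant rather than the trace version of \eqref{eq:profile_criterion}. Your ``equivalently'' therefore requires $\widehat{\mathcal L}_{\delta_0}$ to be the log-determinant criterion; the paper glosses over the same point.
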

The proof is in the Appendix A.\\

The decomposition into fast and slow parameter directions is stated for the general $p$-dimensional, rank-$r$ case. In the bivariate rank-one case, $m=r(p-r)=1$, so the parameter-space scaling reduces to the single fast rate $N\sqrt h$.\\

For the adjustment coefficient matrix, $\widehat\beta(\delta_0)$ is super-consistent, so its estimation error is asymptotically negligible in the first-order expansion of $\widehat\alpha(\delta_0)$. Define
\begin{equation}
\Psi(S_{01},S_{11},\beta)
=
S_{01}\beta\big(\beta'S_{11}\beta\big)^{-1}.
\label{eq:alpha-map}
\end{equation}
Then
\begin{equation}
\widehat\alpha(\delta_0)
=
\Psi\big(
\widehat S_{01}(\delta_0),
\widehat S_{11}(\delta_0),
\widehat\beta(\delta_0)
\big).
\label{eq:alpha-estimator-map}
\end{equation}

Assume that there exist finite matrices $S_{01}(\delta_0)$ and
\begin{equation}
M_{\delta_0}
:=
\beta(\delta_0)'S_{11}(\delta_0)\beta(\delta_0),
\label{eq:cointegrating-s11-limit}
\end{equation}
with $M_{\delta_0}$ nonsingular, such that the stationary local moment vector entering the first-order expansion of $\widehat\alpha(\delta_0)$ obeys a root-$Nh$ central limit theorem.

\begin{theorem}[Asymptotic convergence of $\widehat\alpha(\delta_0)$]
\label{thm:alpha-convergence}
Suppose the conditions of Theorem \ref{thm:beta-convergence} hold. In addition, suppose that the stationary local moment vector underlying the map $\Psi$ satisfies a root-$Nh$ central limit theorem. Then
\begin{equation}
\widehat\alpha(\delta_0)-\alpha(\delta_0)
=
\mathcal A_{N,1}(\delta_0)
+
\mathcal A_{N,2}(\delta_0)
+
o_p\!\big((Nh)^{-1/2}\big),
\label{eq:alpha-asymptotic-linear-representation}
\end{equation}
where $\mathcal A_{N,1}(\delta_0)$ is linear in
$\widehat S_{01}(\delta_0)-S_{01}(\delta_0)$ and $\mathcal A_{N,2}(\delta_0)$ is linear in the stationary component of
$\widehat S_{11}(\delta_0)-S_{11}(\delta_0)$.

In particular,
\begin{equation}
\sqrt{Nh}\,
\operatorname{vec}\big(
\widehat\alpha(\delta_0)-\alpha(\delta_0)
\big)
=O_p(1).
\label{eq:alpha-root-Nh-rate}
\end{equation}
\end{theorem}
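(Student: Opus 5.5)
The argument is a first-order (delta-method) expansion of the smooth map $\Psi$ around the population point $(S_{01}(\delta_0),S_{11}(\delta_0),\beta(\delta_0))$, which splits $\widehat\alpha(\delta_0)-\alpha(\delta_0)$ into three pieces. One piece comes from $\widehat\beta$ and is removed using Theorem~\ref{thm:beta-convergence}. The other two come from the stationary moments and are controlled by the assumed root-$Nh$ CLT.

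\textbf{Step 1: the population identity.} From the residualised local VECM \eqref{eq:local_vecm}, Lemma~\ref{lem:local-population-equivalence}(i) and (iii) let me write $R_{0t}=\alpha(\delta_0)\beta(\delta_0)'R_{1t}+\varepsilon_t+r_t$, with a bias term $\max\|r_t\|=O(h^\gamma)$ on the kernel window plus $o_p(1)$ residualisation error. Hence $S_{01}(\delta_0)\beta(\delta_0)=\alpha(\delta_0)M_{\delta_0}$, that is, $\Psi(S_{01},S_{11},\beta)=\alpha(\delta_0)$.

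\textbf{Step 2: expansion in the stationary moments.} Write $\widehat\beta=\beta+\Delta_\beta$, $\widehat S_{01}\beta=S_{01}\beta+E_{01}$ and $\beta'\widehat S_{11}\beta=M_{\delta_0}+E_{11}$. Expanding the inverse by the resolvent identity gives
\[
\widehat\alpha-\alpha=\bigl(E_{01}-\alpha E_{11}\bigr)M_{\delta_0}^{-1}+\mathcal R_\beta+O_p\bigl(\|E\|^2\bigr).
\]
I set $\mathcal A_{N,1}=E_{01}M_{\delta_0}^{-1}$, which is linear in $\widehat S_{01}-S_{01}$, and $\mathcal A_{N,2}=-\alpha E_{11}M_{\delta_0}^{-1}$. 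The second is linear in $\beta'(\widehat S_{11}-S_{11})\beta$, which is precisely the stationary component, because $\beta'R_{1t}$ is $I(0)$. By the root-$Nh$ CLT hypothesis, $E_{01}$ and $E_{11}$ are $O_p((Nh)^{-1/2})$. The quadratic remainder is therefore $O_p((Nh)^{-1})=o_p((Nh)^{-1/2})$. The smoothing bias $O(h^\gamma)$ must also be negligible; I would either absorb it into the centring of the CLT or impose undersmoothing $\sqrt{Nh}\,h^\gamma\to0$, and I would state explicitly which choice is made.

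\textbf{Step 3: the $\widehat\beta$ term (main obstacle).} The term $\mathcal R_\beta$ gathers $\widehat S_{01}\Delta_\beta$ together with the cross terms $\beta'\widehat S_{11}\Delta_\beta$ and $\Delta_\beta'\widehat S_{11}\beta$ in the inverse. The difficulty is that $\widehat S_{11}$ and $\widehat S_{01}$ contain nonstationary directions: $R_{1t}$ is $I(1)$, so the weighted moments blow up along $\mathcal Q_{\delta_0}^x$.

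To handle this, I rotate. Since $\Delta_\beta=(0,\Delta_C')'$, I insert $\widehat{\mathcal Q}^x_{\delta_0}(\mathcal D^x_{N,\delta_0})^{-1}\mathcal D^x_{N,\delta_0}\widehat{\mathcal Q}^{x\prime}_{\delta_0}$. Then \eqref{eq:regressor-space-rotated-limit} bounds the level moment, suitably normalised by $\sum_t K_{t,h}\asymp Nh$. Theorem~\ref{thm:beta-convergence} gives $\mathcal D^\vartheta_{N,\delta_0}\mathcal Q^{\vartheta\prime}_{\delta_0}(\widehat\vartheta-\vartheta)=O_p(1)$. Combining the two, $\beta'\widehat S_{11}\Delta_\beta$ has the order of a normalised stationary-by-$I(1)$ cross moment, $O_p(1)$ in the rotated coordinates, times the rate of $\Delta_\beta$. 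That product should be $O_p((N\sqrt h)^{-1}\cdot N\sqrt h/(Nh))$, or smaller in the slow directions.

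The hard part is matching each fast and slow direction of $\Delta_\beta$ to the corresponding growth of the level moment so that every product is $o_p((Nh)^{-1/2})$. For example, the slow block contributes $(Nh)^{-1}\cdot O_p(1)$, which is fine. The fast block requires that the $I(0)\times I(1)$ cross moment, normalised by $Nh$, be $O_p(N\sqrt h/(Nh))$. If that balance fails, I would fall back on using $\widehat\alpha$'s own first-order condition: it makes the score in the $\beta$-direction orthogonal, so the $\mathcal R_\beta$ contribution vanishes to first order. This is the usual Johansen argument that the superconsistent $\widehat\beta$ can be treated as known.

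Concluding, $\mathcal R_\beta=o_p((Nh)^{-1/2})$, which gives \eqref{eq:alpha-asymptotic-linear-representation}. Multiplying by $\sqrt{Nh}$, vectorising, and applying the CLT to $(E_{01},E_{11})$ yields \eqref{eq:alpha-root-Nh-rate}.
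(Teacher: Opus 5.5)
Your proposal follows essentially the same route as the paper's proof. Both expand $\Psi$ to first order at $(S_{01}(\delta_0),S_{11}(\delta_0),\beta(\delta_0))$, obtain $\mathcal A_{N,1}(\delta_0)$ and $\mathcal A_{N,2}(\delta_0)$ at rate $(Nh)^{-1/2}$ from the $\widehat S_{01}$ and stationary $\widehat S_{11}$ pieces, and drop the $\widehat\beta$ piece by superconsistency, since $(Nh)^{-1}=o((Nh)^{-1/2})$.

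Your treatment is in fact more careful than the paper's. The paper treats $D_\beta\Psi$ as a bounded derivative, ignoring the growth of $\widehat S_{01}$ and $\widehat S_{11}$ along the $I(1)$ directions, and mentions the $O(h^\gamma)$ bias only in passing. Your balance $h^{-1/2}\cdot(N\sqrt h)^{-1}=(Nh)^{-1}$ for the fast component $\mathcal Q_{\delta_0}'\Delta_\beta$ does close. The needed $O_p(N\sqrt h)$ order of the kernel-weighted $I(0)\times I(1)$ cross moment follows from the same order calculation the paper uses for the score. It does not follow from \eqref{eq:regressor-space-rotated-limit}, which treats $\beta'R_{1t}$ as if it were integrated. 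So the first-order-condition fallback is unnecessary.
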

The proof is in the Appendix A.

\subsection{Asymptotic distribution of the local estimators}
\label{subsec:asymptotic-distribution-estimators}

This subsection strengthens the convergence-rate results above into distributional statements. The consistency result in Theorem \ref{thm:consistency} ensures that the local estimators are centred around the true post-intervention coefficients at the fixed interior point $\delta_0$, while Theorems \ref{thm:beta-convergence} and \ref{thm:alpha-convergence} provide the relevant convergence rates. The remaining step is to characterise the limiting law of the scaled estimation errors. The following result is stated in a high-level form because the explicit Brownian functional depends on the residualisation step and on the local rotation used to remove the kernel-induced degeneracy.\\

The distribution theory is nonstandard for $\widehat\beta(\delta_0)$. The reason is that the local reduced-rank problem contains the integrated regressor $X_{t-k}$. Under Assumption 3.5,
\[
N^{-1/2}X_{\lfloor Ts\rfloor}
\Rightarrow 
W(s), \qquad s \in[t^*/T,1]
\]
so the local signal matrix is driven by the Brownian path of the integrated component. Hence the limiting distribution of the local cointegrating vector is generally a Brownian functional rather than a conventional Gaussian distribution. This is the same source of nonstandard behaviour that arises in kernel estimation of smooth time-varying cointegration models, where the local weighted signal matrix generated by nonstationary regressors becomes asymptotically degenerate in unrotated coordinates.\\

Recall that
\[
b_{\delta_0}=W\bigg(\frac{\delta_0-\delta^\ast}{1-\delta^\ast}\bigg),
\qquad
\mathcal{Q}_{\delta_0}=\frac{b_{\delta_0}}{\|b_{\delta_0}\|},
\]
and that $\mathcal{Q}_{\delta_0}^{x}$ rotates the regressor space into the Brownian direction $\mathcal{Q}_{\delta_0}$ and its orthogonal complement. The corresponding parameter-space rotation and scaling are given by $\mathcal{Q}_{\delta_0}^{\vartheta}$ and $\mathcal D_{N,\delta_0}^{\vartheta}$, as defined in \eqref{eq:parameter-space-rotation} and \eqref{eq:parameter-space-scaling}.\\

For the distributional result, strengthen \eqref{eq:parameter-hessian-limit} and \eqref{eq:parameter-score-limit} to joint weak convergence:
\begin{equation}
\left(
(\mathcal D_{N,\delta_0}^{\vartheta})^{-1}
\mathcal{Q}_{\delta_0}^{\vartheta\prime}
\mathcal H_N^{\vartheta}(\delta_0)
\mathcal{Q}_{\delta_0}^{\vartheta}
(\mathcal D_{N,\delta_0}^{\vartheta})^{-1},
\,
(\mathcal D_{N,\delta_0}^{\vartheta})^{-1}
\mathcal{Q}_{\delta_0}^{\vartheta\prime}
\mathcal S_N^{\vartheta}(\delta_0)
\right)
\Rightarrow
\left(
\Xi_{\delta_0},
\Upsilon_{\delta_0}
\right),
\label{eq:joint-score-hessian-limit}
\end{equation}
where $\Xi_{\delta_0}$ is nonsingular with probability one.\\

More explicitly, the Hessian limit can be represented at a high level as
\begin{equation}
\Xi_{\delta_0}
=
\mathcal J_{\delta_0}'
\left[
I_r\otimes
\Delta_{\delta_0}^{x}
\right]
\mathcal J_{\delta_0},
\label{eq:Xi-brownian-functional}
\end{equation}
where $\mathcal J_{\delta_0}$ denotes the Jacobian induced by the normalisation map
\[
\vartheta(\delta_0)\mapsto \operatorname{vec}\big(\beta(\delta_0)\big)
\]
after rotation into the parameter space, and
\[
\Delta_{\delta_0}^{x}
=
\int K(s)\,
\mathcal X_{\delta_0}(s)\mathcal X_{\delta_0}(s)'\,ds
\]
where $\mathcal X_{\delta_0}(s)$ denotes the limiting rotated local regressor process after the fast-slow rescaling in $\mathcal D_{N,\delta_0}^{x}$. Its exact expression depends on the particular residualisation of $X_{t-k}$ with respect to $Z_t$.\\

Similarly, the score limit can be written as
\begin{equation}
\Upsilon_{\delta_0}
=
\mathcal J_{\delta_0}'
\operatorname{vec}
\left[
\int K(s)\,
\mathcal X_{\delta_0}(s)
\,dW_{\delta_0}(s)'
\right],
\label{eq:Upsilon-brownian-functional}
\end{equation}
where $W_{\delta_0}$ denotes the Brownian innovation component for fixed $\delta_0$.

\begin{theorem}[Asymptotic distribution of $\widehat\beta(\delta_0)$]
\label{thm:beta-distribution}
Suppose the conditions of Theorem \ref{thm:beta-convergence} hold. Suppose further that the joint score-Hessian convergence in \eqref{eq:joint-score-hessian-limit} holds and that the limiting quadratic criterion has a unique minimiser. Then
\begin{equation}
\mathcal D_{N,\delta_0}^{\vartheta}
\mathcal{Q}_{\delta_0}^{\vartheta\prime}
\big(
\widehat\vartheta(\delta_0)-\vartheta(\delta_0)
\big)
\Rightarrow
-\Xi_{\delta_0}^{-1}\Upsilon_{\delta_0}.
\label{eq:theta-distribution}
\end{equation}
Equivalently,
\begin{equation}
\mathcal D_{N,\delta_0}^{\vartheta}
\mathcal{Q}_{\delta_0}^{\vartheta\prime}
\big(
\widehat\vartheta(\delta_0)-\vartheta(\delta_0)
\big)
\Rightarrow
-\bigg[ \mathcal J_{\delta_0}'
\left[
I_r\otimes
\Delta_{\delta_0}^{x}
\right]
\mathcal J_{\delta_0}  \bigg]^{-1}
\mathcal J_{\delta_0}'
\operatorname{vec}
\left[
\int K(s)\,
\mathcal X_{\delta_0}(s)
\,dW_{\delta_0}(s)'
\right].
\label{eq:theta-brownian-functional-limit}
\end{equation}
Since $\vartheta(\delta_0)=\operatorname{vec}(C(\delta_0))$ and
\[
\beta(\delta_0)
=
\begin{pmatrix}
I_r\\
C(\delta_0)
\end{pmatrix},
\]
the same Brownian-functional limit applies to the lower block of
$\widehat\beta(\delta_0)-\beta(\delta_0)$.
\end{theorem}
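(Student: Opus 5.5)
The plan is the standard argmin-continuous-mapping argument for extremum estimators, applied in the rotated and rescaled local coordinates. Define the local parameter
\[
z_N:=\mathcal D_{N,\delta_0}^{\vartheta}\mathcal{Q}_{\delta_0}^{\vartheta\prime}\big(\widehat\vartheta(\delta_0)-\vartheta(\delta_0)\big),
\]
so that $\widehat\vartheta(\delta_0)=\vartheta(\delta_0)+\mathcal Q_{\delta_0}^{\vartheta}(\mathcal D_{N,\delta_0}^{\vartheta})^{-1}z_N$. This uses the orthogonality of $\mathcal Q_{\delta_0}^{\vartheta}$. Then $z_N$ minimises the centred criterion
\[
\mathbb L_N(z):=\widehat{\mathcal L}_{\delta_0}\big(\vartheta(\delta_0)+\mathcal Q_{\delta_0}^{\vartheta}(\mathcal D_{N,\delta_0}^{\vartheta})^{-1}z\big)-\widehat{\mathcal L}_{\delta_0}\big(\vartheta(\delta_0)\big).
\]

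\textbf{Step 1 (quadratic approximation).} By Assumption 3.6, uniformly on each compact set $\{\|z\|\le c\}$,
\[
\mathbb L_N(z)=z'\,s_N+\tfrac12 z' H_N z+o_p(1),
\]
where $s_N:=(\mathcal D_{N,\delta_0}^{\vartheta})^{-1}\mathcal Q_{\delta_0}^{\vartheta\prime}\mathcal S_N^{\vartheta}(\delta_0)$ and $H_N$ is the scaled rotated Hessian. The sign convention is the one fixed in \eqref{eq:scaled-local-expansion}. I will state the limiting objective consistently with the claimed limit $-\Xi^{-1}_{\delta_0}\Upsilon_{\delta_0}$: its minimiser must be $-H^{-1}s$, which is the case for $z's+\tfrac12 z'Hz$.

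\textbf{Step 2 (joint convergence of the quadratic process).} By the joint convergence \eqref{eq:joint-score-hessian-limit} and the continuous mapping theorem, the process $z\mapsto z's_N+\tfrac12 z'H_Nz$ converges weakly in $\ell^\infty(\{\|z\|\le c\})$ to
\[
\mathbb L(z):=z'\Upsilon_{\delta_0}+\tfrac12 z'\Xi_{\delta_0}z .
\]
This holds because the map $(H,s)\mapsto(z\mapsto z's+\tfrac12 z'Hz)$ is continuous into the sup-norm space on compacts. Since $\Xi_{\delta_0}$ is nonsingular and positive definite with probability one (it is the limit of a Hessian at a minimum, and has the Gram form \eqref{eq:Xi-brownian-functional}), $\mathbb L$ has the unique minimiser $z^\ast=-\Xi_{\delta_0}^{-1}\Upsilon_{\delta_0}$. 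This is also the uniqueness condition assumed in the statement.

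\textbf{Step 3 (argmin continuous mapping).} Theorem~\ref{thm:beta-convergence} gives $z_N=O_p(1)$, which is exactly the tightness condition needed for an argmin continuous mapping theorem (e.g. van der Vaart--Wellner, Thm 3.2.2). That theorem applies because $z_N$ minimises $\mathbb L_N$ (at least up to $o_p(1)$), $\mathbb L_N$ converges on compacts, and the limit has a unique, tight minimiser. It yields $z_N\Rightarrow-\Xi_{\delta_0}^{-1}\Upsilon_{\delta_0}$, which is \eqref{eq:theta-distribution}. If a more elementary route is preferred, I would instead compare $\mathbb L_N(z_N)$ with $\mathbb L_N(z_N^\ast)$, where $z_N^\ast=-H_N^{-1}s_N$. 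Completing the square, together with eigenvalues of $H_N$ bounded away from zero in probability, gives $\|z_N-z_N^\ast\|=o_p(1)$, and then Slutsky's theorem finishes the argument.

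\textbf{Step 4 (explicit form of the limit).} The equivalent form \eqref{eq:theta-brownian-functional-limit} follows by substituting the representations \eqref{eq:Xi-brownian-functional} and \eqref{eq:Upsilon-brownian-functional} for $\Xi_{\delta_0}$ and $\Upsilon_{\delta_0}$. These are consistent with the induced-Hessian and induced-score relations \eqref{eq:parameter-hessian-induced}--\eqref{eq:parameter-score-induced}, since the remainder terms there vanish after scaling by $\mathcal D_{N,\delta_0}^{\vartheta}$. The statement for the lower block of $\widehat\beta(\delta_0)-\beta(\delta_0)$ is immediate, because $C\mapsto\beta$ is affine with identity Jacobian on the free block.

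\textbf{Main obstacle.} The delicate point is that $\Xi_{\delta_0}$ and $\Upsilon_{\delta_0}$ are random and depend on the same Brownian path $W$. Moreover, the rotation $\mathcal Q_{\delta_0}^{\vartheta}$ is itself random, being built from $b_{\delta_0}$. So the convergence must be joint, and stable enough to support the argmin theorem with random limits. I would handle this by relying on the joint convergence \eqref{eq:joint-score-hessian-limit} as stated, and by applying the argmin theorem pathwise through a Skorokhod representation. I would also check that replacing the population rotation by its sample analogue introduces only an $o_p(1)$ change in $z_N$, which follows from the consistency of $\widehat{\mathcal Q}_{\delta_0}^{x}$ combined with $z_N=O_p(1)$.
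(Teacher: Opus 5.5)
Your proposal is correct and follows the paper's own argument. You rotate and rescale into local coordinates, combine Assumption~3.6 with the joint convergence \eqref{eq:joint-score-hessian-limit} to get a quadratic limit process, apply the argmin continuous mapping theorem with tightness from Theorem~\ref{thm:beta-convergence} to obtain $-\Xi_{\delta_0}^{-1}\Upsilon_{\delta_0}$, and read off the explicit form by substituting \eqref{eq:Xi-brownian-functional}--\eqref{eq:Upsilon-brownian-functional}.

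Your sign convention $z's_N+\tfrac12 z'H_Nz$ is in fact the coherent one. The paper keeps the sign of \eqref{eq:scaled-local-expansion} and calls $-z'\Upsilon_{\delta_0}-\tfrac12 z'\Xi_{\delta_0}z$ strictly convex, although $\Xi_{\delta_0}$ is a nonsingular, hence positive definite, Gram matrix. Two of your side remarks are unneeded and wrong as justified; the main argument does not depend on them.
\begin{itemize}
\item After the slow-block scaling by $Nh$, the remainders in \eqref{eq:parameter-hessian-induced}--\eqref{eq:parameter-score-induced} are only known to be $o_p(h^{-1})$ and $o_p(h^{-1/2})$, so they do not vanish.
\item Replacing $\mathcal Q_{\delta_0}^{\vartheta}$ by a sample rotation would need accuracy $o_p(\sqrt h)$, not mere consistency, because slow-direction error enters the fast coordinates multiplied by $h^{-1/2}$.
\end{itemize}
Neither is needed, since the theorem is stated with the population rotation and takes the limit representations as given.
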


The proof is in the Appendix A.\\

Theorem \ref{thm:beta-distribution} shows that the local cointegrating-vector estimator has a mixed, generally non-Gaussian, limit distribution. The randomness enters through two channels. First, the local Hessian depends on the Brownian level signal of the integrated regressor. Second, the local score depends on the stochastic integral between the local Brownian regressor signal and the innovation Brownian motion. This is why the convergence result in Theorem \ref{thm:beta-convergence} gives rates, while the distributional result requires explicitly retaining the Brownian-functional form.\\

Next consider the adjustment matrix. Recall that
\[
\widehat\alpha(\delta_0)
=
\Psi
\big(
\widehat S_{01}(\delta_0),
\widehat S_{11}(\delta_0),
\widehat\beta(\delta_0)
\big),
\qquad
\Psi(S_{01},S_{11},\beta)
=
S_{01}\beta(\beta'S_{11}\beta)^{-1}.
\]
Although $\widehat\beta(\delta_0)$ has a nonstandard Brownian-functional limit, its estimation error is of order $O_p((Nh)^{-1})$ in the slowest direction by Theorem \ref{thm:beta-convergence}. This is smaller than the root-$Nh$ scale relevant for $\widehat\alpha(\delta_0)$. Therefore the first-order distribution of $\widehat\alpha(\delta_0)$ is driven by the stationary local moments entering $\widehat S_{01}(\delta_0)$ and the cointegrating component of $\widehat S_{11}(\delta_0)$.\\

Assume that the stationary local moment vector in Theorem \ref{thm:alpha-convergence} satisfies the central limit theorem
\begin{equation}
\sqrt{Nh}
\begin{pmatrix}
\operatorname{vec}
\big(
\widehat S_{01}(\delta_0)-S_{01}(\delta_0)
\big)
\\[0.3em]
\operatorname{vec}
\big(
\widehat S_{11}^{c}(\delta_0)-S_{11}^{c}(\delta_0)
\big)
\end{pmatrix}
\Rightarrow
\mathcal N(0,\Omega_{\alpha,\delta_0}),
\label{eq:alpha-stationary-moment-clt}
\end{equation}
where $\widehat S_{11}^{c}(\delta_0)$ denotes the stationary cointegrating component of $\widehat S_{11}(\delta_0)$ entering
\[
\widehat\beta(\delta_0)'
\widehat S_{11}(\delta_0)
\widehat\beta(\delta_0).
\]

\begin{theorem}[asymptotic distribution of $\widehat\alpha(\delta_0)$]
\label{thm:alpha-distribution}
Suppose the conditions of Theorem \ref{thm:alpha-convergence} hold. Suppose further that the stationary local moment CLT in \eqref{eq:alpha-stationary-moment-clt} holds and that the local smoothing bias is negligible at the root-$Nh$ scale:
\begin{equation}
\sqrt{Nh}\,h^\gamma\to 0.
\label{eq:bias-negligible-alpha}
\end{equation}
Then
\begin{equation}
\sqrt{Nh}\,
\operatorname{vec}
\big(
\widehat\alpha(\delta_0)-\alpha(\delta_0)
\big)
\Rightarrow
\mathcal N
\big(
0,
\mathcal V_{\alpha,\delta_0}
\big),
\label{eq:alpha-distribution}
\end{equation}
where
\begin{equation}
\mathcal V_{\alpha,\delta_0}
=
\mathcal W_{\alpha,\delta_0}
\Omega_{\alpha,\delta_0}
 \mathcal W_{\alpha,\delta_0}',
\label{eq:alpha-asymptotic-variance}
\end{equation}
$\mathcal W_{\alpha,\delta_0}$ denotes the derivative matrix of the map $\Psi$ with respect to stationary local moments, evaluated at
\[
\big(
S_{01}(\delta_0),
S_{11}(\delta_0),
\beta(\delta_0)
\big).
\]
\end{theorem}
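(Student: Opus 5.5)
The argument is a delta-method step built on the asymptotically linear representation of Theorem \ref{thm:alpha-convergence}. The idea is to treat $\widehat\alpha(\delta_0)=\Psi(\widehat S_{01}(\delta_0),\widehat S_{11}(\delta_0),\widehat\beta(\delta_0))$ as a smooth function of three inputs and show that only the stationary local moments matter at the $\sqrt{Nh}$ scale. There are three steps: separate the stochastic error from the smoothing bias, linearise $\Psi$ around the population point, and apply the CLT \eqref{eq:alpha-stationary-moment-clt} together with the continuous mapping theorem.

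\textbf{Step 1: bias.} Let $\alpha(\delta_0)=\Psi(S_{01}(\delta_0),S_{11}(\delta_0),\beta(\delta_0))$ be the population identity at $\delta_0$. The pseudo-true moments implied by the kernel window, built from the population residuals $R^\circ_{it}(\delta_0)$, differ from the moments of the model frozen at $\delta_0$. By Lemma \ref{lem:local-population-equivalence}(i), $\|\Pi(t/T)-\Pi(\delta_0)\|=O(h^\gamma)$ on the support of the kernel weights. Hence the centring error of the stationary moments, and therefore of $\Psi$, is $O(h^\gamma)$. Condition \eqref{eq:bias-negligible-alpha} makes $\sqrt{Nh}\,O(h^\gamma)=o(1)$, so this term can be dropped. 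By Lemma \ref{lem:local-population-equivalence}(ii)--(iii), replacing $R_{it}$ with $R^\circ_{it}$ in $\widehat S_{01}$ and $\widehat S_{11}^c$ costs $o_p((Nh)^{-1/2})$. I would absorb this replacement into the high-level CLT \eqref{eq:alpha-stationary-moment-clt}, which is stated for the feasible moments.

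\textbf{Step 2: linearisation.} Expand $\Psi$ to first order in the three arguments. First I handle the $\beta$-argument. By Theorem \ref{thm:beta-convergence}, $\widehat\beta(\delta_0)-\beta(\delta_0)=O_p((Nh)^{-1})$. In $\widehat S_{01}\widehat\beta$, the perturbation enters only through $\widehat S_{01}(\widehat\beta-\beta)$. The factor $\widehat S_{01}$ involves $X_{t-k}$, but it is normalised by $\sum K_{t,h}\asymp Nh$ and so is $O_p(\sqrt N)$ at most. The resulting term is $O_p(\sqrt N/(Nh))$, which I must show is $o_p((Nh)^{-1/2})$. The crude bound does not suffice, because it requires $h\to\infty$. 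I therefore plan to use the fast/slow decomposition. In the Brownian direction $\mathcal Q_{\delta_0}$ the rate is $(N\sqrt h)^{-1}$, and the nonstationary part of $\widehat S_{01}$ is $O_p(\sqrt N)$ only along $b_{\delta_0}$. Along that direction the product is $O_p(\sqrt N/(N\sqrt h))=O_p((Nh)^{-1/2}\cdot N^{0}h^{0})$, which still needs the extra cancellation. I expect the key fact to be that $\beta(\delta_0)'X_{t-k}$ is stationary and that $\widehat S_{01}$ projected on $\beta_\perp$ is an $O_p(\sqrt{h})$-small martingale cross term. Controlling this cross term gives the required $o_p((Nh)^{-1/2})$. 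The same argument applies to $\widehat\beta'\widehat S_{11}\widehat\beta-\beta'\widehat S_{11}\beta$. There the quadratic term $(\widehat\beta-\beta)'\widehat S_{11}(\widehat\beta-\beta)$ is $O_p(N\cdot(N\sqrt h)^{-2})=O_p((Nh)^{-1})$, and the cross terms reduce to the same martingale-type products. This is the content already asserted in Theorem \ref{thm:alpha-convergence}, which I may invoke directly. It gives the representation $\mathcal A_{N,1}+\mathcal A_{N,2}+o_p((Nh)^{-1/2})$.

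\textbf{Step 3: limit.} The terms $\mathcal A_{N,1}$ and $\mathcal A_{N,2}$ are linear in $\operatorname{vec}(\widehat S_{01}-S_{01})$ and $\operatorname{vec}(\widehat S^c_{11}-S^c_{11})$. Differentiating $\Psi$ gives the linear map $\mathcal W_{\alpha,\delta_0}$. Explicitly, $d\Psi=dS_{01}\,\beta M_{\delta_0}^{-1}-S_{01}\beta M_{\delta_0}^{-1}(\beta'dS_{11}\beta)M_{\delta_0}^{-1}$, where $M_{\delta_0}$ is nonsingular by assumption. In vec form,
\[
\mathcal W_{\alpha,\delta_0}=\Big(M_{\delta_0}^{-1}\beta'\otimes I_p,\;-(M_{\delta_0}^{-1}\beta'\otimes \alpha(\delta_0)M_{\delta_0}^{-1}\beta')\Big),
\]
with $\beta=\beta(\delta_0)$, using $S_{01}\beta M^{-1}=\alpha$. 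Then
\[
\sqrt{Nh}\operatorname{vec}(\widehat\alpha-\alpha)=\mathcal W_{\alpha,\delta_0}\sqrt{Nh}\,(\cdot)+o_p(1).
\]
The CLT \eqref{eq:alpha-stationary-moment-clt} and Slutsky's lemma then give $\mathcal N(0,\mathcal W\Omega\mathcal W')$.

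The main obstacle is the negligibility of the $\widehat\beta$-error in Step 2, where the nonstationary parts of $\widehat S_{01}$ and $\widehat S_{11}$ interact with the slow-rate error of $\widehat\beta$. I would lean on Theorem \ref{thm:alpha-convergence} to absorb this rather than re-derive it.
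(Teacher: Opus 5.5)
Your proposal follows the paper's proof: take the linear representation $\mathcal A_{N,1}+\mathcal A_{N,2}+o_p((Nh)^{-1/2})$ from Theorem \ref{thm:alpha-convergence}, drop the centring bias using $\sqrt{Nh}\,h^\gamma\to0$, and conclude from \eqref{eq:alpha-stationary-moment-clt} by Slutsky and the continuous mapping theorem; like the paper, you absorb the $\widehat\beta$-error into that theorem, and your remark that this needs more than the bare comparison $(Nh)^{-1}=o((Nh)^{-1/2})$, because $D_\beta\Psi$ involves the nonstationary parts of $\widehat S_{01}$ and $\widehat S_{11}$, is a fair point the paper passes over. One small algebra slip: the $S_{11}$-differential is $-\alpha\beta'\,dS_{11}\,\beta M_{\delta_0}^{-1}$, so the second block of your explicit $\mathcal W_{\alpha,\delta_0}$ should be $-(M_{\delta_0}^{-1}\beta'\otimes\alpha(\delta_0)\beta')$ rather than $-(M_{\delta_0}^{-1}\beta'\otimes\alpha(\delta_0)M_{\delta_0}^{-1}\beta')$.
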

The proof is in the Appendix A.

\section{Testing algorithm for the estimators}
\label{Testing algorithm3}

This section develops two bootstrap-based tests for post-intervention structural change. The first test examines whether the post-intervention coefficient functions remain equal to the pre-intervention baseline. The second test examines whether the post-intervention long-run matrix follows a smooth path, rather than displaying an unknown level shift or kink. The first test is related
to time-varying cointegration and parameter-constancy testing in cointegrated VAR systems; see \shortciteA{hansen1999some}, \shortciteA{bierens2010time} and \shortciteA{martins2018bootstrap}. The second test is related to the literature on smooth structural change and time-varying parameter testing; see \shortciteA{andrews1993tests}, \shortciteA{lin1994testing} and
\shortciteA{elliott2006efficient}, as well as fit-loss type specification testing; see \shortciteA{chen2012testing} and \shortciteA{hong2013loss}.

\subsection{Test for post-intervention baseline constancy}
\label{subsec:test1_algorithm}

The purpose of the first test is to assess whether the post-intervention dynamics remain stable relative to the long-run cointegrating relation identified before the intervention. The test compares two local descriptions of the post-intervention system at each evaluation point. The unrestricted local fit allows the long-run matrix to vary over time, whereas the baseline-restricted fit imposes the pre-intervention long-run benchmark on the post-intervention observations. If the intervention does not alter the long-run equilibrium structure, imposing the baseline restriction should not generate a large loss of local fit. A persistent increase in the local fit loss therefore provides evidence that the post-intervention system departs from the pre-intervention benchmark.\\

Under the normalisation
\[
\beta(\delta)=
\begin{pmatrix}
I_r\\
C(\delta)
\end{pmatrix},
\]
the null and alternative hypotheses are
\begin{equation}
H^{\text{Base}}_0:\ 
\alpha(\delta)=\alpha_0,
\quad
\beta(\delta)=\beta_0,
\qquad
\forall \delta\in[\delta^\ast,1],
\label{eq:test1_null}
\end{equation}
against
\begin{equation}
H^{\text{Base}}_1:\ 
\exists\,\delta\in[\delta^\ast,1]
\ \text{such that}\
\big(\alpha(\delta),\beta(\delta)\big)
\neq
\big(\alpha_0,\beta_0\big).
\label{eq:test1_alt}
\end{equation}

Let
\[
\mathbb G_N=\{g_1,\ldots,g_M\}\subset(\delta^\ast,1)
\]
denote an interior evaluation grid on the post-intervention interval. For each
$g_m\in\mathbb G_N$, define the unrestricted local minimum
\begin{equation}
 Q_{U}(g_m)
:=
\min_{\alpha,C}
\mathcal L_{g_m}
\!\left(
\alpha(g_m),
\begin{pmatrix}
I_r\\
C(g_m)
\end{pmatrix}
\right),
\label{eq:test1_qU}
\end{equation}
where $\mathcal L_{g_m}(\cdot,\cdot)$ is the local reduced-rank criterion
defined in \eqref{eq:criterion}. Let
\begin{equation}
\widehat\beta_0
=
\begin{pmatrix}
I_r\\
\widehat C_0
\end{pmatrix}
\label{eq:test1_pre_beta}
\end{equation}
be the pre-intervention Johansen estimator under the same normalisation. The
baseline-restricted local criterion is
\begin{equation}
 Q_{\text{Base}}(g_m)
:=
\mathcal L_{g_m}
\!\left(
\widehat\alpha_0,
\widehat\beta_0
\right).
\label{eq:test1_qB}
\end{equation}
The local baseline-deviation score is
\begin{equation}
 D_{\text{Base}}(g_m)
:=
\widehat N_h(g_m)
\Big |
 Q_{\text{Base}}(g_m)- Q_{U}(g_m)
\Big |,
\label{eq:test1_local_score}
\end{equation}
where
\[
\widehat N_h(g_m)
=
\sum^T_{t=t^*+1}K_{t,h}(g_m)
\]
is the local kernel normalising factor. The global statistic is
\begin{equation}
\mathcal Z_{\text{Base},N}
:=
\frac{1}{M}\sum_{m=1}^{M} D_{\text{Base}}(g_m).
\label{eq:test1_statistic}
\end{equation}
Large values of $\mathcal Z_{\text{Base},N}$ provide evide
nce against $H^{\text{Base}}_0$.\\

The procedure of Test 1 step by step: 
First estimate $(\widehat\alpha_0,\widehat\beta_0)$ from the
pre-intervention subsample $t=1,\ldots,t^*$ by Johansen's maximum-likelihood method, imposing the normalisation in \eqref{eq:test1_pre_beta}. Choose $\mathbb G_N$ and the first-stage bandwidth $h$. For each $g_m\in\mathbb G_N$, compute the local reduced-rank estimator
$(\widehat\alpha(g_m),\widehat\beta(g_m))$ and record $Q_U(g_m)$. Evaluate the same criterion at $(\widehat\alpha_0,\widehat\beta_0)$ to obtain $Q_{\text{Base}}(g_m)$. Then compute $D_{\text{Base}}(g_m)$ and aggregate the local scores to obtain $\mathcal Z_{\text{Base},N}$. The null is rejected at nominal level $\tau$ if
\[
\mathcal Z_{\text{Base},N}>c_{\text{Base},1-\tau}^{\ast},
\]
where $c_{\text{Base},1-\tau}^{\ast}$ is obtained from Algorithm \ref{alg:test1_bootstrap} in Appendix B.\\

The statistic $\mathcal Z_{\text{Base},N}$ compares the unrestricted local post-intervention fit with the fit obtained by imposing the pre-intervention baseline throughout the post-intervention period. It is therefore designed to detect post-intervention structural change from the pre-intervention long-run benchmark.

\subsection{Test for smooth post-intervention evolution of the long-run matrix}
\label{subsec:test2_algorithm}

Test 2 focuses on the internal regularity of the post-intervention long-run
matrix path, rather than on its distance from the pre-intervention benchmark.
Specifically, it asks whether the estimated path of
\[
\Pi(\delta)=\alpha(\delta)\beta(\delta)'
\]
can be interpreted as a smoothly evolving matrix-valued function over the
post-intervention interval. The null hypothesis is formulated in terms of
$C^1$-smoothness. For a functional matrix
$\Pi:[\delta^\ast,1]\to\mathbb R^{p\times p}$, the condition
$\Pi(\cdot)\in C^1([\delta^\ast,1])$ means that $\Pi(\delta)$ is
continuously differentiable on $[\delta^\ast,1]$. Equivalently, both
$\Pi(\delta)$ and
\[
\dot \Pi(\delta)
=
\frac{d\Pi(\delta)}{d\delta}
\]
are continuous over the post-intervention interval. The test is therefore designed to detect whether the post-intervention long-run matrix evolves as a continuously differentiable path or instead contains an unknown jump, kink, or other non-$C^1$ feature.\\

The null and alternative hypotheses are
\begin{equation}
H^{\text{smo}}_0:\ 
\Pi(\cdot)\in C^1([\delta^\ast,1];\mathbb R^{p\times p}),
\label{eq:test2_null}
\end{equation}
against
\begin{equation}
H^{\text{smo}}_1:\ 
\Pi(\cdot)\notin C^1([\delta^\ast,1];\mathbb R^{p\times p}).
\label{eq:test2_alt}
\end{equation}
Thus, the alternative allows for interior level shifts or slope kinks.\\

Let
\[
\mathbb G_N=\{g_1,\ldots,g_M\}\subset(\delta^\ast,1)
\]
denote the first-stage post-intervention evaluation grid. For each
$g_m\in\mathbb G_N$, define
\begin{equation}
\widehat\Pi(g_m)
=
\widehat\alpha(g_m)\widehat\beta(g_m)',
\qquad
m=1,\ldots,M,
\label{eq:test2_pi_hat_grid}
\end{equation}
where $\widehat\alpha(g_m)$ and $\widehat\beta(g_m)$ are the local
reduced-rank estimators from Section \ref{LST3}. Let
\begin{equation}
\widehat\zeta(g_m)
:=
\operatorname{vec}\!\big(\widehat\Pi(g_m)\big)
\in\mathbb R^{p^2},
\qquad
m=1,\ldots,M.
\label{eq:test2_zeta}
\end{equation}

Let $h_{\text{smo}}$ denote the second-stage bandwidth and letn$K_{\text{smo}}(\cdot)$ be a second-stage kernel. For each interior candidate point $g_j\in\mathbb G_N$, define the pooled weights
\begin{equation}
K_{\text{smo},m}(g_j)
=
K_{\text{smo}}\!\left(\frac{g_m-g_j}{h_{\text{smo}}}\right),
\qquad
m=1,\ldots,M,
\label{eq:test2_weight_pooled}
\end{equation}
together with the left- and right-sided weights
\begin{equation}
K_{\text{smo},m}^{-}(g_j)
=
K_{\text{smo},m}(g_j)\mathbf 1\{g_m<g_j\},
\label{eq:test2_weight_left}
\end{equation}
and
\begin{equation}
K_{\text{smo},m}^{+}(g_j)
=
K_{\text{smo},m}(g_j)\mathbf 1\{g_m\ge g_j\}.
\label{eq:test2_weight_right}
\end{equation}

Under $H^{\text{smo}}_0$, the path $\widehat\zeta(\cdot)$ should admit a common local-linear approximation in a neighbourhood of $g_j$. The pooled local-linear criterion is
\begin{equation}
 Q_{\text{smo},P}(g_j)
:=
\min_{a_{0,j},a_{1,j}}
\sum_{m=1}^{M}
K_{\text{smo},m}(g_j)
\left\|
\widehat\zeta(g_m)-a_{0,j}-a_{1,j}(g_m-g_j)
\right\|^2,
\label{eq:test2_qP}
\end{equation}
where $a_{0,j},a_{1,j}\in\mathbb R^{p^2}$. To allow for a failure of $C^1$-smoothness, define the split local-linear criterion
\begin{align}
 Q_{\text{smo},S}(g_j)
:=
\min_{a^{-}_{0,j},a^{-}_{1,j},a^{+}_{0,j},a^{+}_{1,j}}
\Bigg[
&
\sum_{m=1}^{M}
K_{\text{smo},m}^{-}(g_j)
\left\|
\widehat\zeta(g_m)-a^{-}_{0,j}-a^{-}_{1,j}(g_m-g_j)
\right\|^2
\notag\\
&+
\sum_{m=1}^{M}
K_{\text{smo},m}^{+}(g_j)
\left\|
\widehat\zeta(g_m)-a^{+}_{0,j}-a^{+}_{1,j}(g_m-g_j)
\right\|^2
\Bigg],
\label{eq:test2_qS}
\end{align}
where
\[
a^{-}_{0,j},a^{+}_{0,j}\in\mathbb R^{p^2},
\qquad
a^{-}_{1,j},a^{+}_{1,j}\in\mathbb R^{p^2}.
\]
The split fit allows both the local intercept and the local slope to differ across the two sides of $g_j$. Hence,
\[
 Q_{\text{smo},P}(g_j)- Q_{\text{smo},S}(g_j)
\]
is informative about either a jump in $\Pi(\cdot)$ or a kink in its local slope.\\

Define the second-stage local kernel normalising factors
\begin{equation}
\widehat N_{\text{smo}}(g_j)
:=
\sum_{m=1}^{M}K_{\text{smo},m}(g_j),
\label{eq:test2_effective_mass}
\end{equation}
and
\begin{equation}
\widehat N_{\text{smo}}^{-}(g_j)
:=
\sum_{m=1}^{M}K_{\text{smo},m}^{-}(g_j),
\qquad
\widehat N_{\text{smo}}^{+}(g_j)
:=
\sum_{m=1}^{M}K_{\text{smo},m}^{+}(g_j).
\label{eq:test2_effective_mass_sides}
\end{equation}
For each admissible $g_j$, define the local smoothness-violation score
\begin{equation}
 D_{\text{smo}}(g_j)
:=
\widehat N_{\text{smo}}(g_j)
\Big|
 Q_{\text{smo},P}(g_j)- Q_{\text{smo},S}(g_j)
\Big|.
\label{eq:test2_local_score}
\end{equation}
The test statistic is
\begin{equation}
\mathcal Z_{\text{smo},N}
:=
\max_{g_j\in\mathbb G_{N,\text{adm}}}
 D_{\text{smo}}(g_j).
\label{eq:test2_statistic}
\end{equation}
Large values of $\mathcal Z_{\text{smo},N}$ provide evidence against
$H^{\text{smo}}_0$.\\

Test 2 procedure step by step: Choose the first-stage post-intervention grid $\mathbb G_N$, the second-stage bandwidth $h_{\text{smo}}$, the second-stage kernel $K_{\text{smo}}(\cdot)$, and the trimming threshold $\nu_N$. For each $g_m\in\mathbb G_N$, compute the first-stage local estimators $\widehat\alpha(g_m)$ and $\widehat\beta(g_m)$, and form $\widehat\Pi(g_m)$ and $\widehat\zeta(g_m)$. For each $g_j\in\mathbb G_N$, compute $Q_{\text{smo},P}(g_j)$, $Q_{\text{smo},S}(g_j)$, $\widehat N_{\text{smo}}^{-}(g_j)$, and $\widehat N_{\text{smo}}^{+}(g_j)$. Retain only $g_j\in\mathbb G_{N,\text{adm}}$, compute $D_{\text{smo}}(g_j)$, and form
$\mathcal Z_{\text{smo},N}$. The null is rejected at nominal level $\tau$ if
\[
\mathcal Z_{\text{smo},N}>c_{\text{smo},1-\tau}^{\ast},
\]
where $c_{\text{smo},1-\tau}^{\ast}$ is obtained from Algorithm \ref{alg:test2_bootstrap} in A
ppendix B.\\

The statistic $\mathcal Z_{\text{smo},N}$ compares a pooled local-linear approximation with a split left-right local-linear approximation applied to the estimated post-intervention path $\widehat\Pi(\cdot)$. It is therefore tailored to detect failures of $C^1$-smoothness in the post-intervention evolution of the long-run matrix, including both abrupt level shifts and slope discontinuities.

\section{Simulation study}
\label{sec:simulation3}

In this section, we use simulated data process to evaluate the perfomance of two tests.

\subsection{Simulation design and parameter setting}
\label{subsec:simulation_design}

The data are generated from a bivariate error-correction model with cointegration rank $r=1$ and lag order $k=2$:
\begin{equation}
\Delta X_t
=
\alpha(\cdot) \beta(\cdot)' X_{t-2}
+
\Gamma \Delta X_{t-1}
+
\mu
+
\varepsilon_t,
\qquad
t=3,\ldots,T,
\label{eq:sim_dgp_general}
\end{equation}
where $X_t=(X_{1t},X_{2t})'$, $\beta(\cdot)=(1,C(\cdot))'$, and
\[
\Pi(\cdot)=\alpha(\cdot)\beta(\cdot)'.
\]
The total sample size is set to
\[
T\in\{400,1000,2000,10000,20000\},
\]
with corresponding intervention date
\[
t^*\in\{200,500,1000,5000,10000\},
\]
So the post-intervention sample size is 
\[
N \in\{200,500,1000,5000,10000\}.
\]
The reported rejection rates are based on 1000 simulations and 2000 bootstrap samples at significance level $5\%$. For the null DGP, the error term $\varepsilon_t$ is Gaussian with covariance matrix
\begin{equation}
\Sigma_\varepsilon
=
\begin{pmatrix}
0.08 & 0.05\\
0.05 & 0.08
\end{pmatrix}.
\label{eq:sim_innovation_covariance}
\end{equation}
The common short-run dynamics and deterministic term are fixed across all scenarios:
\begin{equation}
\Gamma
=
\begin{pmatrix}
0.06 & 0.01\\
-0.01 & 0.05
\end{pmatrix},
\qquad
\mu
=
\begin{pmatrix}
0\\
0
\end{pmatrix}.
\label{eq:sim_short_run_parameters}
\end{equation}

The pre-intervention baseline parameters are
\begin{equation}
\alpha_0
=
\begin{pmatrix}
-0.12\\
0.08
\end{pmatrix},
\qquad
\beta_0
=
\begin{pmatrix}
1\\
-0.80
\end{pmatrix},
\qquad
\Pi_0=\alpha_0\beta_0'.
\label{eq:sim_baseline_parameters}
\end{equation}

\subsection{Scenario DGP design}
\label{subsec:simulation_scenarios}

Three scenarios are considered. The scenario 1 corresponds to exact post-intervention constancy, so that both the long-run and the short-run structure remain unchanged after the intervention. For $t\geq t^*$,
\begin{equation}
\alpha(\cdot)=\alpha_0=\begin{pmatrix}
-0.12\\
0.08
\end{pmatrix},
\qquad
C=C_0=-0.80,
\qquad
\beta(\cdot)=\beta_0=
\begin{pmatrix}
1\\
-0.80
\end{pmatrix}.
\label{eq:sim_scenario1}
\end{equation}
Hence, Scenario 1 satisfies the null of Test 1.\\

The scenario 2 is designed to represent smooth post-intervention structural evolution. For $t \geq t^*$, let
\begin{equation}
s_t
=0.2*
\left[
1-\cos\!\left(
\frac{t-t^*}{T-t^*}
\right)
\right]
\label{eq:sim_scenario2_transition}
\end{equation}
Then define
\begin{equation}
\alpha(\cdot)
=
\alpha_0
+ s_t
\left(
\alpha_1^{\text{smo}}-\alpha_0
\right),
\qquad
\alpha_1^{\text{smo}}
=
\begin{pmatrix}
-0.16\\
0.10
\end{pmatrix},
\label{eq:sim_scenario2_alpha}
\end{equation}
and
\begin{equation}
C(\cdot)
=
C_0
+
s_t
\left(
-1.20+1.00
\right),
\qquad
\beta(\cdot)=
\begin{pmatrix}
1\\
C(\cdot)
\end{pmatrix}.
\label{eq:sim_scenario2_beta}
\end{equation}
Thus, $\Pi(\cdot)=\alpha(\cdot)\beta(\cdot)'$ changes smoothly over the post-intervention period.\\

The scenario 3 is designed to represent a non-smooth structural break in the post-intervention long-run matrix. Let
\begin{equation}
t_{\text{break}}
=
t^*
+
0.3\,(T-t^*)
\label{eq:sim_scenario3_break}
\end{equation}
denote the time point of abrupt structural change over the post-intervention subsample. For $t^*<t\le t_{\text{break}}$, the system remains as the smooth structural change in Scenario 2:
\begin{equation}
\alpha(\cdot)=\alpha_0
+ s_t
\left(
\alpha_1^{\text{smo}}-\alpha_0
\right),
\qquad
C(\cdot)=-1.00
+
s_t
\left(
-1.20+1.00
\right).
\label{eq:sim_scenario3_left}
\end{equation}
For $t>t_{\text{break}}$, the post-break parameters shift abruptly to
\begin{equation}
\alpha(\cdot)
=
\alpha_1^{\text{Abr}}
=
\begin{pmatrix}
-0.24\\
0.20
\end{pmatrix},
\label{eq:sim_scenario3_alpha}
\end{equation}
and
\begin{equation}
C(\cdot)=-1.80,
\qquad
\beta(\cdot)=
\begin{pmatrix}
1\\
-1.80
\end{pmatrix}.
\label{eq:sim_scenario3_beta}
\end{equation}
Accordingly, $\Pi(\cdot)=\alpha(\cdot)\beta(\cdot)'$ exhibits an interior jump at $t_{\text{break}}$.\\

These three scenarios are chosen so that the two tests can be evaluated against distinct benchmark cases. Scenario 1 is the null for Test 1. Scenario 2 is a power design for Test 1 but a null design for Test 2, since it contains smooth post-intervention change without any non-smooth break. Scenario 3 is a power design for Test 2, since it contains an abrupt post-intervention level shift in the long-run matrix.\\

Hence, in the simulation results, Test 1 should display low rejection frequency in Scenario 1 and high rejection frequencies in Scenarios 2 amd 3. In contrast, Test 2 should display low rejection frequencies in Scenarios 1 and 2, but a high rejection frequency in Scenario 3.

\subsection{Selection of grid spacing, bandwidths and bootstrap block lengths}
\label{rem:simulation_tuning_parameters}
 
The parameters used in the simulation are chosen to follow the 
standard large-sample logic for kernel smoothing and block bootstrap methods, while remaining numerically stable in finite samples.\\

For the first-stage local Johansen estimation, the local window length $l^{\text{win}}_N$ is selected from a finite set of candidate values satisfying
\[
l^{\text{win}}_N\rightarrow\infty,
\qquad
\frac{l^{\text{win}}_N}{N}\rightarrow 0 .
\]
In the simulation, this is implemented by considering bandwidth candidates in a finite set. This rule follows the usual kernel-smoothing principle that the number of observations inside each local window should diverge, while the bandwidth fraction should vanish asymptotically. For local polynomial smoothing, a commonly used reference rate for the bandwidth fraction is of order $N^{-1/5}$, corresponding to a local window length of order $N^{4/5}$, although the finite-sample constants are typically selected by cross-validation or simulation calibration; see \shortciteA{fan1996local} and \shortciteA{fan1996study}.\\

In practice, the grid spacing $\Delta_N$ is selected so that each first-stage kernel window contains several grid intervals, typically corresponding to
\[
\Delta_N \approx \frac{l^{\text{win}}_N}{5}
\quad \text{to} \quad
\frac{l^{\text{win}}_N}{10}.
\]
This choice makes the grid sufficiently dense to estimate the post-intervention coefficient path, while keeping the number of local Johansen estimations reasonable.\\

For the smoothness test, the second-stage local-linear comparison is applied to the estimated path
\[
\widehat\zeta(g_m)
=
\operatorname{vec}\{\widehat\Pi(g_m)\}.
\]
Since this test concerns the regularity of the post-intervention path itself, 
the grid points are expressed on the post-intervention time scale
\[
u_m
=
\frac{t_m-t^*}{N}
\in [0,1],
\]
rather than on the full-sample scale $t_m/T$. Under this normalisation, the second-stage bandwidth $h_{\text{smo}}$ is directly interpretable as a fraction of the post-intervention period. In the simulations, $h_{\text{smo}}$ is selected from a moderate range on this post-intervention scale. Compactly supported kernels are preferred in the second stage, because the statistic is based on comparing pooled and split local-linear fits in a neighbourhood of each candidate point. This avoids excessively wide effective windows that may arise when a truncated Gaussian kernel is used.\\

The moving-block bootstrap block length is selected according to the standard dependent-data bootstrap requirement
\[
l^{\text{boots}}_N\rightarrow\infty,
\qquad
\frac{l^{\text{boots}}_N}{N}\rightarrow 0 .
\]
Rather than using a fixed fraction of $N$, the simulations use the rule
\[
l^{\text{boots}}_N
=
\max\left\{50,\left\lfloor c_l N^{1/3}\right\rfloor\right\},
\]
where $c_l$ is a moderate constant. This choice is consistent with the block-bootstrap literature, where the optimal block length depends on the target functional and may be of order $N^{1/3}$, $N^{1/4}$, or $N^{1/5}$; see \shortciteA{hall1995blocking}, \shortciteA{politis2004automatic}, and \shortciteA{lahiri2003resampling}. The leave-block length used in the first-stage bandwidth cross-validation is chosen on the same order as the bootstrap block length, typically proportional to $N^{1/3}$, so that the validation observation is separated from the observations used in local estimation without removing an excessively large part of the local window.\\

Finally, for the split local-linear fit in Test 2, admissible candidate points are required to have a minimum amount of kernel mass and a minimum number of grid points on both sides. This trimming rule prevents the split fit from being driven by boundary points or by few observations on either side of a 
candidate point.

\subsection{Results and discussion} \label{Results and discussion}

Now we discuss the results of simulation study on the finite-sample performance of two tests. Table~\ref{tab:Power} shows the empirical rejection rates of two tests under the null (size) and under post-intervention alternatives (power) with various sample sizes.\\

For both tests, the size reduces and power increases as the sample size increases. But there are some differences in their finite-sample performance. Test 1 reaches rejection frequencies close to the nominal 5\% level quickly at middle sample sizes and has higher power than test 2. Test 2 also shows decreasing size and increasing power, but its size remains above 0.05 in the reported samples. This is consistent with the fact that Test 2 is computed from the estimated path $\widehat{\Pi}(g_m)$, so its statistic includes both first-stage local estimation error and second-stage split-fit error.\\

The proposed tests differ from existing structural-change tests in cointegrated systems. \shortciteA{seo1998tests} considers tests for structural changes in the cointegrating vector and adjustment vector in an ECM with an unknown break point. These tests are designed for discrete structural changes. By contrast, this paper considers a known intervention time and allows post-intervention parameters to vary over rescaled time. The tests also differ from \shortciteA{phillips2017estimating}, where the null hypothesis is parametric, $H_0:f(\delta)=g(\delta;\theta_0)$, and the statistic is based on an $L^2$-type discrepancy between a kernel estimator and a parametric counterpart. In this paper, Test~1 examines whether the post-intervention parameters remain equal to the pre-intervention benchmark, while Test~2 examines whether the estimated post-intervention path is compatible with $C^1$-smooth evolution.\\

\begin{table}[]
\centering
\begin{tabular}{cccccccccc}
\multicolumn{10}{c}{Performance of Tests under different sample sizes}                           \\ \hline
              &  &       &     &  & \multicolumn{2}{c}{Test 1} &  & \multicolumn{2}{c}{Test 2} \\
\begin{tabular}[c]{@{}c@{}}Sample \\ size\end{tabular} &
   &
  \begin{tabular}[c]{@{}c@{}}Mean \\ Bandwidth \\ $h$\end{tabular} &
  \begin{tabular}[c]{@{}c@{}}Mean \\ Bandwidth \\ $h_{\text{smo}}$ \end{tabular} &
   &
  Size &
  Power &
   &
  Size &
  Power \\ \hline
N=200   &  & 0.400 & 0.600 &  & 0.101        & 0.252       &  & 0.084        & 0.127       \\
N=500  &  & 0.160 & 0.240 &  & 0.046        & 0.524       &  & 0.075        & 0.536       \\
N=1000 &  & 0.080 & 0.120 &  & 0.032        & 0.817       &  & 0.073        & 0.802       \\
N=5000 &  & 0.044 & 0.065 &  & 0.038        & 0.952       &  & 0.078       & 0.883       \\
N=10000 &  & 0.043 & 0.065 &  & 0.039        & 0.981       &  & 0.062        & 0.923       \\ \hline
\multicolumn{10}{c}{The significance level is 0.05.}                                          
\end{tabular}
\caption{Size and power of Tests with different sample sizes. The size of Test 1 is under DGP Scenario 1; the power of Test 1 and the size of Test 2 are under DGP Scenario 2; the power of Test 2 is under DGP Scenario 3.}
\label{tab:Power}
\end{table}

\newpage

\section{Empirical application} \label{Empirical3}

This section studies the change in the long-run cointegrating relationship among key energy, foreign-exchange, and gold-market indicators after the start of Russia’s invasion of Ukraine on 24 February 2022, which is treated as an exogenous intervention date $t^*$. The event provides a natural setting to evaluate whether a major geopolitical shock affects (i) the cointegrating relation among these markets and (ii) whether the change of the long-run relation is smooth. Understanding such changes is relevant for empirical asset-pricing work and for policy discussions, since energy-price shocks can affect inflation and external balances, exchange-rate dynamics can transmit global shocks across economies, and gold-market also reflects shifts in risk appetite.\\

The daily data used are from the Federal Reserve Bank of St.\ Louis FRED database over the sample horizon from 4 January 2016 to 1 January 2026. The vector of observed variables is
\begin{equation}
X_t=(X_{1t},X_{2t},X_{3t})'=\big(\ln P^{\text{Oil}}_
t,\ \ln P^{\text{US/EU}}_t,\ \ln P^{\text{AU}}_t\big)^\prime,
\end{equation}
where $P^{\text{Oil}}_
t$ is the spot price of Brent crude oil in U.S.\ dollars per barrel, $P^{\text{US/EU}}_t$ is the spot exchange rate (USD/EUR), and $P^{\text{AU}}_t$ is the Credit Suisse NASDAQ Gold FLOWS103 Price Index.\\

The choice of these three variables is closely related to the economic transmission channels activated by the outbreak of the Russia--Ukraine war on 24 February 2022. Brent crude oil is included to capture the energy-supply channel. Russia was one of the largest suppliers of oil and gas to world markets before the war, and the war raised concerns about sanctions, supply disruption, and the reallocation of energy trade. As a result, Brent prices provide a direct measure of how geopolitical conflict was priced into global energy markets. This is consistent with the oil-shock literature, which emphasises that oil-price movements may reflect precautionary demand and expectations about future supply disruptions \shortcite{kilian2009not}. The USD/EUR exchange rate is included to capture the currency and international adjustment channel. The war generated an asymmetric shock across advanced economies because Europe was more directly exposed to Russian energy supply than the United States. Changes in the dollar--euro exchange rate therefore reflect the relative reassessment of European and U.S. macroeconomic conditions, including energy-import costs, inflation expectations, monetary-policy responses, and global risk aversion. This is also consistent with evidence that exchange rates and commodity prices are closely linked \shortcite{chen2010exchange}, and with the literature on safe-haven currencies, which shows that some currencies may appreciate when global risk aversion rises \shortcite{habib2012safe}. The gold-related index is included to capture the safe-haven and store of-value channel. Gold is commonly regarded as a monetary asset and a store of value, especially during periods of political uncertainty, inflation concern, and financial-market stress. The outbreak of war increased uncertainty about global growth, energy costs, inflation, and financial stability. In this environment, gold-related prices can be interpreted as reflecting investors' demand for protection against adverse macro-financial conditions and heightened risk aversion. This interpretation is in line with studies showing that gold can act as a hedge or safe-haven asset in periods of market turmoil, although its safe-haven role is state-dependent and need not hold uniformly across all markets or all crises \shortcite{baur2010gold, baur2010safe}. \\

The three series are synchronised on a common daily calendar, with non-trading days and missing observations filtered to form daily observations. The methodology set out in the previous section is then applied to assess whether the cointegration parameters remain stable across the pre- and post-intervention periods, and to test any post-intervention smooth evolution in the cointegrating matrices. The kernel function chosen is the Gaussian kernel, and the bandwidth $h=0.075$, $h_{\text{smo}}=0.12$ are selected.\\

\begin{figure}[htbp]
    \centering
    \includegraphics[scale=0.46]{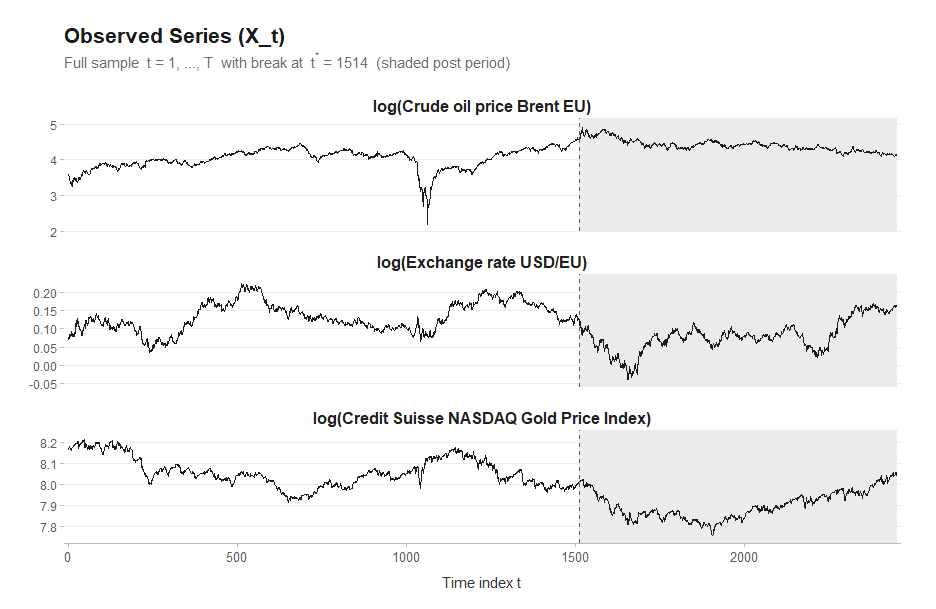}
    \caption{Series of $\ln P^{\text{Oil}}_
t$, $\ln P^{\text{US/EU}}_t$ and $\ln P^{\text{AU}}_t$ during 4 January 2016 to 1 January 2026.}
    \label{X_t}
\end{figure}

\newpage

\begin{table}[htbp]
\centering
\caption*{Pre-intervention unit-root diagnostics}
\small
\begin{tabular}{lccccccc}
\hline
& Level ADF & Level PP & Level KPSS & Diff. ADF & Diff. PP & Diff. KPSS &  \\
\hline
$X_1$ & -2.385 & -2.045 & 2.448 & -12.465 & -40.749 & 0.073 & $I(1)$ \\
$X_2$ & -2.211 & -2.357 & 3.032 & -14.485 & -37.075 & 0.104 & $I(1)$ \\
$X_3$ & -2.019 & -2.125 & 3.859 & -14.573 & -37.716 & 0.080 & $I(1)$ \\
\hline
\end{tabular}
\vspace{0.5em}
\caption{The pre-intervention sample is $t=1,\ldots,1514$. 
ADF and PP tests have the null hypothesis of a unit root, while the KPSS test has the null hypothesis of stationarity. 
At the 5\% level, the critical values are approximately $-2.86$ for ADF, $-2.864$ for PP, and $0.463$ for KPSS. 
The results indicate that the three original series are nonstationary in levels but stationary in their first difference.}
\label{tab:ch3_unit_root_diagnostics}
\end{table}

\begin{table}[htbp]
\centering
\caption*{Pre-intervention Johansen trace test}
\begin{tabular}{lccc}
\hline
Null hypothesis & Trace statistic & 5\% critical value &  \\
\hline
$r \leq 0$ & 35.666 & 31.520 & Reject \\
$r \leq 1$ & 14.051 & 17.950 & Accept  \\
$r \leq 2$ & 5.631 &  8.180 & Accept  \\
\hline
\end{tabular}
\vspace{0.5em}
\caption{The test is applied to the pre-intervention sample $t=1,\ldots,1514$ with lag order $k=2$. 
Critical values correspond to the unrestricted-constant specification. 
The formal trace test selects rank $r=1$. 
}
\label{tab:ch3_johansen_rank_test}
\end{table}

\begin{table}[htbp]
\centering
\caption*{Pre-intervention baseline estimates under $r=1$}
\begin{tabular}{lcc}
\hline
Variable & $\widehat\alpha_0$ & $\widehat\beta_0$ \\
\hline
$X_1$ & -0.0119 & 1.0000 \\
$X_2$ & -0.0004 & -2.4418 \\
$X_3$ & -0.0012 & 1.5337 \\
\hline
\end{tabular}
\vspace{0.5em}
\caption{The first-row normalised estimates are obtained from the pre-intervention observation under the tested rank $r=1$.}
\label{tab:ch3_preintervention_baseline_estimates}
\end{table}

\newpage

Figure \ref{X_t} demonstrates the trend of logarithm of Brent crude oil prices, the USD/EUR spot exchange rate, and the Gold price index over the full sample, with the intervention time point ($t^*$, Russian Invasion on 24 Feb. 2022). In the post-intervention period, the three series continue to co-move, but the amplitudes and persistence of fluctuations differ from the pre-intervention segment, suggesting that the adjustment dynamics of the system may have evolved.  \\

The preliminary diagnostics provide support for modelling the three-dimensional series $X_t=(X_{1t},X_{2t},X_{3t})'$ within an integrated time-series framework. As reported in Table~\ref{tab:ch3_unit_root_diagnostics}, the ADF and PP tests do not reject the unit-root null for the level series, whereas the KPSS test rejects the stationarity null in levels. After first differencing, the ADF and PP statistics reject the unit-root null and the KPSS statistics do not reject stationarity. These results indicate that all three components of $X_t$ are consistent with $I(1)$ behaviour in the pre-intervention sample, which justifies the use of a VECM-type long run specification.\\

The pre-intervention Johansen trace results in
Table~\ref{tab:ch3_johansen_rank_test} are used to guide the specification of the baseline long-run component. The formal trace statistic is large enough to reject the null of no cointegration at the 5\% level and select the pre-intervention rank $r=1$. The baseline estimates reported in Table~\ref{tab:ch3_preintervention_baseline_estimates} therefore provide the reference adjustment vector and cointegrating vector used in the subsequent post-intervention tests. Under this specification, the estimated loading vector
$\widehat\alpha_0$ is relatively small in magnitude, suggesting slow pre-intervention adjustment toward the estimated long-run relation. The normalised cointegrating vector $\widehat\beta_0$ defines the baseline long-run combination against which the post-intervention coefficient path is evaluated. Hence, the subsequent empirical tests should be read as conditional tests of post-intervention departures from this maintained pre-intervention long-run structure.\\

\begin{figure}[htbp]
    \centering
    \includegraphics[scale=0.4]{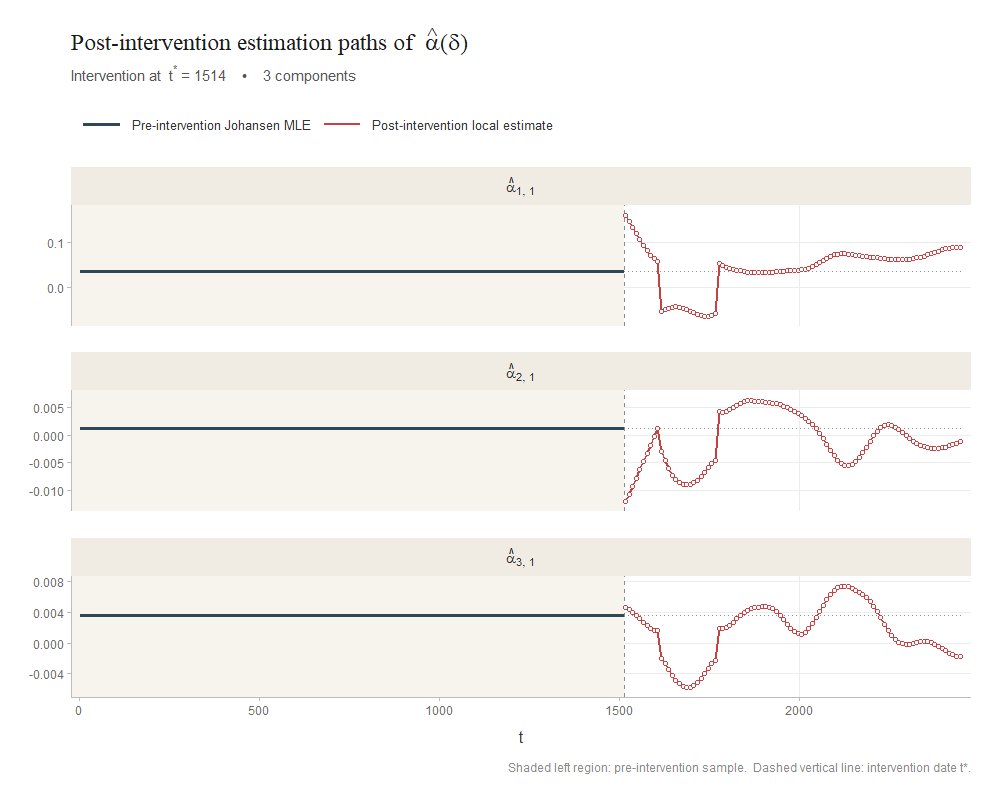}
    \caption{Pre-intervention adjustment coefficient vector $\alpha_0$ and post-intervention functional adjustment coefficient vector $\alpha(t/T)$ under unit-normalisation in $\beta(t/T)$ .}
    \label{alpha_t}
\end{figure}

\begin{figure}[htbp]
    \centering
    \includegraphics[scale=0.4]{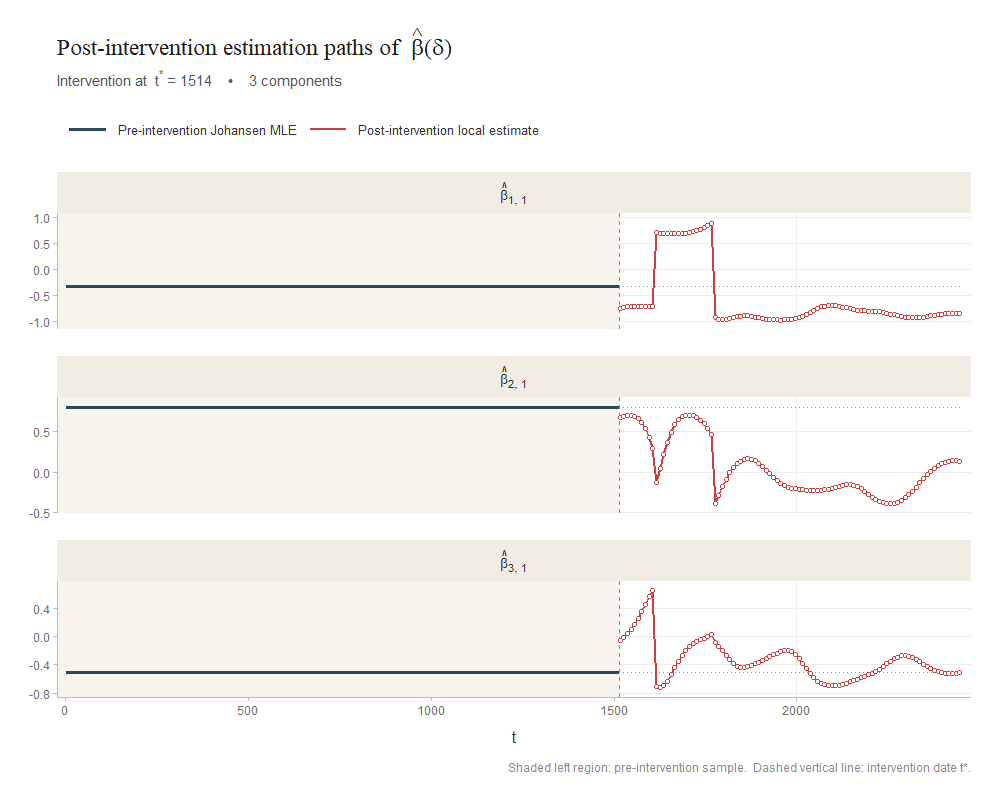}
    \caption{Pre-intervention cointegrating vector $\beta_0$ and post-intervention functional cointegrating vector $\beta(t/T)$ under unit-normalisation}
    \label{beta_t}
\end{figure}

\begin{landscape}
\begin{figure}[htbp]
    \centering
    \includegraphics[scale=0.5]{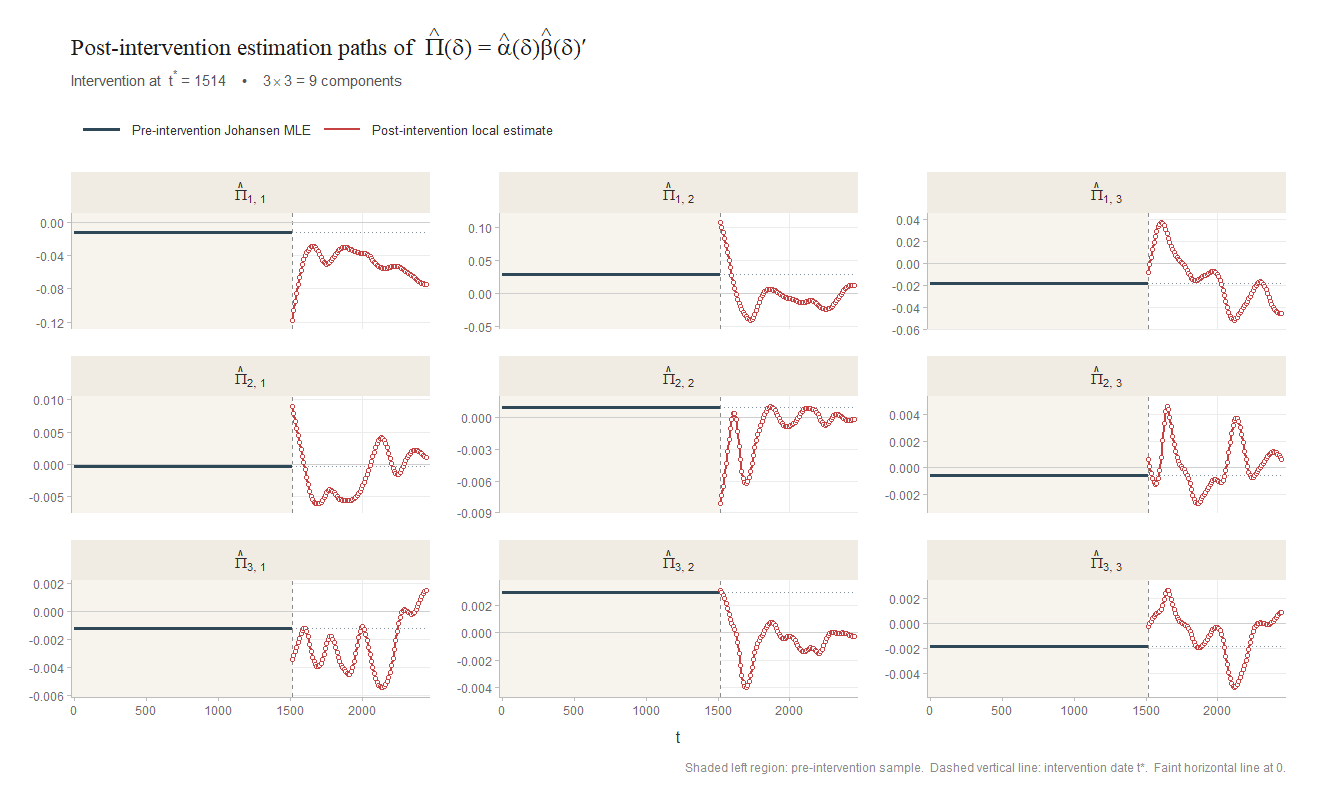}
    \caption{Pre-intervention cointegrating matrix $\Pi_0$ and post-intervention functional cointegrating matrix $\Pi(t/T)$}
    \label{Pi_t}
\end{figure}
\end{landscape}

\newpage

\begin{table}[]
\centering
\begin{tabular}{llll}
\multicolumn{4}{c}{Results of testing for functional cointegrating parameters} \\ \hline
                                     &              &             &           \\ \hline
\begin{tabular}[c]{@{}l@{}}$T = 2453$, $t^* = 1514$,\\ $N=939$, $\delta^* = 0.6172$, \\$p=3$, $k=3$, $r=1$, \\ Bandwidth $h = 0.075$, \\ $h_{\text{smo}}=0.12$,\\ boostrap block length = 24 \end{tabular} &
   &
   &
   \\ \hline
                                     &              &             &           \\
 &
  \begin{tabular}[c]{@{}l@{}}Statistics\end{tabular} &
  Critical value &
  $p$-value \\
                                     &              &             &           \\
$H_0^{\text{Base}}$: $\alpha(\delta)=\alpha_0$,\\        \ \ \ \ \ \ \ \ \ $\beta(\delta)=\beta_0$     & 0.00139     & 0.00133     & 0.036 **     \\
                                     &              &             &           \\
\begin{tabular}[c]{@{}l@{}}$H_0^{\text{smo}}$: $\Pi(\delta)$ is $C^1$ smooth \end{tabular} &
  0.01413 &
  0.05696 &
  0.591 \\
                                     &              &             &           \\ \hline
\multicolumn{4}{c}{The significance level is 0.05}                           
\end{tabular}
\caption{Testing results of  functional cointegrating parameters.}
\label{2test_result}
\end{table}

Figures \ref{alpha_t} and \ref{beta_t} compare the constant-parameter benchmark estimates $(\alpha_0,\beta_0)$ with the post-intervention functional estimates $(\alpha(\delta),\beta(\delta))$. The time-varying adjustment coefficients indicate that the speed and direction of error correction changes through the post-intervention window. This means that the channels through which deviations from the long-run equilibrium are absorbed are not stable after $t^*$: some variables exhibit a larger share of the re-equilibration at particular times, while others become comparatively less responsive. This pattern is consistent with an environment in which risk appetite and hedging demand shift gradually after a major geopolitical shock. Figure \ref{beta_t} shows that the estimated cointegrating vector also exhibits smooth time variation in the post-intervention period. While the normalisation fixes the first element of $\beta(\delta)$, the remaining smoothly changing over time, implying that the relative contribution of the exchange rate and gold-market uncertainty to the long-run equilibrium relation is not constant after $t^*$.\\

Table \ref{2test_result} reports the two tests for functional cointegrating parameters. For Test 1, the $p$-value provide evidence against equality with the pre-intervention benchmark at conventional levels. It suggests that the deviations of the post-intervention estimates are sufficiently persistent and large, relative to sampling uncertainty, to conclude that the system moved to a different parameter configuration. For Test 2, the $p$-value does not provide evidence against the null of a smooth post-intervention parameter path. Taken together, the two tests suggest that the post-intervention cointegrating structure differs from the pre-intervention benchmark, and there is no statistical evidence of interior jumps or kinks in structural changes in the post-intervention path.\\

The rejection of the pre-intervention benchmark suggests that the war altered the long-run configuration linking energy prices, exchange rate, and gold-related
safe-haven demand. This implies that the transmission of geopolitical shocks cannot be assessed only through contemporaneous movements in oil prices or exchange rates. A persistent change in the cointegrating structure indicates that the relative roles of energy costs, currency adjustment, and portfolio hedging may shift after the shock. At the same time, accepting
$C^1$-smoothness suggests that the adjustment was not characterised by a sequence of sharp internal breaks, but by gradual re-pricing over the post-intervention period. This is relevant for monetary and financial-stability monitoring, since energy-driven inflation pressure, exchange-rate movements and safe-haven demand may be transmitted through a slowly changing system rather than through a one-off discrete regime shift. The result is consistent with the policy concern that the Russia--Ukraine war generated a persistent energy-market shock and heightened volatility in related financial markets
\shortcite{iea2022weo,ecb2022energy}.\\

Our results also align with the long-standing view that oil prices and exchange rates can share a stable long-run association, even though short-run transmission mechanisms may change across episodes \shortcite{chen2007oil}. At the same time, the finding of post-intervention effects in adjustment dynamics is consistent with recent evidence that studies volatility and time-varying spillovers around the Russia-Ukraine war. For example, \shortciteA{izzeldin2023impact} document volatility responses in European and global markets and commodity groups around the invasion. Furthermore, the behaviour of the gold-volatility component is also consistent with the broader hedging and safe-haven theory. Gold exhibits safe-haven properties in several developed equity markets during crisis periods \shortcite{baur2010gold}, and can act as both a hedge and a safe haven against drastic USD exchange-rate movements \shortcite{reboredo2013gold}. Time variation in gold ETF volatility and adjustment coefficient can be read as shifts in the market price of uncertainty and hedging demand after the invasion. More broadly, evidence using wavelet coherence methods suggests that currencies and precious metals respond to geopolitical risk at relatively short horizons \shortcite{bkedowska2022hedging}, which is consistent with the evidence of post-intervention time-varying structural change here.

\section{Conclusion} \label{Conclusion3}

This paper proposes an approach to modelling and testing smooth structural change in cointegrated systems following a known intervention time. The main idea is to use the pre-intervention subsample to identify a stable VECM structure, while allowing the long-run parameters to evolve continuously after the intervention. By decomposing the cointegrating matrix into adjustment and cointegrating components and estimating them locally by kernel-weighted Johansen procedures, the framework delivers functional estimates of the long-run structural evolution.\\

I also introduce two tests that address empirical questions. Test~1 evaluates whether post-intervention parameters equal the pre-intervention benchmark, which is directly relevant for assessing whether an intervention induces a sustained change in the long-run cointegration relation. Test~2 evaluates whether the post-intervention cointegrating matrices are smoothly changing. Together, the two tests separate baseline deviation from post-intervention evolution, which is difficult to recover from standard discrete-break specifications when changes are gradual. The Monte Carlo results suggest that the tests become more reliable as the post-intervention sample size increases. Test 1 is Test 1 approaches the nominal 5\% size for large samples, and attains high power more quickly as the sample grows. Test 2 also becomes more informative with larger samples, but its size distortion remains somewhat oversized in the reported designs. This difference is consistent with the general difficulty of inference for smooth post-intervention evolution. \\

The empirical example studies the structural change in cointegrated system of daily log Brent crude oil prices, log USD/EUR spot exchange rate, and log Credit Suisse NASDAQ Gold FLOWS103 Price Index over 2016--2026, with the intervention of Russia's attack against Ukraine on 24 February 2022. Test 1 rejects that post-intervention parameters and the pre-intervention benchmark are at the same level, suggesting sufficient evidence of a significant different long-run parameter configuration. In contrast, Test 2 does not reject post-intervention $C^1$ smoothness of structural change, indicating that the post-intervention dynamics are better characterised by smooth evolution rather than discrete changes. The estimated functional coefficient path plots are consistent with this interpretation, showing relatively smooth evolution in cointegrating matrices over the post-intervention period.\\

From a policy perspective, the empirical results suggest that geopolitical events may change the long-run relationship among markets. In the Russia--Ukraine case, the post-intervention departure from the pre-intervention benchmark points to a reconfiguration of the relationship between energy prices, the dollar--euro exchange rate, and gold-related safe-haven demand. For central banks and financial authorities, this highlights the importance of monitoring cross-market long-run relations when assessing imported inflation, external-price pressures, and financial-market stress. The smooth nature of the estimated adjustment further suggests that policy responses should not rely only on detecting abrupt breaks immediately after the event. Instead, persistent
changes in the long-run adjustment path may reveal how markets gradually absorb geopolitical risk, sanctions, energy-supply uncertainty and portfolio reallocation. The proposed framework can therefore be used as an empirical monitoring device for distinguishing temporary market turbulence from more durable changes in cross-market transmission.\\

Overall, the approach provides an interpretable tool for analysing cointegrated systems subject to gradual post-intervention structural change. Several extensions may further strengthen the approach. First, data-driven bandwidth selection for  fit-loss smoothness test statistic would improve empirical implementation in finite samples. Moreover, the framework can be extended to allow for multiple interventions, unknown intervention time point, or variation in the cointegration rank. 

\newpage

\bibliography{bib}
\bibliographystyle{apacite}

\newpage

\section*{Appendix A}
\label{AppendixC}

\subsection*{Proof of Lemma
\ref{lem:local-population-equivalence}}

\begin{proof}
The lemma has three parts.

\medskip
\noindent
\textit{Part (i): local approximation of $\Pi(\delta)$.}

Since $K(\cdot)$ is bounded and either compactly supported or has sufficiently thin tails, observations with non-negligible local weight satisfy
\[
\left|\frac{t}{T}-\delta_0\right|=\left|\delta-\delta_0\right|=O(h).
\]
For compactly supported kernels this follows directly from the support condition. For kernels with unbounded support, the same order is understood in the effective-weight sense after neglecting tail contributions.\\

By Assumption 3.1,
\[
\|\alpha(\delta)-\alpha(\delta_0)\|=O(h^\gamma),
\qquad
\|C(\delta)-C(\delta_0)\|=O(h^\gamma)
\]
uniformly over the local window. Under the normalisation
\[
\beta(\delta)=
\begin{pmatrix}
I_r\\
C(\delta)
\end{pmatrix},
\]
this also implies
\[
\|\beta(\delta)-\beta(\delta_0)\|=O(h^\gamma).
\]
Using
\[
\Pi(\delta)=\alpha(\delta)\beta(\delta)',
\]
write
\begin{align}
\Pi(\delta)-\Pi(\delta_0)
&=
\alpha(\delta)\beta(\delta)'
-
\alpha(\delta_0)\beta(\delta_0)' \notag\\
&=
\alpha(\delta)
\{\beta(\delta)-\beta(\delta_0)\}'
+
\{\alpha(\delta)-\alpha(\delta_0)\}
\beta(\delta_0)' .
\label{eq:appC-pi-decomposition}
\end{align}
The functions $\alpha(\cdot)$ and $\beta(\cdot)$ are bounded on $[\delta^\ast,1]$. Therefore both terms on the right-hand side of \eqref{eq:appC-pi-decomposition} are $O(h^\gamma)$, and hence
\[
\|\Pi(\delta)-\Pi(\delta_0)\|
=
O(h^\gamma).
\]
This proves \eqref{eq:local-pi-approximation}.\\

\medskip
\noindent
\textit{Part (ii): consistency of the local residualisation coefficients.}

Consider $\widehat B_0(\delta_0)$. By definition,
\[
\widehat B_0(\delta_0)
=
\widehat P_{ZZ}(\delta_0)^{-1}
\widehat P_{Z0}(\delta_0),
\qquad
B_0^\circ(\delta_0)
=
P^\circ_{ZZ}(\delta_0)^{-1}
P^\circ_{Z0}(\delta_0).
\]
The local matrices are weighted sums over a window containing $O(Nh)$ effective observations. Since $Nh\to\infty$, $h\to 0$, $\gamma\in(1/2,1]$,  the weighted law of large numbers gives
\[
\widehat P_{ZZ}(\delta_0)-P^\circ_{ZZ}(\delta_0)
=
O_p((Nh)^{-1/2})+O(h^\gamma)
=
o_p(1),
\]
and similarly
\[
\widehat P_{Z0}(\delta_0)-P^\circ_{Z0}(\delta_0)
=
O_p((Nh)^{-1/2})+O(h^\gamma)
=
o_p(1).
\]
The term $O_p((Nh)^{-1/2})$ is the sampling fluctuation from $O(Nh)$ local observations, while $O(h^\gamma)$ is the local smoothing bias induced by replacing $\Pi(t/T)$ with $\Pi(\delta_0)$ in the local window.\\

Under the local nonsingularity condition for $P_{ZZ}(\delta_0)$,
\[
\widehat P_{ZZ}(\delta_0)^{-1}
-
P^\circ_{ZZ}(\delta_0)^{-1}
=
o_p(1).
\]
Therefore
\[
\widehat B_0(\delta_0)-B_0^\circ(\delta_0)
=
o_p(1).
\]
Similarly, replacing $\widehat P_{Z0}$ and $P_{Z0}$ by $\widehat P_{Z1}$ and $P_{Z1}$, gives
\[
\widehat B_1(\delta_0)-B_1^\circ(\delta_0)
=
o_p(1).
\]
This proves \eqref{eq:local-regression-consistency}.\\

\medskip
\noindent
\textit{Part (iii): asymptotic equivalence of feasible and population residuals.}

From the definitions of the feasible and population residuals,
\[
R_{0t}(\delta_0)-R_{0t}^\circ(\delta_0)
=
-
\{\widehat B_0(\delta_0)-B_0^\circ(\delta_0)\}'Z_t .
\]
Hence
\[
\|R_{0t}(\delta_0)-R_{0t}^\circ(\delta_0)\|
\le
\|\widehat B_0(\delta_0)-B_0^\circ(\delta_0)\|
\|Z_t\|.
\]
The vector $Z_t$ contains lagged differences $\Delta X_{t-i}$ and a deterministic constant. Under the post-intervention short-run stability condition and the moment conditions imposed on the innovation process, these lagged differences are stationary or asymptotically tight. Therefore,
\[
\max_{t:\,K_{t,h}(\delta_0)\ne0}\|Z_t\|=O_p(1).
\]
Combining this bound with
\[
\widehat B_0(\delta_0)-B_0^\circ(\delta_0)=o_p(1)
\]
yields
\[
\max_{t:\,K_{t,h}(\delta_0)\ne0}
\|R_{0t}(\delta_0)-R_{0t}^\circ(\delta_0)\|
=
o_p(1).
\]
The proof for $R_{1t}(\delta_0)-R_{1t}^\circ(\delta_0)$ is similar, using
\[
R_{1t}(\delta_0)-R_{1t}^\circ(\delta_0)
=
-
\{\widehat B_1(\delta_0)-B_1^\circ(\delta_0)\}'Z_t .
\]
This proves \eqref{eq:local-y-residual-equivalence} and \eqref{eq:local-x-residual-equivalence}.
\end{proof}

\subsection*{Proof of Theorem \ref{thm:consistency}}

\begin{proof}
Fix an interior post-intervention point $\delta_0\in(\delta^\ast,1)$. 
For notational simplicity, all local covariance matrices here are evaluated at $\delta_0$:
\[
S_{ij}:=S_{ij}(\delta_0), \qquad S^{\circ}_{ij}:=S^{\circ}_{ij}(\delta_0), \qquad i,j\in\{0,1\}.
\]

By Lemma \ref{lem:local-population-equivalence}, the feasible residuals and the corresponding population residuals are asymptotically equivalent in the local window. Hence, for each $i,j\in\{0,1\}$,
\[
\widehat S_{ij}-S_{ij}^{\circ}
=
O_p((Nh)^{-1/2})+O(h^\gamma)
=
o_p(1).
\tag{C.1}
\label{eq:appC-Sij-consistency}
\]
The term $O_p((Nh)^{-1/2})$ is the sampling error from $O(Nh)$ effective local observations, while $O(h^\gamma)$ is the local smoothing bias from replacing the time-varying coefficient matrix $\Pi(t/T)$ by $\Pi(\delta_0)$ inside the bandwidth-$h$ neighbourhood. Since $Nh\to\infty$ and $h\to0$, both terms vanish.\\

The sample and population matrix pair for the local generalised eigenvalue problem are
\[
(\widehat S_{10}\widehat S_{00}^{-1}\widehat S_{01}
,
\widehat S_{11}),
\]
and
\[
(S_{10}^{\circ}(S_{00}^{\circ})^{-1}S_{01}^{\circ}
,
S_{11}^{\circ}).
\]
Because $S_{00}^{\circ}$ is nonsingular and $\widehat S_{00}-S_{00}^{\circ}=o_p(1)$, the inverse map is continuous at $S_{00}^{\circ}$, and therefore
\[
\widehat S_{00}^{-1}-(S_{00}^{\circ})^{-1}=o_p(1).
\tag{C.2}
\label{eq:appC-S00-inverse}
\]
Using \eqref{eq:appC-Sij-consistency} and \eqref{eq:appC-S00-inverse},
\[
\widehat S_{11}-S_{11}^{\circ}=o_p(1).
\tag{C.3}
\label{eq:appC-matrix-pair-convergence}
\]
Write
\begin{align}
\widehat S_{10}\widehat S_{00}^{-1}\widehat S_{01}-S_{10}^{\circ}(S_{00}^{\circ})^{-1}S_{01}^{\circ}
&=
(\widehat S_{10}-S_{10}^{\circ})\widehat S_{00}^{-1}\widehat S_{01}
+
S_{10}^{\circ}\{\widehat S_{00}^{-1}-(S_{00}^{\circ})^{-1}\}\widehat S_{01}
\notag\\
&\quad+
S_{10}^{\circ}(S_{00}^{\circ})^{-1}(\widehat S_{01}-S_{01}^{\circ}),
\tag{C.4}
\label{eq:appC-A-decomposition}
\end{align}
where each term on the right-hand side of \eqref{eq:appC-A-decomposition} is $o_p(1)$. Hence \[
\widehat S_{10}\widehat S_{00}^{-1}\widehat S_{01}-S_{10}^{\circ}(S_{00}^{\circ})^{-1}S_{01}^{\circ}=o_p(1).
\]

Since $S_{11}^{\circ}$ is positive definite, and $\widehat S_{11} \xrightarrow{p} S_{11}^{\circ}$, $\widehat S_{11}$ is positive definite with probability approaching one.  Therefore the generalised eigenvalue problem can be transformed into an ordinary symmetric eigenvalue problem. Define
\[
\widehat{\Lambda}
:=
(\widehat S_{11})^{-1/2}
\widehat S_{10}(\widehat S_{00})^{-1}\widehat S_{01}
(\widehat S_{11})^{-1/2},
\qquad
\Lambda^{\circ}
:=
(S_{11}^{\circ})^{-1/2}
S_{10}^{\circ}(S_{00}^{\circ})^{-1}S_{01}^{\circ}
(S_{11}^{\circ})^{-1/2}.
\]
By \eqref{eq:appC-matrix-pair-convergence} and continuity of the matrix square-root inverse at the positive definite matrix $S_{11}^{\circ}$,
\[
\widehat{\Lambda}-\Lambda^{\circ}=o_p(1).
\tag{C.5}
\label{eq:appC-Lambda-convergence}
\]

Let $\Lambda^{\circ,\dagger}$ denote the spectral projection of $\Lambda^{\circ}$ associated with its first $r$ eigenvalues, and let $\widehat{\Lambda}^\dagger$ denote the corresponding sample spectral projection of $\widehat{\Lambda}$. By assumption, the $r$th and $(r+1)$th population generalised eigenvalues are separated:
\[
\lambda_r^{\circ}(\delta_0)-\lambda_{r+1}^{\circ}(\delta_0)\ge c_\lambda>0.
\tag{C.6}
\label{eq:appC-eigen-gap}
\]
This condition implies that the population $r$-dimensional eigenspace is isolated. Applying the standard eigenspace perturbation bound to $\widehat{M}$ and $M^{\circ}$,
\[
\|\widehat{\Lambda}^\dagger-\Lambda^{\circ,\dagger}\|
\le
\mathcal C
\frac{\|\widehat{\Lambda}-\Lambda^{\circ}\|}{c_\lambda}
=
o_p(1),
\tag{C.7}
\label{eq:appC-projection-convergence}
\]
where $\mathcal C<\infty$ is a constant independent of $N$. Hence the estimated $r$-dimensional eigenspace converges in probability to the population $r$-dimensional eigenspace.\\

The eigenspace in \eqref{eq:appC-projection-convergence} is the local cointegrating space. Therefore,
\[
\operatorname{span}\{\widetilde\beta(\delta_0)\}
\xrightarrow{p}
\operatorname{span}\{\beta(\delta_0)\},
\tag{C.8}
\label{eq:appC-space-consistency}
\]
where $\widetilde\beta(\delta_0)$ denotes any sample eigenvector matrix before imposing the identifying normalisation. The convergence in \eqref{eq:appC-space-consistency} is convergence of spaces, so it still allows postmultiplication by a nonsingular $r\times r$ matrix.\\

To obtain convergence of the normalised cointegrating matrix, write the population normalisation as
\[
\beta(\delta_0)
=
\begin{pmatrix}
I_r\\
C(\delta_0)
\end{pmatrix}.
\]
Let $\widetilde\beta_1(\delta_0)$ denote the upper $r\times r$ block of $\widetilde\beta(\delta_0)$. Since the upper block of the population matrix is $I_r$, the matrix $\widetilde\beta_1(\delta_0)$ is nonsingular with probability approaching one. Define the normalised estimator by
\[
\widehat\beta(\delta_0)
:=
\widetilde\beta(\delta_0)\widetilde\beta_1(\delta_0)^{-1}.
\]
The mapping from a basis of the local cointegrating space to the normalised matrix $(I_r,\widehat C(\delta_0)')'$ is continuous in a neighbourhood of the true space. Hence \eqref{eq:appC-space-consistency} implies
\[
\widehat\beta(\delta_0)\xrightarrow{p}\beta(\delta_0).
\tag{C.9}
\label{eq:appC-beta-consistency}
\]
Equivalently,
\[
\widehat C(\delta_0)\xrightarrow{p}C(\delta_0).
\tag{C.10}
\label{eq:appC-C-consistency}
\]

It remains to prove consistency of $\widehat\alpha(\delta_0)$. The estimator is
\[
\widehat\alpha(\delta_0)
=
\widehat S_{01}(\delta_0)\widehat\beta(\delta_0)
\left[
\widehat\beta(\delta_0)'
\widehat S_{11}(\delta_0)
\widehat\beta(\delta_0)
\right]^{-1}.
\]
Using \eqref{eq:appC-Sij-consistency} and \eqref{eq:appC-beta-consistency},
\[
\widehat S_{01}(\delta_0)\widehat\beta(\delta_0)
\xrightarrow{p}
S_{01}^{\circ}(\delta_0)\beta(\delta_0),
\tag{C.11}
\label{eq:appC-alpha-numerator}
\]
and
\[
\widehat\beta(\delta_0)'
\widehat S_{11}(\delta_0)
\widehat\beta(\delta_0)
\xrightarrow{p}
\beta(\delta_0)'S_{11}^{\circ}(\delta_0)\beta(\delta_0).
\tag{C.12}
\label{eq:appC-alpha-denominator}
\]
By the local identification condition,
\[
\beta(\delta_0)'S_{11}^{\circ}(\delta_0)\beta(\delta_0)
\]
is nonsingular. Hence, by continuity of the inverse map,
\[
\left[
\widehat\beta(\delta_0)'
\widehat S_{11}(\delta_0)
\widehat\beta(\delta_0)
\right]^{-1}
\xrightarrow{p}
\left[
\beta(\delta_0)'S_{11}^{\circ}(\delta_0)\beta(\delta_0)
\right]^{-1}.
\tag{C.13}
\label{eq:appC-alpha-inverse}
\]
Combining \eqref{eq:appC-alpha-numerator}--\eqref{eq:appC-alpha-inverse},
\[
\widehat\alpha(\delta_0)
\xrightarrow{p}
S_{01}^{\circ}(\delta_0)\beta(\delta_0)
\left[
\beta(\delta_0)'S_{11}^{\circ}(\delta_0)\beta(\delta_0)
\right]^{-1}
=
\alpha(\delta_0).
\tag{C.14}
\label{eq:appC-alpha-consistency}
\]
Therefore,
\[
\widehat\beta(\delta_0)\xrightarrow{p}\beta(\delta_0),
\qquad
\widehat\alpha(\delta_0)\xrightarrow{p}\alpha(\delta_0).
\]
\end{proof}

\subsection*{Proof of Theorem \ref{thm:beta-convergence}: asymptotic convergence of $\widehat\beta(\delta_0)$}

\begin{proof}
Write, for notational simplicity,
\[
\vartheta_0:=\vartheta(\delta_0),
\qquad
\mathcal D_N:=\mathcal D_{N,\delta_0}^{\vartheta},
\qquad
\mathcal Q:=\mathcal Q_{\delta_0}^{\vartheta}.
\]
Define the centred local criterion process in rotated and scaled coordinates:
\[
\varrho_N(z)
=
\widehat{\mathcal L}_{\delta_0}
\left(
\vartheta_0+
\mathcal Q\mathcal D_N^{-1}z
\right)
-
\widehat{\mathcal L}_{\delta_0}(\vartheta_0),
\qquad
z\in\mathbb R^m.
\]
By Assumption 3.6,
\begin{align}
\varrho_N(z)
&=
-
z'\mathcal D_N^{-1}
\mathcal Q'
\mathcal S_N^\vartheta(\delta_0)
-
\frac12
z'\mathcal D_N^{-1}
\mathcal Q'
\mathcal H_N^\vartheta(\delta_0)
\mathcal Q
\mathcal D_N^{-1}z
+
o_p(1),
\label{eq:appC-local-criterion-expansion}
\end{align}
uniformly on every compact subset of $\mathbb R^m$.\\

By the high-level score and Hessian limits,
\[
\mathcal D_N^{-1}
\mathcal Q'
\mathcal S_N^\vartheta(\delta_0)
=
O_p(1),
\]
and
\[
\mathcal D_N^{-1}
\mathcal Q'
\mathcal H_N^\vartheta(\delta_0)
\mathcal Q
\mathcal D_N^{-1}
=
O_p(1).
\]
Moreover, the limiting Hessian matrix is nonsingular with probability approaching one. Hence the local criterion in \eqref{eq:appC-local-criterion-expansion} is asymptotically quadratic with tight linear and curvature terms.\\

Let
\[
\widehat z_N
=
\mathcal D_N
\mathcal Q'
\{\widehat\vartheta(\delta_0)-\vartheta(\delta_0)\}.
\]
Since $\widehat\vartheta(\delta_0)$ minimises the local criterion, $\widehat z_N$ minimises the criterion in the local coordinates up to an asymptotically negligible error. The quadratic expansion above implies that $\widehat z_N$ is tight. Otherwise, along a sequence with $\|\widehat z_N\|\to\infty$, the positive quadratic term would dominate the linear term on compactly expanding sets, contradicting local minimality. Therefore,
\[
\mathcal D_{N,\delta_0}^{\vartheta}
\mathcal Q_{\delta_0}^{\vartheta\prime}
\{
\widehat\vartheta(\delta_0)-\vartheta(\delta_0)
\}
=
O_p(1).
\]
Hence the block decomposition
\[
\mathcal Q_{\delta_0}^{\vartheta}
=
\left(
\mathcal Q_{\delta_0}^{\vartheta,f},
\mathcal Q_{\delta_0}^{\vartheta,s}
\right),
\qquad
\mathcal D_{N,\delta_0}^{\vartheta}
=
\operatorname{diag}
\{N\sqrt h\,I_r,\ Nh\,I_{m-r}\}
\]
give
\[
\mathcal Q_{\delta_0}^{\vartheta,f\prime}
\{
\widehat\vartheta(\delta_0)-\vartheta(\delta_0)
\}
=
O_p((N\sqrt h)^{-1}),
\]
and
\[
\mathcal Q_{\delta_0}^{\vartheta,s\prime}
\{
\widehat\vartheta(\delta_0)-\vartheta(\delta_0)
\}
=
O_p((Nh)^{-1}).
\]
Since the map $C\mapsto\beta=(I_r,C')'$ is linear under the imposed normalisation, the same directional rates apply to the corresponding free block of $\widehat\beta(\delta_0)-\beta(\delta_0)$.
\end{proof}

\subsection*{Order calculations used in Theorem \ref{thm:beta-convergence}}

The local window contains $O(Nh)$ effective observations. Under the functional central limit theorem,
\[
N^{-1/2}X_{\lfloor Ts\rfloor}
\Rightarrow
W(s), \qquad s \in [t^*/T,1]
\]
so the process variable $X_t$, and the population residuals $R_{1t}^\circ(\delta_0)$, is of order $O_p(\sqrt N)$ in the local window.\\

Therefore each outer product
\[
R_{1t}^\circ(\delta_0)R_{1t}^\circ(\delta_0)'
\]
is of order $O_p(N)$. Summing over $O(Nh)$ local observations gives
\[
G_N^x(\delta_0)
=
\sum_{t=t^*+1}^{T}
K_{t,h}(\delta_0)
R_{1t}^\circ(\delta_0)R_{1t}^\circ(\delta_0)'
=
O_p(N^2h).
\]
The parameter-space Hessian is induced by this level signal through the normalisation map:
\[
\mathcal H_N^\vartheta(\delta_0)
=
J_{\delta_0}'
\{I_r\otimes G_N^x(\delta_0)\}
J_{\delta_0}
+
o_p(N^2h).
\]
Hence the leading Hessian is of order $O_p(N^2h)$ in the dominant direction.\\

The score has the form of a local sum of products between the residualised level variable and a stationary innovation term:
\[
\mathcal S_N^\vartheta(\delta_0)
\approx
\sum_{t=t^*+1}^{T}
K_{t,h}(\delta_0)
R_{1t}^\circ(\delta_0)\varepsilon_t .
\]
Each summand has variance of order $N$, since $R_{1t}^\circ(\delta_0)=O_p(\sqrt N)$ and $\varepsilon_t=O_p(1)$. With $O(Nh)$ weakly dependent mean-zero summands, the standard deviation of the sum is
\[
\{Nh\cdot N\}^{1/2}
=
N\sqrt h.
\]
Thus
\[
\mathcal S_N^\vartheta(\delta_0)
=
O_p(N\sqrt h).
\]

These orders explain the fast rate
\[
\frac{N\sqrt h}{N^2h}
=
\frac{1}{N\sqrt h}.
\]
However, because the local integrated regressor is dominated by the Brownian level at $\delta_0$, the unrotated local signal matrix is asymptotically close to rank one. After rotation, the direction associated with the local Brownian level has scaling $N\sqrt h$, while the orthogonal directions are driven by local Brownian increments and have scaling $Nh$. This motivates
\[
\mathcal D_{N,\delta_0}^{\vartheta}
=
\operatorname{diag}
\{N\sqrt h\,I_r,\ Nh\,I_{m-r}\}.
\]

\subsection*{Proof of Theorem \ref{thm:alpha-convergence}: asymptotic convergence of $\widehat\alpha(\delta_0)$}

\begin{proof}
Recall the map
\[
\Psi(S_{01},S_{11},\beta)
=
S_{01}\beta(\beta'S_{11}\beta)^{-1}.
\]
Then
\[
\widehat\alpha(\delta_0)
=
\Psi\left(
\widehat S_{01}(\delta_0),
\widehat S_{11}(\delta_0),
\widehat\beta(\delta_0)
\right).
\]
Take a first-order expansion of $\Psi$ around
\[
\left(
S_{01}(\delta_0),
S_{11}(\delta_0),
\beta(\delta_0)
\right).
\]
This gives
\begin{align}
\widehat\alpha(\delta_0)-\alpha(\delta_0)
&=
D_{S_{01}}\Psi
\left[
\widehat S_{01}(\delta_0)-S_{01}(\delta_0)
\right]
\notag\\
&\quad+
D_{S_{11}}\Psi
\left[
\widehat S_{11}(\delta_0)-S_{11}(\delta_0)
\right]
\notag\\
&\quad+
D_{\beta}\Psi
\left[
\widehat\beta(\delta_0)-\beta(\delta_0)
\right]
+
\rho_N,
\label{eq:appC-alpha-expansion}
\end{align}
where $\rho_N$ is the second-order remainder.\\

The first two terms in \eqref{eq:appC-alpha-expansion} are driven by stationary local moments. Once the cointegrating vector is imposed, the error-correction term is stationary, and the relevant local moments are averages over $O(Nh)$ effective observations. Therefore their stochastic order is
\[
O_p((Nh)^{-1/2}),
\]
up to the smoothing bias $O(h^\gamma)$.\\

The third term is driven by the estimation error in $\widehat\beta(\delta_0)$. By Theorem \ref{thm:beta-convergence}, the slowest component of $\widehat\beta(\delta_0)-\beta(\delta_0)$ is
\[
O_p((Nh)^{-1}).
\]
Since $Nh\to\infty$,
\[
(Nh)^{-1}
=
o((Nh)^{-1/2}).
\]
Thus the contribution of the $\beta$-estimation error is first-order negligible in the expansion of $\widehat\alpha(\delta_0)$.\\

The second-order remainder $\rho_N$ is also negligible under the differentiability and local nonsingularity conditions imposed on $\Psi$. Hence
\[
\widehat\alpha(\delta_0)-\alpha(\delta_0)
=
\mathcal A_{N,1}(\delta_0)
+
\mathcal A_{N,2}(\delta_0)
+
o_p((Nh)^{-1/2}),
\]
where $\mathcal A_{N,1}(\delta_0)$ is linear in
\[
\widehat S_{01}(\delta_0)-S_{01}(\delta_0),
\]
and $\mathcal A_{N,2}(\delta_0)$ is linear in the stationary component of
\[
\widehat S_{11}(\delta_0)-S_{11}(\delta_0).
\]
Therefore
\[
\sqrt{Nh}\,
\operatorname{vec}
\{
\widehat\alpha(\delta_0)-\alpha(\delta_0)
\}
=
O_p(1).
\]
This proves \eqref{eq:alpha-asymptotic-linear-representation} and \eqref{eq:alpha-root-Nh-rate}.
\end{proof}

\subsection*{Proof of Theorem \ref{thm:beta-distribution}: the distribution of $\widehat\beta(\delta_0)$}

\begin{proof}
The proof strengthens the tightness argument in Theorem \ref{thm:beta-convergence}. Suppose that the high-level joint convergence condition holds:
\[
\left(
\mathcal D_N^{-1}\mathcal Q'
\mathcal H_N^\vartheta(\delta_0)
\mathcal Q\mathcal D_N^{-1},
\,
\mathcal D_N^{-1}\mathcal Q'
\mathcal S_N^\vartheta(\delta_0)
\right)
\Rightarrow
(\Xi_{\delta_0},\Upsilon_{\delta_0}),
\]
where $\Xi_{\delta_0}$ is nonsingular with probability one. Then the local criterion process satisfies
\[
\varrho_N(z)
\Rightarrow
\varrho(z)
=
-z'\Upsilon_{\delta_0}
-
\frac12 z'\Xi_{\delta_0}z
\]
on compact sets. The limiting criterion is strictly convex with probability one, and its unique minimiser is
\[
z_{\delta_0}^\ast
=
-\Xi_{\delta_0}^{-1}\Upsilon_{\delta_0}.
\]
By the argmin continuous mapping theorem,
\[
\mathcal D_{N,\delta_0}^{\vartheta}
\mathcal Q_{\delta_0}^{\vartheta\prime}
\{
\widehat\vartheta(\delta_0)-\vartheta(\delta_0)
\}
\Rightarrow
-\Xi_{\delta_0}^{-1}\Upsilon_{\delta_0}.
\]

Here $\Xi_{\delta_0}$ and $\Upsilon_{\delta_0}$ are generated by the local integrated regressor; under the functional central limit theorem,
\[
N^{-1/2}X_{\lfloor Ts\rfloor}
\Rightarrow
W(s), \qquad s \in[t^*/T,1]
\]
the local level signal and the local score converge to Brownian functionals. In particular, the Hessian limit is induced by the rotated local level signal, while the score limit is induced by the local product of the level process and the innovation process. Hence the limiting distribution of $\widehat\beta(\delta_0)$ is a Brownian functional of the post-intervention limiting process.
\end{proof}

\subsection*{Proof of Theorem \ref{thm:alpha-distribution}: the distribution of $\widehat\alpha(\delta_0)$}

\begin{proof}
The distributional result follows from the asymptotic linear representation in Theorem \ref{thm:alpha-convergence}. Suppose that the stationary local moment vector entering the first-order expansion satisfies a local central limit theorem:
\[
\sqrt{Nh}
\begin{pmatrix}
\operatorname{vec}\{\widehat S_{01}(\delta_0)-S_{01}(\delta_0)\}\\
\operatorname{vec}\{\widehat S_{11}^{c}(\delta_0)-S_{11}^{c}(\delta_0)\}
\end{pmatrix}
\Rightarrow
\mathcal N(0,\Omega_{\alpha,\delta_0}),
\]
where $S_{11}^{c}$ denotes the stationary component of $S_{11}$ entering the derivative of $\Psi$. Suppose also that the smoothing bias is negligible at the root-$Nh$ scale:
\[
\sqrt{Nh}\,h^\gamma\to0.
\]
Then the first-order expansion gives
\[
\sqrt{Nh}\,
\operatorname{vec}
\{
\widehat\alpha(\delta_0)-\alpha(\delta_0)
\}
=
\mathcal W_{\alpha,\delta_0}
\sqrt{Nh}
\begin{pmatrix}
\operatorname{vec}\{\widehat S_{01}(\delta_0)-S_{01}(\delta_0)\}\\
\operatorname{vec}\{\widehat S_{11}^{c}(\delta_0)-S_{11}^{c}(\delta_0)\}
\end{pmatrix}
+
o_p(1),
\]
where $U_{\alpha,\delta_0}$ is the derivative matrix of the map $\Psi$ evaluated at
\[
\left(
S_{01}(\delta_0),
S_{11}(\delta_0),
\beta(\delta_0)
\right).
\]
By the continuous mapping theorem,
\[
\sqrt{Nh}\,
\operatorname{vec}
\{
\widehat\alpha(\delta_0)-\alpha(\delta_0)
\}
\Rightarrow
\mathcal N
\left(
0,
\mathcal W_{\alpha,\delta_0}
\Omega_{\alpha,\delta_0}
\mathcal W_{\alpha,\delta_0}'
\right).
\]
The estimation error in $\widehat\beta(\delta_0)$ is of order $O_p((Nh)^{-1})$, which is smaller than the local stationary moment rate $O_p((Nh)^{-1/2})$.
\end{proof}

\section*{Appendix B}

\begin{algorithm}[htbp]
\caption{Bootstrap procedure for Test 1}
\label{alg:test1_bootstrap}
\DontPrintSemicolon
\SetKwInput{KwInput}{Input}
\SetKwInput{KwOutput}{Output}

\KwInput{
Observed sample $\{X_t\}_{t=1}^{T}$;
pre-intervention estimates $(\widehat\alpha_0,\widehat\beta_0)$;
grid $\mathbb G_N$;
bandwidth $h$;
block length $\ell_1$;
number of bootstrap replications $\mathcal B$.
}

\KwOutput{
Bootstrap critical value $c_{\text{Base},1-\tau}^{\ast}$ and
bootstrap $p$-value $\widehat p_{\text{Base}}$.
}

Let $N=T-t^*$. Conditional on the observed pre-intervention sample and the
observed post-intervention initial values
$X_{t^*-k+1},\ldots,X_{t^*}$, estimate the restricted post-intervention model
under $H^{\text{Base}}_0$:
\[
\Delta X_t
=
\widehat\alpha_0\widehat\beta_0'X_{t-k}
+
\sum_{i=1}^{k-1}\widehat\Gamma_i^{(0)}\Delta X_{t-i}
+
\widehat\mu^{(0)}
+
\widehat\varepsilon_t^{(0)},
\qquad
t=t^*+1,\ldots,T .
\]

Centre the restricted residuals:
\[
\check\varepsilon_t^{(0)}
:=
\widehat\varepsilon_t^{(0)}
-
\frac{1}{N}\sum_{s=t^*+1}^{T}\widehat\varepsilon_s^{(0)} .
\]

\For{$b=1,\ldots,\mathcal B$}{

Draw residual blocks of length $\ell_1$ from
\[
\{\check\varepsilon_t^{(0)}\}_{t=t^*+1}^{T}
\]
using a moving-block bootstrap or stationary-block bootstrap, and concatenate
them to obtain
\[
\{\varepsilon_t^{\ast(b)}\}_{t=t^*+1}^{T}.
\]

Generate the post-intervention pseudo-sample recursively from
\[
\Delta X_t^{\ast(b)}
=
\widehat\alpha_0\widehat\beta_0'X_{t-k}^{\ast(b)}
+
\sum_{i=1}^{k-1}\widehat\Gamma_i^{(0)}
\Delta X_{t-i}^{\ast(b)}
+
\widehat\mu^{(0)}
+
\varepsilon_t^{\ast(b)},
\qquad
t=t^*+1,\ldots,T,
\]
keeping the observed initial values
$X_{t^*-k+1},\ldots,X_{t^*}$ fixed.

Recompute $\mathcal Z_{\text{Base},N}^{\ast(b)}$ exactly as in the original
sample, using the same $\mathbb G_N$, the same $h$, and the same
baseline estimates $(\widehat\alpha_0,\widehat\beta_0)$.
}

Set $c_{\text{Base},1-\tau}^{\ast}$ as the empirical $(1-\tau)$-quantile of
\[
\{\mathcal Z_{\text{Base},N}^{\ast(b)}\}_{b=1}^{\mathcal B}.
\]

Compute the bootstrap $p$-value:
\[
\widehat p_{\text{Base}}
=
\frac{1}{\mathcal B}\sum_{b=1}^{\mathcal B}
\mathbf 1\!\left\{
\mathcal Z_{\text{Base},N}^{\ast(b)}
\ge
\mathcal Z_{\text{Base},N}
\right\}.
\]

Reject $H^{\text{Base}}_0$ if
\[
\mathcal Z_{\text{Base},N}>c_{\text{Base},1-\tau}^{\ast}.
\]
\end{algorithm}

\begin{algorithm}[htbp]
\footnotesize
\caption{Bootstrap procedure for Test 2}
\label{alg:test2_bootstrap}
\DontPrintSemicolon
\SetKwInput{KwInput}{Input}
\SetKwInput{KwOutput}{Output}

\KwInput{
First-stage estimates $\{\widehat\Pi(g_m)\}_{m=1}^{M}$;
observed sample $\{X_t\}_{t=1}^{T}$;
grid $\mathbb G_N$;
second-stage bandwidth $h_{\text{smo}}$;
kernel $K_{\text{smo}}(\cdot)$;
trimming rule;
block length $\ell_{\text{smo}}$;
number of bootstrap replications $\mathcal B$.
}

\KwOutput{
Bootstrap critical value $c_{\text{smo},1-\tau}^{\ast}$ and
bootstrap $p$-value $\widehat p_{\text{smo}}$.
}

Construct a smooth cointegrating matrix path $\widetilde\Pi(\delta)$ over
$[\delta^\ast,1]$. In implementation, fit a global quadratic approximation to
each element of $\widehat\Pi(g_m)$:
\[
\widetilde\Pi_{ij}(\delta)
=
\widetilde d_{ij,0}
+
\widetilde d_{ij,1}\delta
+
\widetilde d_{ij,2}\delta^2,
\qquad
\delta\in[\delta^\ast,1].
\]

Evaluate $\widetilde\Pi(\delta)$ at all post-intervention dates
$t=t^*+1,\ldots,T$, where $\delta=t/T$.

Conditional on the smooth cointegrating matrix path, estimate the
post-intervention short-run parameters and deterministic term from
\[
\Delta X_t
=
\widetilde\Pi(\delta)X_{t-k}
+
\sum_{i=1}^{k-1}\widehat\Gamma_i^{(\text{smo})}\Delta X_{t-i}
+
\widehat\mu^{(\text{smo})}
+
\widehat\varepsilon_t^{(\text{smo})},
\qquad
t=t^*+1,\ldots,T .
\]

Centre the fitted residuals:
\[
\check\varepsilon_t^{(\text{smo})}
:=
\widehat\varepsilon_t^{(\text{smo})}
-
\frac{1}{N}\sum_{q=t^*+1}^{T}\widehat\varepsilon_q^{(\text{smo})},
\qquad
N=T-t^*.
\]

\For{$b=1,\ldots,\mathcal B$}{

Draw moving-block bootstrap residual blocks of length $\ell_{\text{smo}}$
from
\[
\{\check\varepsilon_t^{(\text{smo})}\}_{t=t^*+1}^{T},
\]
and concatenate them to obtain
\[
\{\varepsilon_t^{\ast(b)}\}_{t=t^*+1}^{T}.
\]

Generate the post-intervention pseudo-sample recursively from
\[
\Delta X_t^{\ast(b)}
=
\widetilde\Pi(\delta)X_{t-k}^{\ast(b)}
+
\sum_{i=1}^{k-1}\widehat\Gamma_i^{(\text{smo})}
\Delta X_{t-i}^{\ast(b)}
+
\widehat\mu^{(\text{smo})}
+
\varepsilon_t^{\ast(b)},
\qquad
t=t^*+1,\ldots,T,
\]
keeping the observed pre-intervention sample and the post-intervention initial
values $X_{t^*-k+1},\ldots,X_{t^*}$ fixed.

Rerun the full two-stage procedure on the bootstrap pseudo-sample. In
particular, recompute
\[
\widehat\Pi^\ast(g_m),
\qquad
\widehat\zeta^\ast(g_m)
=
\operatorname{vec}\!\big(\widehat\Pi^\ast(g_m)\big),
\qquad
m=1,\ldots,M.
\]

Recompute $Q_{\text{smo},P}^\ast(g_j)$,
$Q_{\text{smo},S}^\ast(g_j)$, and
$\mathcal Z_{\text{smo},N}^{\ast(b)}$ using the same
$\mathbb G_N$, $h_{\text{smo}}$, $K_{\text{smo}}(\cdot)$, and trimming
rule.
}

Set $c_{\text{smo},1-\tau}^{\ast}$ as the empirical $(1-\tau)$-quantile of
\[
\{\mathcal Z_{\text{smo},N}^{\ast(b)}\}_{b=1}^{\mathcal B}.
\]

Compute the bootstrap $p$-value:
\[
\widehat p_{\text{smo}}
=
\frac{1}{\mathcal B}\sum_{b=1}^{\mathcal B}
\mathbf 1\!\left\{
\mathcal Z_{\text{smo},N}^{\ast(b)}
\ge
\mathcal Z_{\text{smo},N}
\right\}.
\]

Reject $H^{\text{smo}}_0$ if
\[
\mathcal Z_{\text{smo},N}>c_{\text{smo},1-\tau}^{\ast}.
\]
\end{algorithm}

\end{spacing}
\end{document}